\newtheorem{lem}{Lemma}
\newtheorem{prop}{Proposition}
\newcounter{tempEquationCounter}
\newcounter{thisEquationNumber}
\newenvironment{floatEq}
{\setcounter{thisEquationNumber}{\value{equation}}\addtocounter{equation}{1}% record equation as happened and remember number
\begin{figure*}[!t]% float following equation across columns
\normalsize\setcounter{tempEquationCounter}{\value{equation}}% record current equation number in floated location
\setcounter{equation}{\value{thisEquationNumber}}% use previous equation number
}
{\setcounter{equation}{\value{tempEquationCounter}}% set back to equation number in floated location
\hrulefill\vspace*{4pt}% add a horizontal rule separator
\end{figure*}}% end float environment}
\begin{document}
%
% paper title
% can use linebreaks \\ within to get better formatting as desired
\title{Robust Secure ISAC: How RSMA and Active RIS Manage Eavesdropper's Spatial Uncertainty}

%\title{Robust Secure Active RIS-aided RSMA ISAC system under Eavesdropper's Spatial Uncertainty}

\author{A. Abdelaziz Salem, 
        Saeed Abdallah*,~\IEEEmembership{Member,~IEEE}, 
        Mohamed Saad,~\IEEEmembership{Senior Member,~IEEE},\\
        Khawla Alnajjar,~\IEEEmembership{Member,~IEEE},  
        and Mahmoud A. Albreem,~\IEEEmembership{Senior Member,~IEEE}
%%        % <-this % stops a space
%\thanks{Copyright \textcopyright 2015 IEEE. Personal use of this material is permitted. However, permission to use this material for any other purposes must be obtained from the IEEE by sending a request to pubs-permissions@ieee.org.}
\thanks{A. A. Salem is with the Smart Automation and Communication Technologies (SACT) Research Center,  Research Institute of Sciences and Engineering (RISE), University of Sharjah, PO Box 26666, Sharjah, UAE. A. A. Salem is also with the Department of Electronics and Electrical Communications Engineering, Faculty of Electronic Engineering, Menoufia University, Menouf 32952, Menoufia, Egypt, e-mails: \texttt{ahmed.abdalaziz40@el-eng.menofia.edu.eg}.}
\thanks{\noindent Mahmoud A. Albreem, Khawla Alnajjar and Saeed Abdallah are with the Department of Electrical Engineering, University of Sharjah, Sharjah, UAE, e-mail: \texttt{\{malbreem,kalnajjar,sabdallah\}@sharjah.ac.ae}.}
\thanks{\noindent Mohamed Saad is with the Department of Computer Engineering, University of Sharjah, Sharjah, UAE, e-mail: \texttt{msaad@sharjah.ac.ae}.}
\thanks{$^*$ Corresponding author.}
\thanks{This work was supported in part by the Distributed and Networked Systems (DNS) Research Group, associated with the SACT Research Center, RISE, Operating Grant number 150410, University of Sharjah.}
}

% The paper headers
%\markboth{Submitted to IEEE Transactions on Vehicular Technology}%
{}
%{Salem \MakeLowercase{\textit{et al.}}: Joint Optimization for Hybrid Access RIS-Assisted Vehicular Communications}

\maketitle

\begin{abstract}
Incorporating rate splitting multiple access (RSMA) into integrated sensing and communication (ISAC) presents a significant security challenge, particularly in scenarios where the location of a potential eavesdropper (Eve) is unidentified. Splitting users' messages into common and private streams exposes them to eavesdropping, with the common stream dedicated for sensing and accessible to multiple users. In response to this challenge, this paper proposes a novel approach that leverages active reconfigurable intelligent surface (RIS) aided beamforming and artificial noise (AN) to enhance the security of RSMA-enabled ISAC. Specifically, we first derive the ergodic private secrecy rate (EPSR) based on mathematical approximation of the average Eve channel gain. An optimization problem is then formulated to maximize the minimum EPSR, while satisfying the minimum required thresholds on ergodic common secrecy rate, radar sensing and RIS power budget. To address this non-convex problem, a novel optimization strategy is developed, whereby we alternatively optimize the transmit beamforming matrix for the common and private streams, rate splitting, AN, RIS reflection coefficient matrix, and radar receive beamformer. Successive convex approximation (SCA) and Majorization-Minimization (MM) are employed to convexify the beamforming and RIS sub-problems. Simulations  are conducted to showcase the effectiveness of the proposed framework against established benchmarks.
\end{abstract}

\begin{IEEEkeywords}
Active reconfigurable intelligent surfaces (ARIS), Integrated sensing and communication (ISAC), Rate splitting multiple access (RSMA), Robust physical layer security.

\end{IEEEkeywords}

\IEEEpeerreviewmaketitle

\section{Introduction}\label{sec1}
% importance of ISAC
As researchers develop the sixth generation (6G) of wireless networks, numerous promising applications like autonomous vehicles, connected robotics, and augmented reality are driving the integration of sensing and communication. By exploiting a fully shared spectrum and dual-functional beamforming, integrated sensing and communication (ISAC) systems can achieve significantly better spectral and energy efficiency \cite{zhang2021enabling}. Generally, the design concept of ISAC is classified into three main categories: communication-centric, radar-centric, and joint design. The communication-centric approach focuses on integrating sensing capabilities into existing communication platforms, utilizing tailored communication waveforms to meet sensing requirements. The radar-centric scheme incorporates communication functions into existing radar systems by embedding communication information into sensing waveforms for data transmission. The third category involves developing new ISAC waveforms instead of utilizing pre-existing communication or sensing signals. The latter approach offers the advantage of providing greater flexibility in simultaneously supporting data transmission and target sensing, resulting in a higher degree of freedom \cite{zhang2021overview}.

% why RIS 
 Although ISAC offers significant performance improvements, there remain several critical challenges that require attention. ISAC signals may potentially suffer from poor propagation, especially in the complex wireless environment where the availability of line-of-sight (LoS) paths becomes rare. The use of relays and unmanned aerial vehicles (UAVs) can potentially boost ISAC performance in such environments, but it also entails additional hardware costs and energy consumption. These challenges will be exacerbated in 6G systems, which envision ultra-dense deployment and data rates in the range of terabits per second (Tbps) \cite{you2021towards}. To tackle these issues, considerable attention has been given to the development of reconfigurable intelligent surfaces (RIS), also known as intelligent reflecting surfaces (IRS). The RIS is a two-dimensional array of small passive (or active) reflectors, controlled by a smart controller. Each reflector is capable of adjusting the phase shift and/or amplitude of incoming signals to enhance signal propagation in favorable directions. Due to its ability to establish strong virtual LoS links and manipulate propagation conditions with minimal power consumption and deployment cost, the RIS is viewed as a highly promising technology for assisting wireless communication, user localization, and environment sensing \cite{yu2023active} in 6G wireless systems.

% RIS-ISAC related work
Recently, RISs have been exploited to optimize spectral/energy efficiency \cite{yu2022ris} or outage performance \cite{guo2020outage}, transmit passive information \cite{yan2019passive}, optimize wireless power transfer \cite{yang2021reconfigurable}, and enhance physical-layer security against eavesdropping \cite{guan2020intelligent, dong2020enhancing}, etc. Moreover, RISs have been integrated into recent applications of millimeter-wave localization \cite{elzanaty2021reconfigurable} and radar systems \cite{buzzi2021radar} to mitigate user positioning errors or enhance target detection capabilities. Given the notable performance enhancements achieved by RISs in the communication and sensing domains, researchers have focused on exploring RIS-enabled ISAC systems. Initially, the work in \cite{jiang2021intelligent} employed a RIS to assist a single-user dual-function radar and communication (DFRC) system, where joint optimization of the RIS phase shifts and transmit beamforming was proposed. Subsequently, various works extended the RIS-enabled ISAC designs to multi-user multi-target scenarios \cite{tong2021joint, wang2021joint, luo2022joint}, ISAC-aided dual RISs \cite{he2022ris}, hybrid RIS models \cite{sankar2022beamforming}, discrete phase shift\cite{wang2021joint_discrete} models, and practical-sized targets \cite{xing2023joint}. Notably, most existing RIS-enabled ISAC designs aim to optimize either communication or sensing performance while ensuring the target performance requirements of the other. This allows for a trade-off between communication and sensing by adjusting performance thresholds color{red}within constraints.\color{black}

% RSMA - RIS - ISAC
While RISs have proven effective in enhancing ISAC performance across various applications, the challenges of spectrum scarcity and interference management continue to be significant bottlenecks impacting performance. In fact, the 4th generation (4G) and 5th generation (5G) wireless communication systems interfere  with radar systems in the S-band, C-band, and possibly the millimeter-wave (mmWave) band. This may lead to spectrum misuse and hinder the achievement of higher data rates in future wireless systems. Although ISAC provides a long-term solution via enabling spectrum sharing between communication and radar systems \cite{he2022energy, xiong2023fundamental}, resorting to orthogonal access schemes continues to constrain ISAC performance. To unleash the communication and sensing capabilities in ISAC systems, emerging technologies of rate splitting multiple access (RSMA) and
RIS can be integrated. 

RSMA is recognized as a robust interference management strategy in multi-antenna wireless networks, which offers substantial gains in spectral efficiency through its non-orthogonal transmission capabilities \cite{mishra2022rate}. The main concept of RSMA is to divide user messages into private and common components. The common components are encoded  into shared streams for multiple users, while the private components are independently encoded into individual streams for each corresponding user. This approach allows RSMA to partially decode interference and treat the remaining interference as noise, unlike space division multiple access (SDMA), which treats all interference as noise, and non-orthogonal multiple access (NOMA), which fully decodes interference \cite{mao2022rate}.

% RSMA-ISAC-RIS related work
In the context of RSMA-aided ISAC, the authors in \cite{xu2021rate} proposed weighted sum rate maximization and mean square error minimization of the approximated radar beampattern problem, where they jointly design message splits and precoders. This approach effectively manages interference from radar to communication without requiring an additional radar sequence. Motivated by designing waveforms for an RSMA-aided ISAC system, the work in \cite{yin2022rate} maximizes the minimum fairness rate and minimizes the Cram\'er--Rao bound for target estimation, while satisfying the power constraint of each transmitting antenna. Besides, RIS is deployed in a RSMA-assisted ISAC system in \cite{chen2023joint}, where the radar SNR is maximized by jointly optimizing the rate splitting and beamforming vectors at the BS and the RIS. 
% RSMA -RIS - ISAC- security 
While the above-mentioned studies investigate the effectiveness of RIS and RSMA in ISAC system, significant security challenges associated with rate splitting remain unresolved. Firstly, dividing users' messages into common and private components increases susceptibility to eavesdropping, since the common stream rate is intended for decoding by multiple users. This is in contrast to SDMA, where the lack of message splitting reduces the sum-rate but makes the system more resilient to eavesdropping. Furthermore, the presence of unidentified terminals with unknown positions as potential eavesdroppers (Eves) poses a significant challenge to ISAC systems. Therefore, there remains an open question regarding how to ensure secure RIS-aided ISAC while accounting for the potential target sensing in the presence of these unknown Eves.

In order to leverage the benefits of both RIS and RSMA, recent research has explored their integration from communication systems perspective, such as maximizing the energy efficiency \cite{niu2023active, narottama2024quantum}, improving the outage performance \cite{bansal2021rate}, and enhancing physical layer security (PLS) \cite{gao2022rate}. In the context of RIS-assisted secure ISAC, the work in \cite{hua2023secure} has regarded the targets as potential Eves and introduced a framework for the joint optimization of the communication, sensing, and secrecy rates, where the sensing beampattern gain is maximized. The work in \cite{xing2023reconfigurable} aims to maximize the approximate ergodic achievable secrecy rate, while simultaneously ensuring the minimum communication performance for the legitimate user and the minimum sensing performance for the target. However, none of the prior studies have investigated the integration of active RIS and RSMA to contend with the unknown location of Eve in the ISAC system. This gap in research serves as a key motivation for our work presented in this paper. Therefore, we propose a novel joint communication, sensing, and security optimization framework for an active RIS-enabled RSMA-ISAC system, wherein the exact position and the channel associated with the Eves are assumed unknown. Our contributions are summarized as follows.

\begin{itemize}
    \item We consider active RIS-enabled RSMA for a robust secure ISAC system, in which  an Eve is assumed to potentially appear in a certain region with an unknown position, and artificial noise (AN) is employed to enhance system security. A joint secure communication and sensing optimization problem is formulated to maximize the minimum ergodic private secrecy rate (MaxMin-EPSR) over all possible RIS-Eve channels, while ensuring the minimum requirement of ergodic common secrecy rate (ECSR), the minimum radar output SNR requirement for target sensing, and active RIS as well as BS power budget constraints. 
    \item The formulated problem is non-convex and intractable. To this end, the MaxMin-EPSR and ECSR are firstly approximated into a tractable form by deriving the mathematical expressions of the ergodic rate of the common and private streams at Eve based on the approximated mean of the Eve channel gain.
    \item Accordingly, we develop a novel convexification analysis based on the alternative optimization approach to jointly optimize the transmit beamforming, AN, active RIS reflection coefficients, common rate splitting, and radar receive beamforming. Specifically, the common rate splitting, transmit beamforming, and AN sub-problem is solved via successive convex approximation (SCA). Then, the active RIS reflection matrix and common rate splitting sub-problem is solved by exploiting Majorization-Minimization (MM) and SCA. Moreover, the radar receive beamformer sub-problem is categorized as a Rayleigh-quotient problem, resulting in a solution expressed in closed form.  
    \item Simulations are conducted to examine the effectiveness of the proposed framework against active RIS-enabled SDMA (ARIS-SDMA), passive RIS-enabled RSMA (RRIS-RSMA), and passive RIS-enabled SDMA (PRIS-SDMA). Our findings showcase the superiority of our suggested framework in terms of ergodic private secrecy rate (EPSR) and ECSR.
\end{itemize}

\begin{comment} 
% table for emphasizing our distinctive contributions 
\begin{table*}[t]
  \caption{Emphasizing our distinctive contributions compared to the most relevant state-of-the-art.}
  \centering
  \begin{tabular}{*{9}{c}}
    \hline
    ~ & Our paper & \cite{behdad2022power}-2022 & \cite{demirhan2023cell}-2023 & \cite{mao2023communication}-2023 & \cite{liu2023joint}-2023 & \cite{gao2023cooperative}-2023 & \cite{guo2023joint}-2023 & \cite{le2023ris}-2023  \\
    \hline
    Employing RIS  & \checkmark & ~ & ~ & ~ & ~ & ~ & \checkmark & \checkmark  \\
    \hline
     Cell-free MIMO & \checkmark & \checkmark & \checkmark & \checkmark & \checkmark & C-RAN & ~  & ~  \\
    \hline
     Full-duplex transmission & \checkmark & ~ & ~ & ~ & ~ & ~ & \checkmark  & \checkmark  \\
    \hline
     Joint sensing beamforming & \checkmark & ~ & ~ & ~  & \checkmark & ~ & ~  & ~  \\  
     \hline
    Single/Multi-User\&Target (S/M-U/T) & \ MU, ST & \  MU, ST &  MU, ST & MU, MT & MU, MT & MU, ST & MU, ST  & SU, ST   \\
    \hline
  \end{tabular}
  \label{contribution}
\end{table*}
\end{comment}

%\subsection{Paper Organization and Notation}
The subsequent sections of this paper are organized as follows. Section \ref{sec2} presents the system model and the formulation of the RSMA-based MaxMin EPSR problem. In Section \ref{sec3}, we provide a closed form approximation for the Eve channel, which is exploited to reformulate EPSR and ECSR in tractable form. Following this, Section \ref{sec4} showcases the performance evaluation of the proposed scheme compared to several benchmarks. Finally, concluding remarks are provided in Section \ref{sec5}.

\textit{Notations:} Vector notations are denoted by lowercase boldface letters, while uppercase boldface represents  matrices. Scalars are expressed by normal face letters. $\mathbb{C}^{M \times N}$ denotes the complex space with dimensions $M \times N$. Moreover, $\mathrm{Tr}(.)$, $\mathrm{diag}(.)$, $\mathrm{vec(.)}$, $(.)^*$, and $(.)^H$ stand for the trace operator, diagonal matrix, vectorization operator, conjugate, and conjugate transpose, respectively. The symbol $\otimes$ gives the Kronecker product. $\Re \left\{\cdot \right\}$ and $\Im \left\{ \cdot \right\}$ refer to the real and imaginary parts, respectively. ${{\mathbf{I}}_{M}}$ signifies the $M\times M$ identity matrix. Besides, $\mathbf{x}\sim\mathcal{C}\mathcal{N}\left( \pmb{\mu},{{\mathbf{R}}} \right)$ defines a circularly symmetric complex Gaussian random vector with mean $\pmb{\mu}$ and covariance matrix $\mathbf{R}$. $\left\| . \right\|_{F}$ and $\left\| . \right\|_2$ denote
the Frobenius norm and the norm of a vector, respectively.

\section{System Model and problem formulation}\label{sec2}
In this paper, we consider a RSMA-enabled active RIS for a robust secure ISAC system, where the base station (BS) is equipped with a uniform linear array (ULA) composed of $M$ antennas. In Fig.~\ref{sys_model}, an active RIS with $N$ elements is dedicated for sensing radar target and securing the $K$ single-antenna communication users against the Eve's spatial ambiguity. In practical terms, we assume that the Eve's position is not precisely known to the BS because the Eve is typically an uncooperative user who intercepts information passively. It is reasonable to assume that the Eve's location follows a uniform distribution within a region ${{\mathcal{R}}_{E}}\triangleq \left\{ {{d}_{RE}},{{\theta }_{E}} \right\}$, where ${{d}_{RE}}= \left[ {{d}_{1}},{{d}_{2}} \right]$ represents the range of the distances where the Eve can be located with respect to (w.r.t) the RIS, and ${{\theta }_{E}}= \left[ {{\theta }_{1}},{{\theta }_{2}} \right]$ gives the range of the azimuth angle-of-departures (AODs) of the Eve w.r.t the RIS. On the other hand, we assume that the channels between the BS and the RIS, as well as between the RIS and the user equipments (UEs), are known. Estimates of these channels can be obtained using existing techniques proposed for RIS-aided systems \cite{wei2021channel}.
Let ${{\varphi }_{n}}={{\beta }_{n}}{{e}^{j{{\xi }_{n}}}},n=1,...,N,$ be the reflection coefficient of the $n$-th active RIS element, where ${{\beta }_{n}}\le {{\beta }^{\max }}$ and ${{\xi }_{n}}\in \left[ 0,2\pi  \right)$ denote the corresponding amplitude and phase shift. Then, the RIS reflection coefficient matrix is expressed by $\mathbf{\Phi }=\mathrm{diag}\left( \pmb{\theta } \right)\in {{\mathbb{C}}^{N\times N}}$, where $\pmb{\theta }={{\left[ {{\varphi }_{1}},...,{{\varphi }_{N}} \right]}^{T}}\in {{\mathbb{C}}^{N\times 1}}$. 
\begin{figure}[t]
	\centering{\includegraphics[width=\columnwidth]{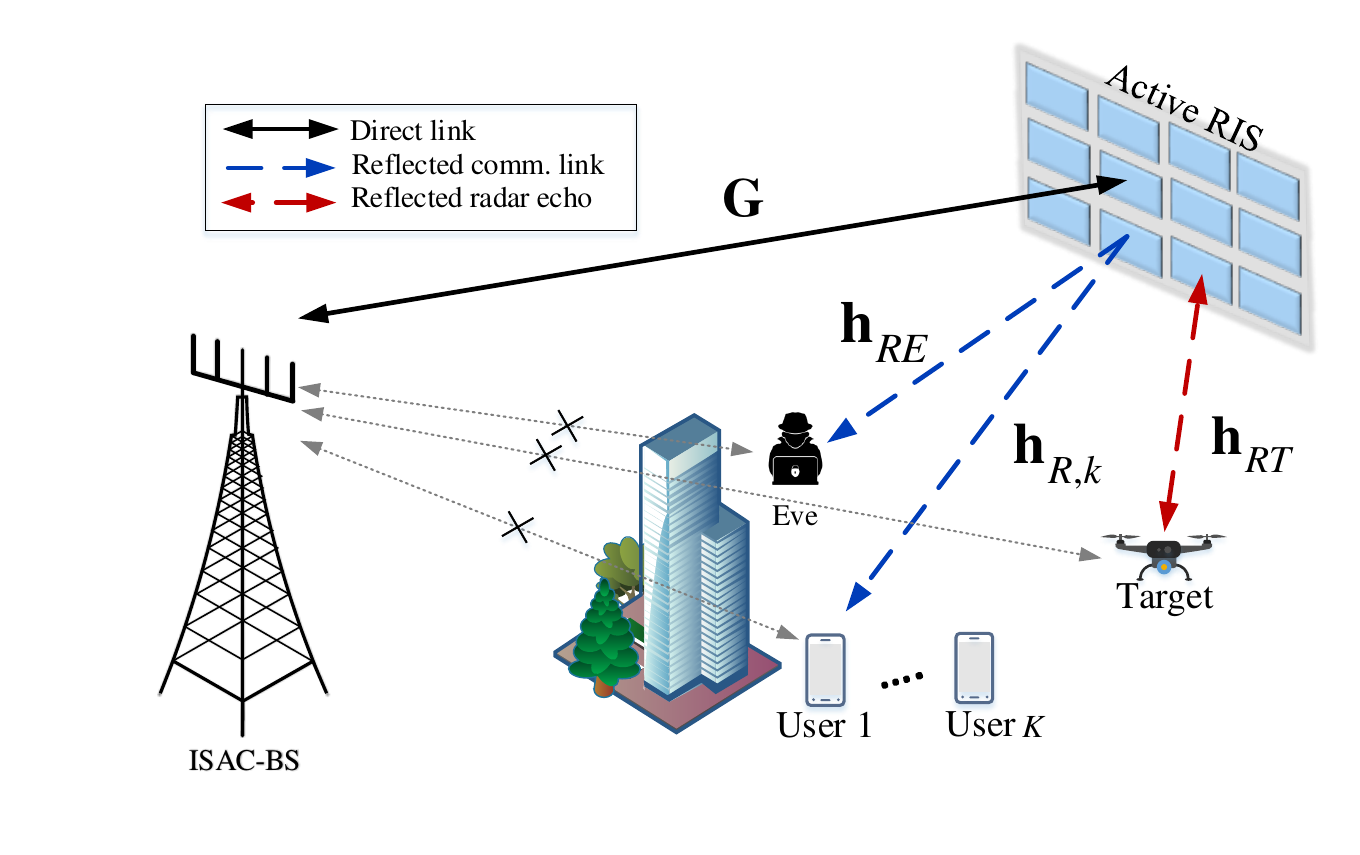}}
    \captionsetup{skip=3pt}
	\caption{Active RIS-enabled secure RSMA-ISAC system.\label{sys_model}}
\end{figure}

At the BS side, common and private streams are precoded along with the artificial noise\footnote{Dividing users' messages into common and private components increases susceptibility to eavesdropping, where the common stream is intended for decoding by multiple users. Thus, the transmitted signal is customized to guard against eavesdropping by integrating the AN.} (AN). In specific, the message ${{x}_{k}}$ of UE $k$ is split into common and private parts, denoted as ${{x}_{\mathrm{c},k}}$ and ${{x}_{\mathrm{p},k}}$, respectively. The common messages of the users, $\left\{ {{x}_{\mathrm{c},k}} \right\}_{k=1}^{K}$, are encoded into one single common stream ${{s}_{0}}$, whereas the private messages are separately encoded into the private streams $\left\{ {{s}_{k}} \right\}_{k=1}^{K}$. To this end, the transmitted signal is given as 
\begin{equation} \label{Trans_RSMA_AN}
    \mathbf{x}={{\mathbf{w}}_{0}}{{s}_{0}}+\sum\limits_{k=1}^{K}{{{\mathbf{w}}_{k}}{{s}_{k}}}+\mathbf{z}=\mathbf{Ws}+\mathbf{z},
\end{equation}
where the data streams and the transmit precoding matrix are denoted as $\mathbf{s}={{[{{s}_{0}},{{s}_{1}},...,{{s}_{K}}]}^{T}}\in {{\mathbb{C}}^{\left( K+1 \right)\times 1}}$ and $\mathbf{W}=[{{\mathbf{w}}_{0}},{{\mathbf{w}}_{1}},...,{{\mathbf{w}}_{K}}]\in {{\mathbb{C}}^{M\times \left( K+1 \right)}}$, respectively. Besides, the AN is denoted as $\mathbf{z}\in {{\mathbb{C}}^{M\times 1}}$, where $\mathbf{z}\sim\mathcal{C}\mathcal{N}\left( \mathbf{0},{{\mathbf{R}}} \right)$, where ${{\mathbf{R}}}\underline{\succ }\mathbf{0}$ represents the corresponding covariance matrix. At the UE side, successive interference cancellation (SIC) is employed to sequentially eliminate the common stream, allowing the users to extract their private streams. Furthermore, we assume that the direct paths of the BS-UE, BS-Eve, and BS-target are severely obstructed, so that the RIS steps in as a crucial player to shape alternative virtual LoS links. 

%The transmitted signal is reflected by the RIS toward the UE and the radar target. When the reflected signal interacts with the target, it scatters back to the BS via the RIS, creating an echo return being used for target detection. The echoes resulting from non-line-of-sight (NLoS) propagation in the direct BS-target-BS link are negligible due to the substantial power loss caused by double scattering \cite{xing2023joint}. Moreover, the BS effectively cancels self-interference to prevent the transmitted signal from disturbing the received echo \cite{jiang2021intelligent}.  

\subsection{Security and communication model} \label{sec2-1}
The received signals by $k$-th user and Eve are respectively expressed as 
\begin{align}
{{y}_{k}}=&\mathbf{h}_{R,k}^{H}{{\mathbf{\Phi }}^{H}}\mathbf{G}\mathbf{x}+ \mathbf{h}_{R,k}^{H}{{\mathbf{\Phi }}^{H}}{{\mathbf{n}}_{R}}+{{n}_{k}}, \label{Rx_userk} \\ 
{{y}_{E}}=&\mathbf{h}_{RE}^{H}{{\mathbf{\Phi }}^{H}}\mathbf{G}\mathbf{x}+ \mathbf{h}_{RE}^{H}{{\mathbf{\Phi }}^{H}}{{\mathbf{n}}_{R}}+{{n}_{E}}, \label{Rx_Eve} 
\end{align}
where ${{n}_{k}}\sim\mathcal{C}\mathcal{N}\left( 0,\sigma _{k}^{2} \right)$, ${{n}_{E}}\sim\mathcal{C}\mathcal{N}\left( 0,\sigma _{E}^{2} \right)$, ${{\mathbf{n}}_{R}}\sim\mathcal{C}\mathcal{N}\left( \mathbf{0},\sigma _{R}^{2}{{\mathbf{I}}_{N}} \right)$ represent the additive white Gaussian noise (AWGN) at the $k$-th user, the Eve and the RIS amplification noise, respectively. Furthermore,  $\left\{ {{\mathbf{h}}_{R,k}},{{\mathbf{h}}_{RE}} \right\}\in {{\mathbb{C}}^{N\times 1}}$ and $\mathbf{G}\in {{\mathbb{C}}^{N\times M}}$ represent the links of RIS-user$k$, RIS-Eve, and BS-RIS, respectively. Without loss of generality, the channel coefficients are assumed Rician distributed as in \cite{wu2019intelligent}.  
\begin{comment}
Without loss of generality, the channel coefficients between two points $a$ and $b$ are assumed Rician distributed as \cite{wu2019intelligent, hong2020artificial}   
\begin{equation*}
{{\mathbf{h}}_{a,b}}=\sqrt{{{\chi }_{a,b}}}\left( \sqrt{\frac{\kappa }{1+\kappa }}\mathbf{h}_{a,b}^{\text{LOS}}+\sqrt{\frac{1}{1+\kappa }}\mathbf{h}_{a,b}^{\text{NLOS}} \right),
\end{equation*}
where ${{\chi }_{a,b}}=\ell {{\left( \frac{{{d}_{a,b}}}{{{d}_{0}}} \right)}^{-{{\alpha }_{a,b}}}}$, $\kappa $ and $\ell $ denote the Rician factor and path-loss coefficient at reference distance. ${{\alpha }_{a,b}}$ is the path-loss exponents with the corresponding distances ${{d}_{a,b}}$.  Besides, $\mathbf{h}_{a,b}^{\text{NLOS}}$ refers to non-line-of-sight Rayleigh coefficients, which are complex Gaussian distributed as $\mathcal{C}\mathcal{N}\left( 0,1 \right)$. The line of sight part   $\mathbf{h}_{a,b}^{\text{LOS}}$ is given by $\mathbf{h}_{a,b}^{\text{LOS}}=\mathbf{a}\left( {{\theta }_{b}} \right){{\mathbf{a}}^{H}}\left( {{\theta }_{a}} \right)$. The array steering vector of $L$ elements is denoted by $\mathbf{a}\left( {{\theta }_{j}} \right)=\left[ 1,{{e}^{j\frac{2\pi d}{\lambda }\sin {{\theta }_{j}}}},...,{{e}^{j\frac{2\pi d}{\lambda }\left( L-1 \right)\sin {{\theta }_{j}}}} \right],$ $j\in \left\{ a, b \right\}$, in which $d,\lambda ,$ and ${{\theta }_{j}}$ denote element inter-distance, wavelength, and azimuth AODs/AoAs.  
\end{comment}
The cascaded channel is denoted as $\mathbf{h}_{i}^{H}{{\mathbf{\Phi }}^{H}}\mathbf{G}={{\pmb{\theta }}^{H}} \mathrm{diag}\left( \mathbf{h}_{i}^{H} \right)\mathbf{G}$. Thus, the decoding Signal-to-Interference-plus-Noise Ratio (SINR) of common stream at the  $k$-th UE and Eve are expressed as 
\begin{equation} \label{com_SINRs_eqs}
    {\gamma }_{{{s}_{0}},k} =\frac{{{\left| {{\pmb{\theta }}^{H}}{{{\mathbf{\bar{H}}}}_{k}}{{\mathbf{w}}_{0}} \right|}^{2}}}{\mathcal{I}_{s_0,k} \left( \mathbf{W}, \mathbf{\Phi}, \mathbf{z} \right)},  \quad 
    {\gamma }_{{{s}_{0}},E}=\frac{{{\left| {{\pmb{\theta }}^{H}}{{{\mathbf{\bar{H}}}}_{E}}{{\mathbf{w}}_{0}} \right|}^{2}}}{\mathcal{I}_{s_0,E} \left( \mathbf{W}, \mathbf{\Phi}, \mathbf{z} \right)}, 
\end{equation}
where the corresponding interference power on the received common stream is expressed  by $\mathcal{I}_{s_0,k} = \sum\limits_{{k}'=1}^{K}{{{\left| {{\pmb{\theta }}^{H}}{{{\mathbf{\bar{H}}}}_{k}}{{\mathbf{w}}_{{{k}'}}} \right|}^{2}}}+{{\left| {{\pmb{\theta }}^{H}}{{{\mathbf{\bar{H}}}}_{k}}{{\mathbf{z}}} \right|}^{2}}{{\bar{\sigma }}_{R,k}}$ with  ${{\bar{\sigma }}_{R,k}} = \sigma _{R}^{2}{{\left\| {{\pmb{\theta }}^{H}}\mathrm{diag}\left( \mathbf{\bar{h}}_{R,k}^{H} \right) \right\|}^{2}}+1$, and $\mathcal{I}_{s_0,E} = \sum\limits_{k=1}^{K}{{{\left| {{\pmb{\theta }}^{H}}{{{\mathbf{\bar{H}}}}_{E}}{{\mathbf{w}}_{k}} \right|}^{2}}}+{{\left| {{\pmb{\theta }}^{H}}{{{\mathbf{\bar{H}}}}_{E}}{{\mathbf{z}}_{\mathrm{AN}}} \right|}^{2}}+{{\bar{\sigma }}_{E}}$ with ${{\bar{\sigma }}_{E}}=1+ \sigma _{R}^{2}{{\left\| {{\left( {{\pmb{\theta }}^{\left( t \right)}} \right)}^{H}}\mathrm{diag}\left( \mathbf{\bar{h}}_{RE}^{H} \right) \right\|}^{2}}$, respectively.  Moreover, we define  in which ${{\mathbf{\bar{H}}}_{k}}=\sigma _{k}^{-1}\text{diag}\left( \mathbf{h}_{R,k}^{H} \right)\mathbf{G}, \mathbf{\bar{h}}_{R,k}^{H}=\sigma _{k}^{-1}\mathbf{h}_{R,k}^{H}, {{\mathbf{\bar{H}}}_{E}}=\sigma _{E}^{-1}\text{diag}\left( \mathbf{h}_{RE}^{H} \right)\mathbf{G}$, and $\mathbf{\bar{h}}_{RE}^{H}=\sigma _{E}^{-1}\mathbf{h}_{RE}^{H}$. Then, the achievable common rate rates are given by ${{R}_{{{s}_{0}},k}}={{\log }_{2}}\left( 1+{{\gamma }_{{{s}_{0}},k}} \right)$ and ${{R}_{{{s}_{0}},E}}={{\log }_{2}}\left( 1+{{\gamma }_{{{s}_{0}},E}} \right)$, respectively. To protect the common stream for multiple users and prevent Eve from decoding it, the following constraint must be met. 
\begin{equation} \label{ergodic_com_rate}
     \underset{k}{\textrm{Min}}\hspace{0.05cm}{{R}_{{{s}_{0}},k}}-{{\mathbb{E}}_{{{\mathbf{h}}_{RE}}}}\left\{ {{R}_{{{s}_{0}},E}} \right\}  \ge \sum\limits_{k=1}^{K}{{{r}_{k}}},
\end{equation}
where ${{r}_{k}}\ge 0$ represents the partial common rate allocated to UE $k$.

Following that, the common stream is subtracted from the received signal to decode the private streams. Thus, the achievable SINRs of the $k$-th private stream at the UE and the Eve are respectively denoted as 
\begin{equation} \label{Priv_SINRs_eqs} 
    {{\gamma }_{k}}=\frac{{{\left| {{\pmb{\theta }}^{H}}{{{\mathbf{\bar{H}}}}_{k}}{{\mathbf{w}}_{k}} \right|}^{2}}}{\mathcal{I}_{k} \left( \mathbf{W}, \mathbf{\Phi}, \mathbf{z} \right)}, \quad
	{{\gamma }_{k,E}}=\frac{{{\left| {{\pmb{\theta }}^{H}}{{{\mathbf{\bar{H}}}}_{E}}{{\mathbf{w}}_{k}} \right|}^{2}}}{\mathcal{I}_{k,E} \left( \mathbf{W}, \mathbf{\Phi}, \mathbf{z} \right)},  
\end{equation}
where $\mathcal{I}_{k} =\sum\limits_{i=1,i\ne k}^{K}{{{\left| {{\pmb{\theta }}^{H}}{{{\mathbf{\bar{H}}}}_{k}}{{\mathbf{w}}_{i}} \right|}^{2}}}+{{\left| {{\pmb{\theta }}^{H}}{{{\mathbf{\bar{H}}}}_{k}}\mathbf{z} \right|}^{2}}+{{\bar{\sigma }}_{R,k}}$ and  $\mathcal{I}_{k,E} = {{\left| {{\pmb{\theta }}^{H}}{{{\mathbf{\bar{H}}}}_{E}}{{\mathbf{w}}_{0}} \right|}^{2}}+\sum\limits_{i=1,i\ne k}^{K}{{{\left| {{\pmb{\theta }}^{H}}{{{\mathbf{\bar{H}}}}_{E}}{{\mathbf{w}}_{i}} \right|}^{2}}}+{{\left| {{\pmb{\theta }}^{H}}{{{\mathbf{\bar{H}}}}_{E}}\mathbf{z} \right|}^{2}}+{{\bar{\sigma }}_{E}}$.
\begin{comment}
\begin{align*}   
    \mathcal{I}_{k} =&\sum\limits_{i=1,i\ne k}^{K}{{{\left| {{\pmb{\theta }}^{H}}{{{\mathbf{\bar{H}}}}_{k}}{{\mathbf{w}}_{i}} \right|}^{2}}}+{{\left| {{\pmb{\theta }}^{H}}{{{\mathbf{\bar{H}}}}_{k}}\mathbf{z} \right|}^{2}}\\
    &+\sigma _{R}^{2}{{\left\| {{\pmb{\theta }}^{H}}\mathrm{diag}\left( \mathbf{\bar{h}}_{R,k}^{H} \right) \right\|}^{2}}+1, \\ 
    \mathcal{I}_{k,E} =& {{\left| {{\pmb{\theta }}^{H}}{{{\mathbf{\bar{H}}}}_{E}}{{\mathbf{w}}_{0}} \right|}^{2}}+\sum\limits_{i=1,i\ne k}^{K}{{{\left| {{\pmb{\theta }}^{H}}{{{\mathbf{\bar{H}}}}_{E}}{{\mathbf{w}}_{i}} \right|}^{2}}}\\
    &+{{\left| {{\pmb{\theta }}^{H}}{{{\mathbf{\bar{H}}}}_{E}}\mathbf{z} \right|}^{2}}+\sigma _{R}^{2}{{\left\| {{\pmb{\theta }}^{H}}\mathrm{diag}\left( \mathbf{\bar{h}}_{RE}^{H} \right) \right\|}^{2}}+1.
\end{align*}
\end{comment}
According to (\ref{Priv_SINRs_eqs}), the achievable $k$-th private rates at user $k$ and the Eve are given as ${{R}_{k}}={{\log }_{2}}\left( 1+{{\gamma }_{k}} \right)$ and ${{R}_{k,E}}={{\log }_{2}}\left( 1+{{\gamma }_{k,E}} \right)$, respectively. Moreover, the EPSR of stream $k$ can be expressed as $\mathrm{SR}_k={{r}_{k}}+{{\left[ {{R}_{k}}-{{\mathbb{E}}_{{{\mathbf{h}}_{RE}}}}\left\{ {{R}_{k,E}} \right\} \right]}^{+}}$, where ${{\left[ x \right]}^{+}}=\max \left( 0,x \right)$.

\subsection{Sensing model} \label{sec2-2}
In addition to providing security, the signal $\mathbf{x}$ is also employed for target detection. The received echo signal at the BS is mathematically expressed as
\begin{equation} \label{RX_radar}
    \mathbf{y}=\varsigma {{\mathbf{G}}^{H}}\mathbf{\Phi }{{\mathbf{H}}_{RT}} \mathbf{\Phi} \left( \mathbf{ Gx} + {{\mathbf{n}}_{R}} \right)  + {{\mathbf{G}}^{H}}\mathbf{\Phi }{{\mathbf{\bar{n}}}_{R}}+\mathbf{n},
\end{equation}
where $\mathbf{n}\sim\mathcal{C}\mathcal{N}\left( 0,{{\sigma }^{2}}{{\mathbf{I}}_{M}} \right)$ and $\varsigma $, with $\mathbb{E}\left\{ {{\left| \varsigma  \right|}^{2}} \right\}={{\zeta }^{2}}$, denotes the AWGN at the radar receiver and the radar cross section (RCS). Moreover, ${{\mathbf{\bar{n}}}_{R}}\sim\mathcal{C}\mathcal{N}\left( \mathbf{0},\sigma _{R}^{2}{{\mathbf{I}}_{N}} \right)$ represents the RIS dynamic noise, which is inserted back with the radar echo. Also,    ${{\mathbf{{n}}}_{R}}$ denotes the forward active RIS noise towards the target, which is independent and identically distributed (i.i.d) with  ${{\mathbf{\bar{n}}}_{R}}$. Besides, ${{\mathbf{h}}_{RT}}\in {{\mathbb{C}}^{N\times 1}}$ and ${{\mathbf{H}}_{RT}} \in {{\mathbb{C}}^{N\times N}}$ refer to RIS-target channel and the target response, which is modelled as \cite{song2024cramer}.
%${{\mathbf{H}}_{RT}}=\mathbf{a}\left( {{\theta }_{R}} \right){{\mathbf{a}}^{H}}\left( {{\theta }_{R}} \right)$, with ${{\theta }_{R}}$ represents the azimuth AoDs from RIS to the target. Moreover, the array steering vector at the RIS is denoted by $\mathbf{a}\left( {{\theta }_{R}} \right)=\left[ 1,{{e}^{j\frac{2\pi d}{\lambda }\sin {{\theta }_{R}}}},...,{{e}^{j\frac{2\pi d}{\lambda }\left( N-1 \right)\sin {{\theta }_{R}}}} \right]$, where $d$ and $\lambda$ denote element inter-distance and the wavelength, respectively.

Then, the received echo (\ref{RX_radar}) is post-processed with the receive filter $\mathbf{u}\in {{\mathbb{C}}^{M\times 1}}$. Given that the BS transmits a distinct pattern of AN, it is logical to infer that BS can detect changes in AN caused by target interaction, rendering it useful for target sensing. Consequently, the radar output SNR can be calculated as follows.
\begin{equation} \label{Radar_outSNR}
    \gamma =\frac{{{\zeta }^{2}}{{\left\| {{\mathbf{u}}^{H}}{{\mathbf{H}}_{T}}\mathbf{W} \right\|}^{2}}+{{\zeta }^{2}}{{\left| {{\mathbf{u}}^{H}}{{\mathbf{H}}_{T}}\mathbf{z} \right|}^{2}}}{{{{\bar{\sigma }}}_{R}}},
\end{equation}
where ${{\bar{\sigma }}_{R}}={{\zeta }^{2}}\sigma _{R}^{2}{{\left\| {{\mathbf{u}}^{H}}{{\mathbf{H}}_{0}} \right\|}^{2}}+\sigma _{R}^{2}{{\left\| {{\mathbf{u}}^{H}}{{\mathbf{H}}_{1}} \right\|}^{2}}+{{\sigma }^{2}}{{\left\| {{\mathbf{u}}^{H}} \right\|}^{2}}$, ${{\mathbf{H}}_{T}}={{\mathbf{G}}^{H}}\mathbf{\Phi }{{\mathbf{H}}_{RT}}\mathbf{\Phi G}$, ${{\mathbf{H}}_{0}}={{\mathbf{G}}^{H}}\mathbf{\Phi }{{\mathbf{H}}_{RT}}\mathbf{\Phi }$, and ${{\mathbf{H}}_{1}}={{\mathbf{G}}^{H}}\mathbf{\Phi }$.

\subsection{Problem formulation} \label{sec2-3}
In this section, we aim to optimize the communication, security and sensing for the active RIS-aided RSMA ISAC system in the presence of spatial ambiguity regarding the location of the Eve, i.e., unknown $\mathbf{h}_{RE}$. Therefore, the optimization problem is formulated to maximize the minimal EPSR, while ensuring that the common stream is undetected by the Eve, meeting the sensing performance requirements, and considering the power budget at the BS and active RIS. In this context, the common rate allocation $\mathbf{r} \triangleq \left\{r_1,...,r_K\right\}$, AN $\mathbf{z}$, transmit precoding matrix $\mathbf{W}$, active RIS reflection coefficients $\mathbf{\Phi }$, and radar receive filter $\mathbf{u}$ are jointly designed. Mathematically, the overall optimization problem is formulated as follows,
\begin{subequations}\label{opt_problem_orginal}
	\begin{align}
		\underset{\textbf{r}\ge\textbf{0},\textbf{u},{\textbf{z}},\pmb{\Phi },\textbf{W}}{\textrm{Maximize}}\hspace{0.15cm} & \underset{k}{\textrm{Min}}\hspace{0.15cm} {{r}_{k}}+{{\left[ {{R}_{k}}-{{\mathbb{E}}_{{{\mathbf{h}}_{RE}}}}\left\{ {{R}_{k,E}} \right\} \right]}^{+}}, \label{opt1_obj} \\
		\textrm{s.t.: } 
		& {{R}_{{{s}_{0}},k}}-{{\mathbb{E}}_{{{\mathbf{h}}_{RE}}}}\left\{ {{R}_{{{s}_{0}},E}} \right\}\ge \sum\limits_{k=1}^{K}{{{r}_{k}}}, \forall k \label{opt1_c1},  \\
		& \gamma \ge {{\Gamma }_{r}} \label{opt1_c3}, \\
  	& \sum\limits_{i=0}^{K}{{{\left\| {{\mathbf{w}}_{i}} \right\|}^{2}}}+{{\left\| \mathbf{z} \right\|}^{2}}\le {{P}^{\mathrm{BS}}} \label{opt1_c4}, \\
		& \mathcal{P}\left( \mathbf{W},\mathbf{\Phi },\mathbf{z} \right)\le {{P}^{\mathrm{RIS}}} \label{opt1_c5}, \\
        & {{\beta }_{n}}\le {{\beta }^{\max }},\quad \forall n\label{opt1_c6},
		\end{align}
\end{subequations}
where $\mathcal{P} = {{\zeta }^{2}}\left\| \mathbf{\Phi }{{\mathbf{H}}_{RT}}\mathbf{\Phi GW} \right\|_{F}^{2} + \left\| \mathbf{\Phi GW} \right\|_{F}^{2} + \left\| \mathbf{\Phi G}\mathbf{z} \right\|_{2}^{2} + {{\zeta }^{2}}\left\| \mathbf{\Phi }{{\mathbf{H}}_{RT}}\mathbf{\Phi G}\mathbf{z} \right\|_{2}^{2} + 2\sigma _{R}^{2}\left\| \mathbf{\Phi } \right\|_{F}^{2} + {{\zeta }^{2}}\sigma _{R}^{2}\left\| \mathbf{\Phi }{{\mathbf{H}}_{RT}}\mathbf{\Phi } \right\|_{F}^{2}$.
The constraint (\ref{opt1_c1}) ensures the security of the ergodic common rate. (\ref{opt1_c3}) defines the radar sensing requirement, while (\ref{opt1_c4}) and (\ref{opt1_c5}) limit the power consumption at the BS and the active RIS, respectively. In addition, (\ref{opt1_c6}) controls the limit on the amplification coefficients of the active RIS elements. Unfortunately, the optimization problem in (\ref{opt_problem_orginal}) poses significant challenges due to its highly non-convex nature and the coupling among multiple optimization variables. Moreover, the presence of the expectation operator over logarithmic functions in (\ref{opt1_obj}) and (\ref{opt1_c1}) introduces further complexity, where incorporating the unknown Eve channel results in an intractable formulation.

\section{Proposed solution}\label{sec3}
To tackle the aforementioned challenges, we derive closed-form expressions for the ergodic rate terms ${{\mathbb{E}}_{{{\mathbf{h}}_{RE}}}}\left\{ {{R}_{k,E}} \right\}$ and ${{\mathbb{E}}_{{{\mathbf{h}}_{RE}}}}\left\{ {{R}_{{{s}_{0}},E}} \right\}$. Accordingly, we recast (\ref{opt_problem_orginal}) into convex sub-problems by adopting the MM and SCA methods, along with matrix inversion theory and matrix fractional approximation \cite{dong2020secure} to transform (\ref{opt1_obj}), (\ref{opt1_c1}), and (\ref{opt1_c3}) into quadratic forms. Then, an alternating optimization approach is invoked to solve the resultant problems.  

Before delving into the problem solution, we determine ${{\widehat{\mathbf{H}}}_{RE}} \triangleq \mathbb{E}\left\{ {{\mathbf{h}}_{RE}}\mathbf{h}_{RE}^{H} \right\}$ to facilitate deriving the ergodic rate expressions through the following proposition.  
\begin{prop} \label{prop1}
    Given that the Eve's location is uniformly distributed within region $\mathcal{R}_E$ and the RIS-Eve channel coefficients follow the Rician distribution, ${{\widehat{\mathbf{H}}}_{RE}}$ can be approximated as    
    \begin{multline} \label{eve_esti_CH_ver3}
        {{\widehat{\mathbf{H}}}_{RE}} = {{\left( \frac{1}{{{d}_{0}}} \right)}^{-{{\alpha }_{RE}}}}\frac{2\ell \left( d_{1}^{2-{{\alpha }_{RE}}}-d_{2}^{2-{{\alpha }_{RE}}} \right)}{\left( 1+\kappa  \right)\left( d_{2}^{2}-d_{1}^{2} \right)\left( {{\alpha }_{RE}}-2 \right)} \\
        \times \Biggl( \frac{\Delta \theta_{E}\kappa}{2(\theta_{2}-\theta_{1})} \sum_{i=0}^{N_{\theta_{E}}-1} \mathbf{A}(\theta_{1}+i\Delta \theta_{E}) \\
         + \mathbf{A}(\theta_{1}+(i+1)\Delta \theta_{E}) + \mathbf{I}_{N} \Biggr),
    \end{multline}
where $\mathbf{A}\left( {{\theta }_{E}} \right)=\mathbf{a}\left( {{\theta }_{E}} \right){{\mathbf{a}}^{H}}\left( {{\theta }_{E}} \right)$.  
\end{prop}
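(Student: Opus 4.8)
The plan is to compute $\widehat{\mathbf{H}}_{RE}=\mathbb{E}\{\mathbf{h}_{RE}\mathbf{h}_{RE}^{H}\}$ by splitting the expectation through the law of total expectation: first average over the small-scale Rician fading conditioned on the Eve's location, then average over the location distribution. First I would substitute the assumed Rician model $\mathbf{h}_{RE}=\sqrt{\chi_{RE}}\bigl(\sqrt{\tfrac{\kappa}{1+\kappa}}\,\mathbf{a}(\theta_{E})+\sqrt{\tfrac{1}{1+\kappa}}\,\mathbf{h}^{\mathrm{NLOS}}\bigr)$ with $\mathbf{h}^{\mathrm{NLOS}}\sim\mathcal{CN}(\mathbf{0},\mathbf{I}_{N})$, form the outer product, and take the expectation over the NLOS term for a fixed location. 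The two cross terms vanish because $\mathbb{E}\{\mathbf{h}^{\mathrm{NLOS}}\}=\mathbf{0}$, while the self-term uses $\mathbb{E}\{\mathbf{h}^{\mathrm{NLOS}}(\mathbf{h}^{\mathrm{NLOS}})^{H}\}=\mathbf{I}_{N}$, leaving the conditional second moment $\chi_{RE}\bigl(\tfrac{\kappa}{1+\kappa}\mathbf{A}(\theta_{E})+\tfrac{1}{1+\kappa}\mathbf{I}_{N}\bigr)$ with $\mathbf{A}(\theta_{E})=\mathbf{a}(\theta_{E})\mathbf{a}^{H}(\theta_{E})$.

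Next I would exploit the separability induced by the uniform spatial prior over the annular sector $\mathcal{R}_{E}$: in polar coordinates the joint density factors as $f(r,\theta)=\tfrac{2r}{(d_{2}^{2}-d_{1}^{2})(\theta_{2}-\theta_{1})}$, so distance and angle are independent. Since $\chi_{RE}=\ell(r/d_{0})^{-\alpha_{RE}}$ depends only on $r$ and $\mathbf{A}(\theta_{E})$ only on $\theta$, the location expectation factorizes into a scalar radial moment $\mathbb{E}_{r}\{\chi_{RE}\}$ multiplying the angular matrix $\tfrac{1}{1+\kappa}\bigl(\kappa\,\mathbb{E}_{\theta}\{\mathbf{A}(\theta)\}+\mathbf{I}_{N}\bigr)$. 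The radial moment is evaluated in closed form as $\mathbb{E}_{r}\{\chi_{RE}\}=\ell(1/d_{0})^{-\alpha_{RE}}\tfrac{2}{d_{2}^{2}-d_{1}^{2}}\int_{d_{1}}^{d_{2}}r^{1-\alpha_{RE}}\,dr$, and the power-law integral equals $\tfrac{d_{1}^{2-\alpha_{RE}}-d_{2}^{2-\alpha_{RE}}}{\alpha_{RE}-2}$, which reproduces exactly the scalar prefactor appearing in \eqref{eve_esti_CH_ver3}.

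The main obstacle will be the angular expectation $\mathbb{E}_{\theta}\{\mathbf{A}(\theta)\}=\tfrac{1}{\theta_{2}-\theta_{1}}\int_{\theta_{1}}^{\theta_{2}}\mathbf{A}(\theta)\,d\theta$: its $(m,n)$ entry is $\tfrac{1}{\theta_{2}-\theta_{1}}\int_{\theta_{1}}^{\theta_{2}}e^{j\frac{2\pi d}{\lambda}(m-n)\sin\theta}\,d\theta$, which admits no elementary closed form (it reduces to Anger/Bessel-type integrals). This is precisely the step that turns the result into an approximation rather than an exact identity. To handle it, I would partition $[\theta_{1},\theta_{2}]$ into $N_{\theta_{E}}$ subintervals of width $\Delta\theta_{E}=(\theta_{2}-\theta_{1})/N_{\theta_{E}}$ and apply the composite trapezoidal rule, giving $\int_{\theta_{1}}^{\theta_{2}}\mathbf{A}(\theta)\,d\theta\approx\tfrac{\Delta\theta_{E}}{2}\sum_{i=0}^{N_{\theta_{E}}-1}\bigl[\mathbf{A}(\theta_{1}+i\Delta\theta_{E})+\mathbf{A}(\theta_{1}+(i+1)\Delta\theta_{E})\bigr]$.

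Finally I would substitute the trapezoidal approximation into the factored expression, multiply by the radial prefactor, and collect the common $(1+\kappa)^{-1}$ factor, which assembles the scalar coefficient and the bracketed sum-plus-identity matrix exactly as displayed in \eqref{eve_esti_CH_ver3}. The only nontrivial verifications are the sign bookkeeping in the radial integral (rewriting $d_{2}^{2-\alpha_{RE}}-d_{1}^{2-\alpha_{RE}}$ over $2-\alpha_{RE}$ as $d_{1}^{2-\alpha_{RE}}-d_{2}^{2-\alpha_{RE}}$ over $\alpha_{RE}-2$) and confirming that the $\kappa$ inside the bracket pairs with the LOS term while the $\mathbf{I}_{N}$ carries the unit coefficient from the NLOS term.
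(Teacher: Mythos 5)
Your proposal is correct and follows essentially the same route as the paper's Appendix~A: expand the Rician outer product (LOS term plus $\mathbf{I}_N$ from the NLOS part), factor the location expectation through the uniform polar density $f(d_{RE},\theta_E)$ with marginals $f(d_{RE})=\tfrac{2d_{RE}}{d_2^2-d_1^2}$ and $f(\theta_E)=\tfrac{1}{\theta_2-\theta_1}$, evaluate the radial power-law integral in closed form, and handle the angular integral of $\mathbf{A}(\theta_E)$ with the composite trapezoidal rule. The only difference is cosmetic: you make explicit the vanishing of the LOS--NLOS cross terms and the radial/angular independence, which the paper leaves implicit.
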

\begin{proof}
	See Appendix \ref{appendix1}.
\end{proof}

\subsection{Transmit beamforming, AN, and common rate optimization} \label{sec3-1}
Here, we consider that the values of $\mathbf{\Phi }$ and $\mathbf{u}$ are fixed. Therefore, the optimization problem (\ref{opt_problem_orginal}) is reformulated as 
\begin{subequations}\label{BF1_opt}
	\begin{align}
		\underset{\textbf{r}\ge\textbf{0},{\textbf{z}},\textbf{W}}{\textrm{Maximize}}\hspace{0.15cm} & \tau, \label{BF1_obj} \\
		\textrm{s.t.: } 
        & {{r}_{k}}+ {{R}_{k}}-{{\mathbb{E}}_{{{\mathbf{h}}_{RE}}}}\left\{ {{R}_{k,E}} \right\} \ge \tau , \quad \forall k \label{BF1_c1},  \\
		& {{R}_{{{s}_{0}},k}}-{{\mathbb{E}}_{{{\mathbf{h}}_{RE}}}}\left\{ {{R}_{{{s}_{0}},E}} \right\}\ge \sum\limits_{k=1}^{K}{{{r}_{k}}}, \quad \forall k \label{BF1_c2},  \\
		& \gamma \ge {{\Gamma }_{r}} \label{BF1_c4}, \\
  	& \sum\limits_{i=0}^{K}{{{\left\| {{\mathbf{w}}_{i}} \right\|}^{2}}}+{{\left\| \mathbf{z} \right\|}^{2}}\le {{P}^{\mathrm{BS}}} \label{BF1_c5}, \\
		& \bar{\mathcal{P}}\left( \mathbf{W},\mathbf{z}\right)\le {{\overline{P}}^{\mathrm{RIS}}} \label{BF1_c6}, 
		\end{align}
\end{subequations}
where $\bar{\mathcal{P}}\left( \mathbf{W},\mathbf{z}\right)=\left\| \mathbf{\Phi GW} \right\|_{F}^{2}\text{+ }{{\zeta }^{2}}\left\| \mathbf{\Phi }{{\mathbf{H}}_{RT}}\mathbf{\Phi GW} \right\|_{F}^{2}+{{\zeta }^{2}}\left\| \mathbf{\Phi }{{\mathbf{H}}_{RT}}\mathbf{\Phi G}\mathbf{z}\right\|_{2}^{2}+\left\| \mathbf{\Phi G}\mathbf{z}\right\|_{2}^{2}$ and ${{\overline{P}}^{\mathrm{RIS}}}={{P}^{\mathrm{RIS}}}  - {{\zeta }^{2}}\sigma _{R}^{2}\left\| \mathbf{\Phi }{{\mathbf{H}}_{RT}}\mathbf{\Phi } \right\|_{F}^{2} - 2\sigma _{R}^{2}\left\| \mathbf{\Phi } \right\|_{F}^{2}$. 

From (\ref{Radar_outSNR}), the radar SNR constraint (\ref{BF1_c4}) is rewritten as 
\begin{equation}\label{radar_constraint_BF_v1}
    \frac{{{\zeta }^{2}}}{{{{\bar{\sigma }}}_{R}}}\left( {{\mathbf{w}}^{H}}{{{\mathbf{\tilde{H}}}}_{T}}\mathbf{w}+{{\left| {{{\mathbf{\bar{h}}}}_{T}}\mathbf{z} \right|}^{2}} \right)\ge {{\Gamma }_{r}},
\end{equation}
where ${{\mathbf{\bar{h}}}_{T}}={{\mathbf{u}}^{H}}{{\mathbf{H}}_{T}},\text{~}{{\mathbf{\bar{H}}}_{T}}=\mathbf{\bar{h}}_{T}^{H}{{\mathbf{\bar{h}}}_{T}},\text{~} \mathbf{w}=\mathrm{vec}\left( \mathbf{W} \right)$, and ${{\mathbf{\tilde{H}}}_{T}}={{\mathbf{I}}_{K+1}}\otimes {{\mathbf{\bar{H}}}_{T}}$.  
However, the constraint (\ref{radar_constraint_BF_v1}) is non-convex due to the convex left hand side (LHS) terms, which can be approximated as 
\begin{align}
    {{\mathbf{w}}^{H}}{{{\mathbf{\tilde{H}}}}_{T}}\mathbf{w}\ge & 2\Re \left\{ {{\left( {{\mathbf{w}}^{\left( t \right)}} \right)}^{H}}{{{\mathbf{\tilde{H}}}}_{T}}\mathbf{w} \right\}-{{\left( {{\mathbf{w}}^{\left( t \right)}} \right)}^{H}}{{{\mathbf{\tilde{H}}}}_{T}}{{\mathbf{w}}^{\left( t \right)}}, \nonumber \\ 
    {{\left| {{{\mathbf{\bar{h}}}}_{T}}\mathbf{z} \right|}^{2}}\ge & 2\Re \left\{ {{\left( \mathbf{z}^{\left( t \right)} \right)}^{H}}{{{\mathbf{\bar{H}}}}_{T}}\mathbf{z} \right\}-{{\left( \mathbf{z}^{\left( t \right)} \right)}^{H}}{{{\mathbf{\bar{H}}}}_{T}}\mathbf{z}^{\left( t \right)}, \label{radar_constr_BF1} 
\end{align}
where ${{\mathbf{w}}^{\left( t \right)}}$ and $\mathbf{z}^{\left( t \right)}$ represent solutions at iteration $t$. 
By substituting (\ref{radar_constr_BF1}) into (\ref{radar_constraint_BF_v1}), the radar constraint becomes 
\begin{equation} \label{radar_constr_BF_final}
    2\Re \left\{ {{\left( {{\mathbf{w}}^{\left( t \right)}} \right)}^{H}}{{{\mathbf{\tilde{H}}}}_{T}}\mathbf{w}+{{\left( \mathbf{z}^{\left( t \right)} \right)}^{H}}{{{\mathbf{\bar{H}}}}_{T}}\mathbf{z} \right\}\ge {{\bar{\Gamma }}_{r}},
\end{equation}
where ${{\bar{\Gamma }}_{r}}=\frac{{{\Gamma }_{r}}{{{\bar{\sigma }}}_{R}}}{{{\zeta }^{2}}}+{{\left( {{\mathbf{w}}^{\left( t \right)}} \right)}^{H}}{{\mathbf{\tilde{H}}}_{T}}{{\mathbf{w}}^{\left( t \right)}}+{{\left( \mathbf{z}^{\left( t \right)} \right)}^{H}}{{\mathbf{\bar{H}}}_{T}}\mathbf{z}^{\left( t \right)}$.

In the RIS power budget constraint, (\ref{BF1_c6}), the first and second terms can be re-expressed as $\left\| \mathbf{\Phi GW} \right\|_{F}^{2}={{\mathbf{w}}^{H}}{{\mathbf{G}}_{BR}}\mathbf{w}$ and ${{\zeta }^{2}}\left\| \mathbf{\Phi }{{\mathbf{H}}_{RT}}\mathbf{\Phi GW} \right\|_{F}^{2}={{\mathbf{w}}^{H}}{{\mathbf{\bar{G}}}_{RT}}\mathbf{w}$, respectively, where ${{\mathbf{G}}_{BR}}={{\mathbf{G}}^{H}}{{\mathbf{\Phi }}^{H}}\mathbf{\Phi G}\otimes {{\mathbf{I}}_{K+1}}$ and ${{\mathbf{\bar{G}}}_{RT}}={{\zeta }^{2}}\mathbf{G}_{RT}^{H}{{\mathbf{G}}_{RT}}\otimes {{\mathbf{I}}_{K+1}}$ with ${{\mathbf{G}}_{RT}}=\mathbf{\Phi }{{\mathbf{H}}_{RT}}\mathbf{\Phi G}$. Then, the constraint (\ref{BF1_c6}) can be written in convex form as  
\begin{equation}\label{RIS_buget_final}
    \resizebox{0.89\columnwidth}{!}{$
    {{\mathbf{w}}^{H}}{{\mathbf{H}}_{BRT}}\mathbf{w}+{{\zeta }^{2}}\left\| \mathbf{\Phi }{{\mathbf{H}}_{RT}}\mathbf{\Phi G}\mathbf{z} \right\|_{2}^{2} 
    +\left\| \mathbf{\Phi G}\mathbf{z} \right\|_{2}^{2}\le {{\overline{P}}^{\mathrm{RIS}}},
    $}
\end{equation}
where ${{\mathbf{H}}_{BRT}}={{\mathbf{G}}_{BR}}+{{\mathbf{\bar{G}}}_{RT}}$.

\subsubsection{EPSR reformulation in terms of $\left\{ \mathbf{W, z} \right\}$} \label{sec3-1-1}
The LHS of (\ref{BF1_c1}) is handled by employing the SCA method. First, $R_k$ is approximated around given points $\left\{ {{\mathbf{W}}^{\left( t \right)}},\mathbf{z}^{\left( t \right)} \right\}$ by exploiting the rate approximation result of \cite{nasir2016secrecy}.
\begin{comment}   
the following \cite{nasir2016secrecy}
\begin{multline} \label{rate_approx}
    {{R}_{k}}=\ln \left( 1+\frac{{{\left| {{\alpha }_{k}} \right|}^{2}}}{{{\beta }_{k}}} \right)\ge \ln \left( 1+\frac{{{\left| \alpha _{k}^{\left( t \right)} \right|}^{2}}}{\beta _{k}^{\left( t \right)}} \right)-\frac{{{\left| \alpha _{k}^{\left( t \right)} \right|}^{2}}}{\beta _{k}^{\left( t \right)}} \\
    +\frac{2\Re \left\{ \alpha _{k}^{\left( t \right)}{{\alpha }_{k}} \right\}}{\beta _{k}^{\left( t \right)}}-\frac{{{\left| \alpha _{k}^{\left( t \right)} \right|}^{2}}\left( {{\beta }_{k}}+{{\left| {{\alpha }_{k}} \right|}^{2}} \right)}{\beta _{k}^{\left( t \right)}\left( \beta _{k}^{\left( t \right)}+{{\left| \alpha _{k}^{\left( t \right)} \right|}^{2}} \right)}.
\end{multline}
\end{comment}
Then, the achievable rate of the $k$-th private stream is  approximated as 
\begin{multline} \label{PrivRate_k}
    {{R}_{k}}={{\varepsilon }_{0,k}}+\frac{2\Re \left\{ \mathbf{w}_{k}^{H}\mathbf{\bar{H}}_{k}^{H}{{\pmb{\theta }}^{\left( t \right)}}{{\left( {{\pmb{\theta }}^{\left( t \right)}} \right)}^{H}}{{{\mathbf{\bar{H}}}}_{k}}\mathbf{w}_{k}^{\left( t \right)} \right\}}{\beta _{k}^{\left( t \right)}} \\
    -\frac{{{\left| \alpha _{k}^{\left( t \right)} \right|}^{2}}\left( \sum\limits_{i=1}^{K}{{{\left| {{\left( {{\pmb{\theta }}^{\left( t \right)}} \right)}^{H}}{{{\mathbf{\bar{H}}}}_{k}}{{\mathbf{w}}_{i}} \right|}^{2}}}+{{\left| {{\left( {{\pmb{\theta }}^{\left( t \right)}} \right)}^{H}}{{{\mathbf{\bar{H}}}}_{k}}\mathbf{z} \right|}^{2}} \right)}{\beta _{k}^{\left( t \right)}\left( \beta _{k}^{\left( t \right)}+{{\left| \alpha _{k}^{\left( t \right)} \right|}^{2}} \right)},
\end{multline}
where 
\begin{align*}
    {{\varepsilon }_{0,k}}=&\ln \left( 1+\frac{{{\left| \alpha _{k}^{\left( t \right)} \right|}^{2}}}{\beta _{k}^{\left( t \right)}} \right)-\frac{{{\left| \alpha _{k}^{\left( t \right)} \right|}^{2}}}{\beta _{k}^{\left( t \right)}} \nonumber \\
    &\quad -\frac{{{\left| \alpha _{k}^{\left( t \right)} \right|}^{2}}\left( 1+\sigma _{R}^{2}{{\left\| {{\left( {{\pmb{\theta }}^{\left( t \right)}} \right)}^{H}}\mathrm{diag}\left( \mathbf{\bar{h}}_{R,k}^{H} \right) \right\|}^{2}} \right)}{\beta _{k}^{\left( t \right)}\left( \beta _{k}^{\left( t \right)}+{{\left| \alpha _{k}^{\left( t \right)} \right|}^{2}} \right)}, \\
    \alpha _{k}^{\left( t \right)}=&{{\left( {{\pmb{\theta }}^{\left( t \right)}} \right)}^{H}}{{{\mathbf{\bar{H}}}}_{k}}\mathbf{w}_{k}^{\left( t \right)}, \\ 
    \beta _{k}^{\left( t \right)}=&\sum\limits_{i=1,i\ne k}^{K}{{{\left| {{\left( {{\pmb{\theta }}^{\left( t \right)}} \right)}^{H}}{{{\mathbf{\bar{H}}}}_{k}}\mathbf{w}_{i}^{\left( t \right)} \right|}^{2}}}+{{\left| {{\left( {{\pmb{\theta }}^{\left( t \right)}} \right)}^{H}}{{{\mathbf{\bar{H}}}}_{k}}\mathbf{z}^{\left( t \right)} \right|}^{2}}+{{\bar{\sigma }}_{R,k}}. 
\end{align*}

Next, the ergodic rate of private stream $k$ at the Eve can be rewritten through the following lemma. 
\begin{lem} \label{lem1}
    The achievable ergodic rate at the Eve while decoding private steam $k$ is denoted by
	\begin{multline} \label{ergo_PrivRate_Eve_final}
    -{{\mathbb{E}}_{{{\mathbf{h}}_{RE}}}}\left\{ {{\log }_{2}}\left( 1+{{\gamma }_{k,E}} \right) \right\}={{{\bar{\varepsilon }}}_{11,k}} -\frac{\pmb{\omega }_{E,k}^{H}\mathbf{\bar{Q}}_{E,k}^{\left( t \right)}{{\pmb{\omega }}_{E,k}}}{1-q_{E,k}^{\left( t \right)}} \\ 
    +\frac{2\Re \left\{ \bar{\sigma }_{E}^{-1}{{\left( \pmb{\omega }_{E,k}^{\left( t \right)} \right)}^{H}}\widehat{\pmb{\Omega }}_{E,k}^{H}{{\left( \mathbf{Q}_{E,k}^{\left( t \right)} \right)}^{-1}}{{\widehat{\pmb{\Omega }}}_{E,k}}{{\pmb{\omega }}_{E,k}} \right\}}{1-q_{E,k}^{\left( t \right)}} \\
    -\frac{\sum\limits_{i=0}^{K}{\mathbf{w}_{i}^{H}{{\widehat{\mathbf{G}}}_{E}}{{\mathbf{w}}_{i}}}+\mathbf{z}^{H}{{\widehat{\mathbf{G}}}_{E}}\mathbf{z}}{1+\ell _{E}^{\left( t \right)}},  
\end{multline}
where 
\begin{align*} 
    &{{\bar{\varepsilon }}_{11,k}}={{\varepsilon }_{11,k}}-\frac{{{\varepsilon }_{E,k}}}{1-q_{E,k}^{\left( t \right)}}-\frac{{{\bar{\sigma }}_{E}}}{1+\ell _{E}^{\left( t \right)}},\\
    &{{\varepsilon }_{E,k}}= \bar{\sigma }_{E}^{-1}{{\left( \pmb{\omega }_{E,k}^{\left( t \right)} \right)}^{H}}\widehat{\pmb{\Omega }}_{E,k}^{H}{{\left( \mathbf{Q}_{E,k}^{\left( t \right)} \right)}^{-1}}{{\left( \mathbf{Q}_{E,k}^{\left( t \right)} \right)}^{-1}}{{\widehat{\pmb{\Omega }}}_{E,k}}\pmb{\omega }_{E,k}^{\left( t \right)}, \\
    &{{\varepsilon }_{11,k}}={{\varepsilon }_{1,k}} -\frac{\Re \left\{ \bar{\sigma }_{E}^{-1}{{\left( \pmb{\omega }_{E,k}^{\left( t \right)} \right)}^{H}}\widehat{\pmb{\Omega }}_{E,k}^{H}{{\left( \mathbf{Q}_{E,k}^{\left( t \right)} \right)}^{-1}}{{\widehat{\pmb{\Omega }}}_{E,k}}\pmb{\omega }_{E,k}^{\left( t \right)} \right\}}{1-q_{E,k}^{\left( t \right)}}, \\
    &{{\varepsilon }_{1,k}}=1-\ln \left( \frac{1-q_{E,k}^{\left( t \right)}}{{{{\bar{\sigma }}}_{E}}} \right)-\ln \left( 1+\ell _{E}^{\left( t \right)} \right), \\
    &\ell _{E}^{\left( t \right)}=\sum\limits_{i=0}^{K}{{{\left( \mathbf{w}_{i}^{\left( t \right)} \right)}^{H}}{{\widehat{\mathbf{G}}}_{E}}\mathbf{w}_{i}^{\left( t \right)}}+{{\left( \mathbf{z}^{\left( t \right)} \right)}^{H}}{{\widehat{\mathbf{G}}}_{E}}\mathbf{z}^{\left( t \right)}+{{\bar{\sigma }}_{E}}-1, \\   
    &q_{E,k}^{\left( t \right)}=\bar{\sigma }_{E}^{-1}{{\left( \pmb{\omega }_{E,k}^{\left( t \right)} \right)}^{H}}\widehat{\pmb{\Omega }}_{E,k}^{H}{{\left( \mathbf{Q}_{E,k}^{\left( t \right)} \right)}^{-1}}{{\widehat{\pmb{\Omega }}}_{E,k}}\pmb{\omega }_{E,k}^{\left( t \right)}, \\
    &\mathbf{Q}_{E,k}^{\left( t \right)}={{\mathbf{I}}_{N\left( K+1 \right)}}+\bar{\sigma }_{E}^{-1}{{\widehat{\pmb{\Omega }}}_{E,k}}\pmb{\omega }_{E,k}^{\left( t \right)}{{\left( \pmb{\omega }_{E,k}^{\left( t \right)} \right)}^{H}}\widehat{\pmb{\Omega }}_{E}^{H}, \nonumber \\
    &\mathbf{\bar{Q}}_{E,k}^{\left( t \right)}=\bar{\sigma }_{E}^{-2}\widehat{\pmb{\Omega }}_{E,k}^{H}{{\left( \mathbf{Q}_{E,k}^{\left( t \right)} \right)}^{-1}}{{\widehat{\pmb{\Omega }}}_{E,k}}\pmb{\omega }_{E,k}^{\left( t \right)}{{\left( \pmb{\omega }_{E,k}^{\left( t \right)} \right)}^{H}}\widehat{\pmb{\Omega }}_{E,k}^{H} \nonumber \\
    & \quad \times {{\left( \mathbf{Q}_{E,k}^{\left( t \right)} \right)}^{-1}}{{\widehat{\pmb{\Omega }}}_{E,k}}
\end{align*}
\end{lem}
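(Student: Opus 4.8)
The plan is to build a \emph{concave lower bound} on $-\mathbb{E}_{\mathbf{h}_{RE}}\{\log_2(1+\gamma_{k,E})\}$ (equivalently, an upper bound on the eavesdropper rate $\mathbb{E}\{R_{k,E}\}$, which is exactly what constraint~(\ref{BF1_c1}) needs to stay valid after linearization of $R_k$). First I would write the rate as a difference of logarithms, $-\log_2(1+\gamma_{k,E})=\log_2(\mathcal{I}_{k,E})-\log_2\!\big(\mathcal{I}_{k,E}+|\pmb{\theta}^H\mathbf{\bar H}_E\mathbf{w}_k|^2\big)$, so that the second argument is the total received power at the Eve. The expectation over the unknown channel is removed by observing that every random term $|\pmb{\theta}^H\mathbf{\bar H}_E\mathbf{v}|^2$, for $\mathbf{v}\in\{\mathbf{w}_i,\mathbf{z}\}$, is a Hermitian quadratic form in $\mathbf{h}_{RE}$; hence $\mathbb{E}\{|\pmb{\theta}^H\mathbf{\bar H}_E\mathbf{v}|^2\}=\mathbf{v}^H\widehat{\mathbf{G}}_E\mathbf{v}$, where $\widehat{\mathbf{G}}_E$ is assembled from $\widehat{\mathbf{H}}_{RE}$ of Proposition~\ref{prop1}, together with $\pmb{\theta}$ and $\mathbf{G}$. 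This substitution is precisely what makes $\widehat{\mathbf{G}}_E$ and $\bar{\sigma}_E$ appear in the final expression.

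For the total-power term I would apply Jensen's inequality to pass the expectation through the concave logarithm, giving the deterministic argument $\sum_{i=0}^{K}\mathbf{w}_i^H\widehat{\mathbf{G}}_E\mathbf{w}_i+\mathbf{z}^H\widehat{\mathbf{G}}_E\mathbf{z}+\bar{\sigma}_E$, and then use the tangent upper bound $\log x\le \log x^{(t)}+(x-x^{(t)})/x^{(t)}$ at the iterate. Negating yields a lower bound on $-\log_2(\text{total})$; since $1+\ell_E^{(t)}$ equals this argument evaluated at iteration $t$, the $-x/x^{(t)}$ piece produces the last fraction $-\big(\sum_{i}\mathbf{w}_i^H\widehat{\mathbf{G}}_E\mathbf{w}_i+\mathbf{z}^H\widehat{\mathbf{G}}_E\mathbf{z}\big)/(1+\ell_E^{(t)})$, while the residual constant $-\bar{\sigma}_E/(1+\ell_E^{(t)})$ and $1-\ln(1+\ell_E^{(t)})$ collect into $\bar{\varepsilon}_{11,k}$ through $\varepsilon_{1,k}$.

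The main obstacle is the interference term $+\mathbb{E}\{\log_2(\mathcal{I}_{k,E})\}$, for which the naive Jensen direction gives the wrong inequality and therefore an invalid surrogate. Here I would follow the matrix-inversion / matrix-fractional technique of \cite{dong2020secure}: stack the interfering beamformers and the AN into $\pmb{\omega}_{E,k}$ and collect the corresponding channel factors into $\widehat{\pmb{\Omega}}_{E,k}$ so that the mean interference-plus-noise power reads $\mathcal{I}_{k,E}=\bar{\sigma}_E\det(\mathbf{Q}_{E,k})$, with $\mathbf{Q}_{E,k}=\mathbf{I}_{N(K+1)}+\bar{\sigma}_E^{-1}\widehat{\pmb{\Omega}}_{E,k}\pmb{\omega}_{E,k}\pmb{\omega}_{E,k}^H\widehat{\pmb{\Omega}}_{E,k}^H$. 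The matrix determinant lemma turns $\log\mathcal{I}_{k,E}$ into $\log\bar{\sigma}_E+\log\det(\mathbf{Q}_{E,k})$, and Sherman--Morrison supplies $(\mathbf{Q}_{E,k})^{-1}$ in closed form; this is what generates the scalar $q_{E,k}^{(t)}$ and the denominators $1-q_{E,k}^{(t)}$ (since $\det(\mathbf{Q}_{E,k}^{(t)})^{-1}=1-q_{E,k}^{(t)}$). I would then construct a quadratic minorizer of the concave $\log\det$ by a second-order expansion around $\pmb{\omega}_{E,k}^{(t)}$, whose gradient and Hessian, expressed through $(\mathbf{Q}_{E,k}^{(t)})^{-1}$, yield the linear term $+2\Re\{\cdots\pmb{\omega}_{E,k}\}/(1-q_{E,k}^{(t)})$ and the concave quadratic $-\pmb{\omega}_{E,k}^H\bar{\mathbf{Q}}_{E,k}^{(t)}\pmb{\omega}_{E,k}/(1-q_{E,k}^{(t)})$, with the remaining scalars absorbed into $\varepsilon_{E,k}$, $\varepsilon_{11,k}$ and hence $\bar{\varepsilon}_{11,k}$.

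The delicate points, which I expect to be the crux of the argument, are twofold: verifying that $\bar{\mathbf{Q}}_{E,k}^{(t)}\succeq\mathbf{0}$ so that the surrogate is genuinely concave in $\pmb{\omega}_{E,k}$, and confirming that each step preserves the correct inequality direction on the feasible domain, so that the assembled expression is a valid lower bound of the true ergodic term and can be safely substituted into the SCA iterations. Collecting the constants from both the total-power and interference parts into $\bar{\varepsilon}_{11,k}$ then completes the stated identity.
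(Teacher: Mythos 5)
Your proposal follows essentially the same route as the paper's Appendix~B: the same difference-of-logarithms decomposition, channel averaging via Proposition~\ref{prop1} (yielding $\widehat{\mathbf{G}}_{E}$, $\widehat{\pmb{\Omega }}_{E,k}$, $\bar{\sigma }_{E}$), tangent upper bounds on the logarithms, the Sherman--Morrison/matrix-inversion-lemma step that produces $\mathbf{Q}_{E,k}$ and $q_{E,k}^{\left( t \right)}$ (your determinant-lemma framing is equivalent algebra), and the matrix-fractional minorization of \cite{dong2020secure}. One caveat: the ``wrong-direction Jensen'' issue you flag for the interference term is not actually resolved by that machinery --- the paper's own proof, like your assembled surrogate, still replaces $\mathbb{E}\left\{ \ln \mathcal{I}_{k,E} \right\}$ by $\ln \mathbb{E}\left\{ \mathcal{I}_{k,E} \right\}$, so the final expression is an approximation of the ergodic term rather than a provable one-sided bound (only the subsequent linearizations in $\pmb{\omega }_{E,k}$ are genuine bounds, and positive semidefiniteness of $\mathbf{\bar{Q}}_{E,k}^{\left( t \right)}$ is immediate since it is a rank-one Gram form).
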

\begin{proof}
	See Appendix \ref{appendix2}.
\end{proof} 
Based on the results of Lemma \ref{lem1} and (\ref{PrivRate_k}), the EPSR constraint (\ref{BF1_c1}) can be formulated as 
\begin{equation} \label{priv_rate_k_final}
    \mathcal{F}^{\mathrm{p}}\left( {{\pmb{\omega }}_{E,k}} \right) + r_k \ge \tau,
\end{equation}
where $\mathcal{F}^{\mathrm{p}}\left( {{\pmb{\omega }}_{E,k}} \right)$ is given by (\ref{priv_sec_rate_BF_constraint}) at the top of next page, in which  
\begin{align*}
    {{\bar{\varepsilon }}_{E,k}}=&{{\varepsilon }_{0,k}}+{{\varepsilon }_{1,k}}-\frac{{{\bar{\sigma }}_{E}}}{1+\ell _{E}^{\left( t \right)}}-\frac{{{\varepsilon }_{E,k}}}{1-q_{E,k}^{\left( t \right)}} \\
    &-\frac{\Re \left\{ \bar{\sigma }_{E}^{-1}{{\left( \pmb{\omega }_{E,k}^{\left( t \right)} \right)}^{H}}\widehat{\pmb{\Omega }}_{E,k}^{H}{{\left( \mathbf{Q}_{E,k}^{\left( t \right)} \right)}^{-1}}{{\widehat{\pmb{\Omega }}}_{E,k}}\pmb{\omega }_{E,k}^{\left( t \right)} \right\}}{1-q_{E,k}^{\left( t \right)}}.
\end{align*}
        
\begin{floatEq}
	\begin{subequations}\begin{align} 
            & \mathcal{F}^{\mathrm{p}}\left( {{\pmb{\omega }}_{E,k}} \right) =  {{{\bar{\varepsilon }}}_{E,k}} +  \frac{2\Re \left\{ \mathbf{w}_{k}^{H}\mathbf{\bar{H}}_{k}^{H}{{\pmb{\theta }}^{\left( t \right)}}{{\left( {{\pmb{\theta }}^{\left( t \right)}} \right)}^{H}}{{{\mathbf{\bar{H}}}}_{k}}\mathbf{w}_{k}^{\left( t \right)} \right\}}{\beta _{k}^{\left( t \right)}}+\frac{2\Re \left\{ \bar{\sigma }_{E}^{-1}{{\left( \pmb{\omega }_{E,k}^{\left( t \right)} \right)}^{H}}\widehat{\pmb{\Omega }}_{E,k}^{H}{{\left( \mathbf{Q}_{E,k}^{\left( t \right)} \right)}^{-1}}{{\widehat{\pmb{\Omega }}}_{E,k}}{{\pmb{\omega }}_{E,k}} \right\}-\pmb{\omega }_{E,k}^{H}\mathbf{\bar{Q}}_{E,k}^{\left( t \right)}{{\pmb{\omega }}_{E,k}}}{1-q_{E,k}^{\left( t \right)}} \nonumber \\ 
            & \qquad  \qquad \quad -\frac{\sum\limits_{i=0}^{K}{\mathbf{w}_{i}^{H}{{\widehat{\mathbf{G}}}_{E}}{{\mathbf{w}}_{i}}}+\mathbf{z}^{H}{{\widehat{\mathbf{G}}}_{E}}\mathbf{z}}{1+\ell _{E}^{\left( t \right)}} -\frac{{{\left| \alpha _{k}^{\left( t \right)} \right|}^{2}}\left( \sum\limits_{i=1}^{K}{{{\left| {{\left( {{\pmb{\theta }}^{\left( t \right)}} \right)}^{H}}{{{\mathbf{\bar{H}}}}_{k}}{{\mathbf{w}}_{i}} \right|}^{2}}}+{{\left| {{\left( {{\pmb{\theta }}^{\left( t \right)}} \right)}^{H}}{{{\mathbf{\bar{H}}}}_{k}}\mathbf{z} \right|}^{2}} \right)}{\beta _{k}^{\left( t \right)}\left( \beta _{k}^{\left( t \right)}+{{\left| \alpha _{k}^{\left( t \right)} \right|}^{2}} \right)},  
                \label{priv_sec_rate_BF_constraint} \\
            & {{R}_{{{s}_{0}},k}}= {{\varepsilon }_{2,k}} 
            -\frac{{{\left| \alpha _{0k}^{\left( t \right)} \right|}^{2}}\left( {{\left| {{\left( {{\pmb{\theta }}^{\left( t \right)}} \right)}^{H}}{{{\mathbf{\bar{H}}}}_{k}}{{\mathbf{w}}_{0}} \right|}^{2}}+\sum\limits_{{k}'=1}^{K}{{{\left| {{\left( {{\pmb{\theta }}^{\left( t \right)}} \right)}^{H}}{{{\mathbf{\bar{H}}}}_{k}}{{\mathbf{w}}_{{{k}'}}} \right|}^{2}}}+{{\left| {{\left( {{\pmb{\theta }}^{\left( t \right)}} \right)}^{H}}{{{\mathbf{\bar{H}}}}_{k}}\mathbf{z} \right|}^{2}} \right)}{\beta _{0k}^{\left( t \right)}\left( \beta _{0k}^{\left( t \right)}+{{\left| \alpha _{0k}^{\left( t \right)} \right|}^{2}} \right)}  +\frac{2\Re \left\{ \mathbf{w}_{0}^{H}\mathbf{\bar{H}}_{k}^{H}{{\pmb{\theta }}^{\left( t \right)}}{{\left( {{\pmb{\theta }}^{\left( t \right)}} \right)}^{H}}{{{\mathbf{\bar{H}}}}_{k}}\mathbf{w}_{0}^{\left( t \right)} \right\}}{\beta _{0k}^{\left( t \right)}}. 
                \label{com_rate_UE}          
	\end{align}\end{subequations}
\end{floatEq}

\subsubsection{ECSR reformulation in terms of $\left\{ \mathbf{W, z} \right\}$} \label{sec3-1-2}
The rate difference ${{R}_{{{s}_{0}},k}}-{{\mathbb{E}}_{{{\mathbf{h}}_{RE}}}}\left\{ {{R}_{{{s}_{0}},E}} \right\}$ can be processed following the same procedure in section \ref{sec3-1-1}. Specifically, the decoding rate ${{R}_{{{s}_{0}},k}}$ is expressed by (\ref{com_rate_UE}) at the top of the next page, where 
\begin{align*}
    {{\varepsilon }_{2,k}}=&\ln \left( 1+\frac{{{\left| \alpha _{0k}^{\left( t \right)} \right|}^{2}}}{\beta _{0k}^{\left( t \right)}} \right)-\frac{{{\left| \alpha _{0k}^{\left( t \right)} \right|}^{2}}}{\beta _{0k}^{\left( t \right)}} \\
    &\quad -\frac{{{\left| \alpha _{0k}^{\left( t \right)} \right|}^{2}}\left( \sigma _{R}^{2}{{\left\| {{\left( {{\pmb{\theta }}^{\left( t \right)}} \right)}^{H}}\mathrm{diag}\left( \mathbf{\bar{h}}_{R,k}^{H} \right) \right\|}^{2}}+1 \right)}{\beta _{0k}^{\left( t \right)}\left( \beta _{0k}^{\left( t \right)}+{{\left| \alpha _{0k}^{\left( t \right)} \right|}^{2}} \right)},\\
    \alpha _{0k}^{\left( t \right)}=&{{\left( {{\pmb{\theta }}^{\left( t \right)}} \right)}^{H}}{{{\mathbf{\bar{H}}}}_{k}}\mathbf{w}_{0}^{\left( t \right)}, \\ 
    \beta _{0k}^{\left( t \right)}=&\sum\limits_{{k}'=1}^{K}{{{\left| {{\left( {{\pmb{\theta }}^{\left( t \right)}} \right)}^{H}}{{{\mathbf{\bar{H}}}}_{k}}\mathbf{w}_{{{k}'}}^{\left( t \right)} \right|}^{2}}}+{{\left| {{\left( {{\pmb{\theta }}^{\left( t \right)}} \right)}^{H}}{{{\mathbf{\bar{H}}}}_{k}}\mathbf{z}^{\left( t \right)} \right|}^{2}}+{{\bar{\sigma }}_{R,k}}.  
\end{align*}
%   {{\alpha }_{0k}}=& {{\left( {{\pmb{\theta }}^{\left( t \right)}} \right)}^{H}}{{{\mathbf{\bar{H}}}}_{k}}{{\mathbf{w}}_{0}}, \\

%  {{\beta }_{0k}}=&\sum\limits_{{k}'=1}^{K}{{{\left| {{\left( {{\pmb{\theta }}^{\left( t \right)}} \right)}^{H}}{{{\mathbf{\bar{H}}}}_{k}}{{\mathbf{w}}_{{{k}'}}} \right|}^{2}}}+{{\left| {{\left( {{\pmb{\theta }}^{\left( t \right)}} \right)}^{H}}{{{\mathbf{\bar{H}}}}_{k}}\mathbf{z} \right|}^{2}} \\
%&\quad +\sigma _{R}^{2}{{\left\| {{\left( {{\pmb{\theta }}^{\left( t \right)}} \right)}^{H}}\mathrm{diag}\left( \mathbf{\bar{h}}_{R,k}^{H} \right) \right\|}^{2}}+1, \\ 

Following the procedure of Lemma \ref{lem1}, the ergodic rate of the common stream at the Eve can be formulated as 
\begin{multline} \label{com_rate_Eve} 
  -{{\mathbb{E}}_{{{\mathbf{h}}_{RE}}}}\left\{ {{\log }_{2}}\left( 1+{{\gamma }_{{{s}_{0}},E}} \right) \right\}={{{\bar{\varepsilon }}}_{12,k}} -\frac{\pmb{\omega }_{0E,k}^{H}\mathbf{\bar{Q}}_{0E,k}^{\left( t \right)}{{\pmb{\omega }}_{0E,k}}}{1-q_{0E,k}^{\left( t \right)}} \\
  +\frac{2\Re \left\{ \bar{\sigma }_{E}^{-1}{{\left( \pmb{\omega }_{0E,k}^{\left( t \right)} \right)}^{H}}\widehat{\pmb{\Omega }}_{E,k}^{H}{{\left( \mathbf{Q}_{0E,k}^{\left( t \right)} \right)}^{-1}}{{\widehat{\pmb{\Omega }}}_{E,k}}{{\pmb{\omega }}_{0E,k}} \right\}}{1-q_{0E,k}^{\left( t \right)}} \\ 
  -\frac{\sum\limits_{i=0}^{K}{\mathbf{w}_{i}^{H}{{\widehat{\mathbf{G}}}_{E}}{{\mathbf{w}}_{i}}}+\mathbf{z}^{H}{{\widehat{\mathbf{G}}}_{E}}\mathbf{z}}{1+\ell _{E}^{\left( t \right)}},  
\end{multline}
where ${{\pmb{\omega }}_{0E,k}}$ is attained by removing only the first column from the matrix ${{\left[ \begin{matrix}
   \mathbf{w}_{0}^{T} & \mathbf{w}_{1}^{T} & \cdots  & \mathbf{w}_{K}^{T} & \mathbf{z}^{T}  \\
\end{matrix} \right]}^{T}}$. Moreover, we have 
\begin{align*}
    &{{\bar{\varepsilon }}_{12,k}}=1-\frac{\Re \left\{ \bar{\sigma }_{E}^{-1}{{\left( \pmb{\omega }_{0E,k}^{\left( t \right)} \right)}^{H}}\widehat{\pmb{\Omega }}_{E,k}^{H}{{\left( \mathbf{Q}_{0E,k}^{\left( t \right)} \right)}^{-1}}{{\widehat{\pmb{\Omega }}}_{E,k}}\pmb{\omega }_{0E,k}^{\left( t \right)} \right\}}{1-q_{0E,k}^{\left( t \right)}} \\
    &-\frac{{{\varepsilon }_{0E,k}}}{1-q_{0E,k}^{\left( t \right)}}-\frac{{{\bar{\sigma }}_{E}}}{1+\ell _{E}^{\left( t \right)}}-\ln \left( \frac{1-q_{0E,k}^{\left( t \right)}}{{{{\bar{\sigma }}}_{E}}} \right)-\ln \left( 1+\ell _{E}^{\left( t \right)} \right), \\
    &\mathbf{\bar{Q}}_{0E,k}^{\left( t \right)}=\bar{\sigma }_{E}^{-2}\widehat{\pmb{\Omega }}_{E,k}^{H} {{\left( \mathbf{Q}_{0E,k}^{\left( t \right)} \right)}^{-1}}{{\widehat{\pmb{\Omega }}}_{E,k}}\pmb{\omega }_{0E,k}^{\left( t \right)}{{\left( \pmb{\omega }_{0E,k}^{\left( t \right)} \right)}^{H}}\widehat{\pmb{\Omega }}_{E,k}^{H}\\
    & \quad \times {{\left( \mathbf{Q}_{0E,k}^{\left( t \right)} \right)}^{-1}}{{\widehat{\pmb{\Omega }}}_{E,k}},\\
    &{{\varepsilon }_{0E,k}}=\bar{\sigma }_{E}^{-1}{{\left( \pmb{\omega }_{0E,k}^{\left( t \right)} \right)}^{H}}\widehat{\pmb{\Omega }}_{E,k}^{H}{{\left( \mathbf{Q}_{0E,k}^{\left( t \right)} \right)}^{-1}} {{\left( \mathbf{Q}_{0E,k}^{\left( t \right)} \right)}^{-1}} {{\widehat{\pmb{\Omega }}}_{E,k}}\pmb{\omega }_{0E,k}^{\left( t \right)}, \\
    &q_{0E,k}^{\left( t \right)}=\bar{\sigma }_{E}^{-1}{{\left( \pmb{\omega }_{0E,k}^{\left( t \right)} \right)}^{H}}\widehat{\pmb{\Omega }}_{E,k}^{H}{{\left( \mathbf{Q}_{0E,k}^{\left( t \right)} \right)}^{-1}}{{\widehat{\pmb{\Omega }}}_{E,k}}\pmb{\omega }_{0E,k}^{\left( t \right)}.
\end{align*}
%     {{\mathbf{Q}}_{0E,k}}=&\mathbf{I}+\bar{\sigma }_{E}^{-1}{{\widehat{\pmb{\Omega }}}_{E,k}}{{\pmb{\omega }}_{0E,k}}\pmb{\omega }_{0E,k}^{H}\widehat{\pmb{\Omega }}_{E,k}^{H} \\ 
According to (\ref{com_rate_UE}) and (\ref{com_rate_Eve}), the ECSR constraint (\ref{BF1_c2}) can be reformulated as 
\begin{equation} \label{com_rate_k_final}
    {{\mathcal{F}}^{\text{c}}}\left( {{\pmb{\omega }}_{0E,k}} \right)\ge \sum\limits_{k=1}^{K}{{{r}_{k}}},
\end{equation}
where ${{\mathcal{F}}^{\text{c}}}\left( {{\pmb{\omega }}_{0E,k}} \right)$ is given in (\ref{com_rate_Eve_expression}) at the top of the next page, with 
\small
\begin{align*}
    &{{\bar{\varepsilon }}_{0E,k}}=1+{{\varepsilon }_{2,k}}-\ln \left( \frac{1-q_{0E,k}^{\left( t \right)}}{{{{\bar{\sigma }}}_{E}}} \right)-\ln \left( 1+\ell _{E}^{\left( t \right)} \right)-\frac{{{\bar{\sigma }}_{E}}}{1+\ell _{E}^{\left( t \right)}} \\
    &  -\frac{\Re \left\{ \bar{\sigma }_{E}^{-1}{{\left( \pmb{\omega }_{0E,k}^{\left( t \right)} \right)}^{H}}\widehat{\pmb{\Omega }}_{E,k}^{H}{{\left( \mathbf{Q}_{0E,k}^{\left( t \right)} \right)}^{-1}}{{\widehat{\pmb{\Omega }}}_{E,k}}\pmb{\omega }_{0E,k}^{\left( t \right)} \right\}}{1-q_{0E,k}^{\left( t \right)}}-\frac{{{\varepsilon }_{0E,k}}}{1-q_{0E,k}^{\left( t \right)}}.
\end{align*} 
\normalsize
\begin{floatEq}
	 \begin{align} 
         &{{\mathcal{F}}^{\text{c}}}\left( {{\pmb{\omega }}_{0E,k}} \right)= \frac{2\Re \left\{ \mathbf{w}_{0}^{H}\mathbf{\bar{H}}_{k}^{H}{{\pmb{\theta }}^{\left( t \right)}}{{\left( {{\pmb{\theta }}^{\left( t \right)}} \right)}^{H}}{{{\mathbf{\bar{H}}}}_{k}}\mathbf{w}_{0}^{\left( t \right)} \right\}}{\beta _{0k}^{\left( t \right)}}+\frac{2\Re \left\{ \bar{\sigma }_{E}^{-1}{{\left( \pmb{\omega }_{0E,k}^{\left( t \right)} \right)}^{H}}\widehat{\pmb{\Omega }}_{E,k}^{H}{{\left( \mathbf{Q}_{0E,k}^{\left( t \right)} \right)}^{-1}}{{\widehat{\pmb{\Omega }}}_{E,k}}{{\pmb{\omega }}_{0E,k}} \right\} - \pmb{\omega }_{0E,k}^{H}\mathbf{\bar{Q}}_{0E,k}^{\left( t \right)}{{\pmb{\omega }}_{0E,k}}}{1-q_{0E,k}^{\left( t \right)}} \nonumber \\ 
        & \quad -\frac{{{\left| \alpha _{0k}^{\left( t \right)} \right|}^{2}}\left( {{\left| {{\left( {{\pmb{\theta }}^{\left( t \right)}} \right)}^{H}}{{{\mathbf{\bar{H}}}}_{k}}{{\mathbf{w}}_{0}} \right|}^{2}}+\sum\limits_{{k}'=1}^{K}{{{\left| {{\left( {{\pmb{\theta }}^{\left( t \right)}} \right)}^{H}}{{{\mathbf{\bar{H}}}}_{k}}{{\mathbf{w}}_{{{k}'}}} \right|}^{2}}}+{{\left| {{\left( {{\pmb{\theta }}^{\left( t \right)}} \right)}^{H}}{{{\mathbf{\bar{H}}}}_{k}}\mathbf{z} \right|}^{2}} \right)}{\beta _{0k}^{\left( t \right)}\left( \beta _{0k}^{\left( t \right)}+{{\left| \alpha _{0k}^{\left( t \right)} \right|}^{2}} \right)} -\frac{\sum\limits_{i=0}^{K}{\mathbf{w}_{i}^{H}{{\widehat{\mathbf{G}}}_{E}}{{\mathbf{w}}_{i}}}+\mathbf{z}^{H}{{\widehat{\mathbf{G}}}_{E}}\mathbf{z}}{1+\ell _{E}^{\left( t \right)}} + {{{\bar{\varepsilon }}}_{0E,k}}. \label{com_rate_Eve_expression}
	\end{align} 
\end{floatEq} 

According to the above analysis, the beamforming and rate splitting problem can be recast as 
\begin{subequations}\label{BF1_opt_final}
	\begin{align}
		\underset{\textbf{r} \ge \textbf{0},{\textbf{z}},\textbf{W}}{\textrm{Maximize}}\hspace{0.15cm} & \tau, \label{BF_obj_final} \\
		\textrm{s.t.: } 
        & (\ref{priv_rate_k_final}), (\ref{com_rate_k_final}), (\ref{radar_constr_BF_final}), (\ref{BF1_c5}) \text{ and } (\ref{RIS_buget_final})  \label{BF_constraints}, 
		\end{align}
\end{subequations}
which represents a convex optimization problem and can be tackled using standard optimization tools such as CVX \cite{grant2014cvx}.

\subsection{Active reflection coefficients and common rate optimization} \label{sec3-2}
Given the variables $\mathbf{W}$, $\mathbf{z}$, $\mathbf{r}$, and $\mathbf{u}$, the optimization problem of the active RIS reflection matrix is given as  
\begin{subequations}\label{RIS1_opt}
	\begin{align}
		\underset{\textbf{r} \ge \textbf{0}, \pmb{\Phi }}
        {\textrm{Maximize}}\hspace{0.15cm} & \tau, \label{RIS1_obj} \\
		\textrm{s.t.: } 
            & {{r}_{k}}+ {{R}_{k}}-{{\mathbb{E}}_{{{\mathbf{h}}_{RE}}}}\left\{ {{R}_{k,E}} \right\} \ge \tau , \quad \forall k \label{RIS1_c1},  \\
	    & {{R}_{{{s}_{0}},k}}-{{\mathbb{E}}_{{{\mathbf{h}}_{RE}}}}\left\{                  {{R}_{{{s}_{0}},E}} \right\}\ge \sum\limits_{k=1}^{K}{{{r}_{k}}}, \quad \forall k \label{RIS1_c2},  \\
	    & \gamma \ge {{\Gamma }_{r}} \label{RIS1_c4}, \\
  	    & \beta_{n} \le \beta^{\mathrm{max}} \label{RIS1_c5}, \\
	    & \bar{\mathcal{P}}\left( \mathbf{\Phi }\right)\le {P}^{\mathrm{RIS}} \label{RIS1_c6}, 
	\end{align}
\end{subequations}
where $\bar{\mathcal{P}}\left( \mathbf{\Phi }\right) = \left\| \mathbf{\Phi GW} \right\|_{F}^{2} + {{\zeta }^{2}}\left\| \mathbf{\Phi }{{\mathbf{H}}_{RT}}\mathbf{\Phi GW} \right\|_{F}^{2} + {{\zeta }^{2}}\left\| \mathbf{\Phi }{{\mathbf{H}}_{RT}}\mathbf{\Phi G}\mathbf{z} \right\|_{2}^{2} + {{\zeta }^{2}}\sigma _{R}^{2}\left\| \mathbf{\Phi }{{\mathbf{H}}_{RT}}\mathbf{\Phi } \right\|_{F}^{2} + \left\| \mathbf{\Phi G}\mathbf{z} \right\|_{2}^{2} + 2\sigma _{R}^{2}\left\| \mathbf{\Phi } \right\|_{F}^{2} $.

The primary challenge with the constraint (\ref{RIS1_c4}) lies in managing the fractional form, which involves implicit functions of $\mathbf{\Phi }$, as outlined below.
\begin{equation}\label{radar_RIS}
    \gamma \triangleq \frac{f\left( \mathbf{\Phi } \right)}{q\left( \mathbf{\Phi } \right)}=\frac{{{\zeta }^{2}}{{\mathbf{u}}^{H}}{{\mathbf{H}}_{T}}\mathbf{\Pi }\mathbf{H}_{T}^{H}\mathbf{u}}{{{\mathbf{u}}^{H}}\left( {{\zeta }^{2}}\sigma _{R}^{2}{{\mathbf{H}}_{0}}\mathbf{H}_{0}^{H}+\sigma _{R}^{2}{{\mathbf{H}}_{1}}\mathbf{H}_{1}^{H}+{{\sigma }^{2}}{{\mathbf{I}}_{M}} \right)\mathbf{u}},
\end{equation}
where $\mathbf{\Pi }=\mathbf{W}{{\mathbf{W}}^{H}}+\mathbf{R}$ and   $\mathbf{R}=\mathbf{z}\mathbf{z}^{H}$. We attempt to re-express $f\left( \mathbf{\Phi } \right)$ and $q\left( \mathbf{\Phi } \right)$ as explicit functions of $\pmb{\theta }$. By exploiting the transformations $\mathbf{\Phi }{{\mathbf{h}}_{xy}}=\mathrm{diag}\left\{ {{\mathbf{h}}_{xy}} \right\}\pmb{\theta }$, and $\mathbf{h}_{xy}^{H}\mathbf{\Phi }={{\pmb{\theta }}^{T}}\mathrm{diag}\left\{ \mathbf{h}_{xy}^{*} \right\}$, the channels ${{\mathbf{H}}_{T}}$ and ${{\mathbf{H}}_{0}}$ are respectively denoted by ${{\mathbf{H}}_{T}}= {{{\mathbf{\tilde{G}}}}_{RT}}\pmb{\theta }{{\pmb{\theta }}^{T}}\mathbf{\tilde{G}}_{RT}^{H}$ and ${{\mathbf{H}}_{0}}={{{\mathbf{\tilde{G}}}}_{RT}}\pmb{\theta }{{\pmb{\theta }}^{T}}\mathrm{diag}\left\{ \mathbf{h}_{RT}^{*} \right\}$, where ${{\mathbf{\tilde{G}}}_{RT}}={{\mathbf{G}}^{H}}\mathrm{diag}\left\{ {{\mathbf{h}}_{RT}} \right\}$, and $\mathbf{\tilde{G}}_{RT}^{H}=\mathrm{diag}\left\{ \mathbf{h}_{RT}^{*} \right\}\mathbf{G}$. Thus, we can rewrite $f\left( \mathbf{\Phi } \right)$ and $q\left( \mathbf{\Phi } \right)$ as   
\begin{align} \label{radar_num_RIS}
   & f\left( \mathbf{\Phi } \right)={{\zeta }^{2}}\mathrm{Tr}\left( \underbrace{\mathbf{\tilde{G}}_{RT}^{H}\mathbf{u}{{\mathbf{u}}^{H}}{{{\mathbf{\tilde{G}}}}_{RT}}}_{\mathbf{K}}\underbrace{\pmb{\theta }{{\pmb{\theta }}^{T}}}_{\mathbf{L}}\underbrace{\mathbf{\tilde{G}}_{RT}^{H}\mathbf{\Pi }{{{\mathbf{\tilde{G}}}}_{RT}}}_{\mathbf{P}}\underbrace{\pmb{\theta }{{\pmb{\theta }}^{T}}}_{\mathbf{Q}} \right) \nonumber \\ 
 & \overset{\left( a \right)}{\mathop{=}}\,{{\zeta }^{2}}\mathrm{vec}^{H}\left( \pmb{\theta }{{\pmb{\theta }}^{T}} \right)\left[ \left( \mathbf{\tilde{G}}_{RT}^{T}{\mathbf{\Pi }^{T}}\mathbf{\tilde{G}}_{RT}^{*} \right)\otimes \left( \mathbf{\tilde{G}}_{RT}^{H}\mathbf{u}{{\mathbf{u}}^{H}}{{{\mathbf{\tilde{G}}}}_{RT}} \right) \right] \nonumber \\
 & \quad \times \mathrm{vec}\left( \pmb{\theta }{{\pmb{\theta }}^{T}} \right) = {{\mathbf{v}}^{H}}\mathbf{Av},  
\end{align}
where $\mathbf{A}={{\zeta }^{2}} \mathbf{\tilde{G}}_{RT}^{T}{\mathbf{\Pi }^{T}}\mathbf{\tilde{G}}_{RT}^{*} \otimes  \mathbf{\tilde{G}}_{RT}^{H}\mathbf{u}{{\mathbf{u}}^{H}}{{{\mathbf{\tilde{G}}}}_{RT}} $ and $\mathbf{v}=\mathrm{vec}\left( \pmb{\theta }{{\pmb{\theta }}^{T}} \right)=\pmb{\theta }\otimes \pmb{\theta }$. 
The equality $\left( a \right)$ follows the trace property $\mathrm{Tr}\left( \mathbf{KLPQ} \right)=\mathrm{vec}^{H}\left( {{\mathbf{Q}}^{H}} \right)\left( {{\mathbf{P}}^{T}}\otimes \mathbf{K} \right)\mathrm{vec}\left( \mathbf{L} \right)$. Similar to (\ref{radar_num_RIS}), $q\left( \mathbf{\Phi } \right)$ can be equivalently formulated as
\begin{equation} \label{radar_den_RIS}
    q\left( \mathbf{\Phi } \right)={{\mathbf{v}}^{H}}\mathbf{Bv}+{{\pmb{\theta }}^{H}}\mathbf{C} \pmb{\theta }+{{\sigma }^{2}}\left\| \mathbf{u} \right\|_{2}^{2},
\end{equation}
where $\mathbf{B}={{\zeta }^{2}}\sigma _{R}^{2} \mathrm{diag}\left\{ {{\mathbf{h}}_{RT}} \right\} \mathrm{diag}\left\{ \mathbf{h}_{RT}^{H} \right\} \otimes \mathbf{\tilde{G}}_{RT}^{H}\mathbf{u}{{\mathbf{u}}^{H}}{{{\mathbf{\tilde{G}}}}_{RT}} $ and $\mathbf{C}=\sigma _{R}^{2}\mathrm{diag}\left\{ \mathbf{Gu} \right\}\mathrm{diag}\left\{ {{\mathbf{u}}^{H}}{{\mathbf{G}}^{H}} \right\}$. By combining (\ref{radar_num_RIS}) and (\ref{radar_den_RIS}), the constraint (\ref{RIS1_c4}) can be re-expressed as
\begin{equation} \label{radar_RIS_v1}
    {{\mathbf{v}}^{H}}\mathbf{Dv}+{{\Gamma }_{r}}{{\pmb{\theta }}^{H}}\mathbf{C} \pmb{\theta }+{{c}_{0}}\le 0,
\end{equation}
where $\mathbf{D}={{\Gamma }_{r}}\mathbf{B}-\mathbf{A}$ and ${{c}_{0}}={{\Gamma }_{r}}{{\sigma }^{2}}\left\| \mathbf{u} \right\|_{2}^{2}$. However, the quartic term in (\ref{radar_RIS_v1}) is non-convex  w.r.t $\pmb{\theta }$. To address this, the MM approach is utilized to identify an upper-bound on ${{\mathbf{v}}^{H}}\mathbf{Dv}$ via the second-order Taylor expansion (STE) as follows 
\begin{multline} \label{1st_term_approx_radarRIS}
    {{\mathbf{v}}^{H}}\mathbf{Dv}\le {{\lambda }_{\mathbf{D}}}{{\mathbf{v}}^{H}}\mathbf{v}+2\Re \left\{ {{\mathbf{v}}^{H}}\left( \mathbf{D}-{{\lambda }_{\mathbf{D}}}{{\mathbf{I}}_{{{N}^{2}}}} \right){{\mathbf{v}}^{\left( t \right)}} \right\} \\
    +{{\left( {{\mathbf{v}}^{\left( t \right)}} \right)}^{H}}\left( {{\lambda }_{\mathbf{D}}}{{\mathbf{I}}_{{{N}^{2}}}}-\mathbf{D} \right){{\mathbf{v}}^{\left( t \right)}},
\end{multline}
where ${{\lambda }_{\mathbf{D}}}$ gives the maximum eigenvalue of $\mathbf{D}$. Given that ${{\beta }_{n}}\le {{\beta }^{\max }}$, the first term of (\ref{1st_term_approx_radarRIS}) is relaxed as  
${{\lambda }_{\mathbf{D}}}{{\mathbf{v}}^{H}}\mathbf{v}={{\lambda }_{\mathbf{D}}}\left( {{\pmb{\theta }}^{H}}\pmb{\theta } \right)\otimes \left( {{\pmb{\theta }}^{H}}\pmb{\theta } \right)  
    \le {{\lambda }_{\mathbf{D}}}{{N}^{2}}{{\left( {{\beta }^{\max }} \right)}^{4}}$.
Then, (\ref{1st_term_approx_radarRIS}) becomes
\begin{equation} \label{1st_term_approx_radarRIS_V2}
    {{\mathbf{v}}^{H}}\mathbf{Dv}=\Re \left\{ {{\mathbf{v}}^{H}}\mathbf{d} \right\}+{{c}_{1}}=\Re \left\{ {{\pmb{\theta }}^{H}}\mathbf{\ddot{D}}{{\pmb{\theta }}^{*}} \right\}+{{c}_{1r}},
\end{equation}
where $\mathbf{d}=2\left( \mathbf{D}-{{\lambda }_{\mathbf{D}}}{{\mathbf{I}}_{{{N}^{2}}}} \right){{\mathbf{v}}^{\left( t \right)}}, \mathbf{d}=\mathrm{vec}\left( {\mathbf{\ddot{D}}} \right)$ and ${{c}_{1r}}={{\left( {{\mathbf{v}}^{\left( t \right)}} \right)}^{H}}\left( {{\lambda }_{\mathbf{D}}}{{\mathbf{I}}_{{{N}^{2}}}}-\mathbf{D} \right){{\mathbf{v}}^{\left( t \right)}}+{{\lambda }_{\mathbf{D}}}{{N}^{2}}{{\left( {{\beta }^{\max }} \right)}^{4}}$. Since $\Re \left\{ {{\pmb{\theta }}^{H}}\mathbf{\ddot{D}}{{\pmb{\theta }}^{*}} \right\}$ is non-convex expression, it can be converted into real-valued quadratic form ${\pmb{\bar{\theta }}^{T}}\pmb{\bar{D}\bar{\theta }}$, where $\mathbf{\bar{D}}=\left[ \Re \left\{ {\mathbf{\ddot{D}}} \right\}, \Im \left\{ {\mathbf{\ddot{D}}} \right\} ;  \Im \left\{ {\mathbf{\ddot{D}}} \right\}, -\Re \left\{ {\mathbf{\ddot{D}}} \right\}  
 \right]$ and $\pmb{\bar{\theta }}={{\left[ \Re \left\{ {{\theta }^{T}} \right\},\Im \left\{ {{\theta }^{T}} \right\} \right]}^{T}}$. Again, we re-invoke the STE to precisely convexify ${{\pmb{\bar{\theta }}}^{T}}\pmb{\bar{D}\bar{\theta }}$ as
\begin{align} \label{1st_term_approx_radarRIS_V3}
  & {{{\pmb{\bar{\theta }}}}^{T}}\pmb{\bar{D}\bar{\theta }}\le  {{\left( {{{\pmb{\bar{\theta }}}}^{\left( t \right)}} \right)}^{T}}\mathbf{\bar{D}}{{{\pmb{\bar{\theta }}}}^{\left( t \right)}}+{{\left( {{{\pmb{\bar{\theta }}}}^{\left( t \right)}} \right)}^{T}}\left( \mathbf{\bar{D}}+{{{\mathbf{\bar{D}}}}^{T}} \right)\left( \pmb{\bar{\theta }}-{{{\pmb{\bar{\theta }}}}^{\left( t \right)}} \right) \nonumber \\
  & +\frac{{{\lambda }_{{\mathbf{\bar{D}}}}}}{2}{{\left( \pmb{\bar{\theta }}-{{{\pmb{\bar{\theta }}}}^{\left( t \right)}} \right)}^{T}}\left( \pmb{\bar{\theta }}-{{{\pmb{\bar{\theta }}}}^{\left( t \right)}} \right) =\frac{{{\lambda }_{{\mathbf{\bar{D}}}}}}{2}{{\pmb{\theta }}^{H}}\pmb{\theta }+\Re \left( {{\pmb{\theta }}^{H}}\mathbf{\tilde{d}} \right)+{{c}_{2r}},  
\end{align}
where $\mathbf{\tilde{d}}=\left[ \begin{matrix}
   {{\mathbf{I}}_{N}} & \sqrt{-1}  \\ \end{matrix}{{\mathbf{I}}_{N}} \right]\left( \mathbf{\bar{D}}+{{{\mathbf{\bar{D}}}}^{T}}-{{\lambda }_{{\mathbf{\bar{D}}}}}{{\mathbf{I}}_{2N}} \right){{\pmb{\bar{\theta }}}^{\left( t \right)}}$, ${{\lambda }_{{\mathbf{\bar{D}}}}}$ defines the maximum eigenvalue of $\mathbf{\bar{D}}+{{\mathbf{\bar{D}}}^{T}}$, and ${{c}_{2r}}=\frac{{{\lambda }_{{\mathbf{\bar{D}}}}}}{2}{{\left( {{{\pmb{\bar{\theta }}}}^{\left( t \right)}} \right)}^{T}}{{\pmb{\bar{\theta }}}^{\left( t \right)}}-{{\left( {{{\pmb{\bar{\theta }}}}^{\left( t \right)}} \right)}^{T}}{{\mathbf{\bar{D}}}^{T}}{{\pmb{\bar{\theta }}}^{\left( t \right)}}$. Therefore, (\ref{1st_term_approx_radarRIS_V2}) is transformed as  
\begin{equation} \label{approx2}
   {{\mathbf{v}}^{H}}\mathbf{Dv}\approx \frac{{{\lambda }_{{\mathbf{\bar{D}}}}}}{2}{{\pmb{\theta }}^{H}}\pmb{\theta }+\Re \left( {{\pmb{\theta }}^{H}}\mathbf{\tilde{d}} \right)+{{c}_{1}}, 
\end{equation}
where ${{c}_{1}}={{c}_{1r}}+{{c}_{2r}}$.  Substituting (\ref{approx2}) into (\ref{radar_RIS_v1}) leads to the following radar constraint 
\begin{equation} \label{radar_constr_RIS_final}
    \Re \left( {{\pmb{\theta }}^{H}}\mathbf{\tilde{d}} \right)+{{\pmb{\theta }}^{H}}\mathbf{\tilde{C}} \pmb{\theta }+{{\tilde{c}}_{0}}\le 0,
\end{equation}
where $\mathbf{\tilde{C}}={{\Gamma }_{r}}\mathbf{C}+\frac{{{\lambda }_{{\mathbf{\bar{D}}}}}}{2}{{\mathbf{I}}_{N}}$ and ${{\tilde{c}}_{0}}={{c}_{0}}+{{c}_{1}}$.

Regarding the active RIS power consumption constraint (\ref{RIS1_c6}), the first, second and third terms of the expression $\bar{\mathcal{P}}\left( \mathbf{\Phi }\right)$  can be rewritten as $\left\| \mathbf{\Phi GW} \right\|_{F}^{2}\ = {{\pmb{\theta }}^{H}}{{\mathbf{G}}_{C}}\pmb{\theta }$, $\left\| \mathbf{\Phi G}\mathbf{z} \right\|_{2}^{2}={{\pmb{\theta }}^{H}}{{\mathbf{G}}_{AN}}\pmb{\theta }$, and $\left\| \mathbf{\Phi } \right\|_{F}^{2}={{\pmb{\theta }}^{H}}{{\mathbf{I}}_{N}}\pmb{\theta }$, respectively, where ${{\mathbf{G}}_{AN}}=\mathrm{diag}\left( \mathbf{z}^{H}{{\mathbf{G}}^{H}} \right)\mathrm{diag}\left( \mathbf{G}\mathbf{z} \right)$ and ${{\mathbf{G}}_{C}}=\sum\limits_{i=0}^{K}{\mathrm{diag}\left( \mathbf{w}_{i}^{H}{{\mathbf{G}}^{H}} \right) \mathrm{diag}\left( \mathbf{G}{{\mathbf{w}}_{i}}  \right)}$. 
In addition, the remaining terms are transformed as: $\left\| \mathbf{\Phi }{{\mathbf{H}}_{RT}}\mathbf{\Phi G}\mathbf{z} \right\|_{2}^{2}={{\mathbf{v}}^{H}}{{\mathbf{J}}_{AN}}\mathbf{v}$, $\left\| \mathbf{\Phi }{{\mathbf{H}}_{RT}}\mathbf{\Phi } \right\|_{F}^{2}={{\mathbf{v}}^{H}}{{\mathbf{J}}_{R}}\mathbf{v}$, and $ \left\| \mathbf{\Phi }{{\mathbf{H}}_{RT}}\mathbf{\Phi GW} \right\|_{F}^{2}={{\mathbf{v}}^{H}}{{\mathbf{J}}_{RT}}\mathbf{v}$, in which ${{\mathbf{J}}_{RT}}=\mathrm{diag}\left( \mathbf{h}_{RT}^{T} \right)\mathrm{diag}\left( \mathbf{h}_{RT}^{*} \right) \otimes \mathbf{\tilde{G}}_{RT}^{H}\mathbf{W}{{\mathbf{W}}^{H}}{{\mathbf{\tilde{G}}}_{RT}}$,  ${{\mathbf{J}}_{AN}}=\mathrm{diag}\left( \mathbf{h}_{RT}^{T} \right)\mathrm{diag}\left( \mathbf{h}_{RT}^{*} \right) \otimes \mathbf{\tilde{G}}_{RT}^{H}\mathbf{R}{{\mathbf{\tilde{G}}}_{RT}}$, and ${{\mathbf{J}}_{R}}=\mathrm{diag}\left( \mathbf{h}_{RT}^{T} \right)\mathrm{diag}\left( \mathbf{h}_{RT}^{*} \right) \otimes \mathrm{diag}\left( \mathbf{h}_{RT}^{*} \right)\mathrm{diag}\left( \mathbf{h}_{RT}^{T} \right)$. As a result, the constraint (\ref{RIS1_c6}) can be formulated as 
\begin{equation} \label{RIS_pow_buget_v1}
    {{\mathbf{v}}^{H}}\mathbf{\tilde{J}v}+{{\pmb{\theta }}^{H}}\mathbf{\tilde{G}} \pmb{\theta }\le {{P}^{\mathrm{RIS}}},
\end{equation}
where $\mathbf{\tilde{J}}={{\zeta }^{2}}\left( {{\mathbf{J}}_{RT}}+{{\mathbf{J}}_{AN}}+\sigma _{R}^{2}{{\mathbf{J}}_{R}} \right)$, and $\mathbf{\tilde{G}}={{\mathbf{G}}_{C}}+{{\mathbf{G}}_{AN}}+2\sigma _{R}^{2}{{\mathbf{I}}_{N}}$. Similar to (\ref{1st_term_approx_radarRIS})--(\ref{1st_term_approx_radarRIS_V3}), the quartic term ${{\mathbf{v}}^{H}}\mathbf{\tilde{J}v}$ is approximated at the feasible point ${{\mathbf{v}}^{\left( t \right)}}$ via the STE. To this end, and after some algebraic manipulations, the constraint (\ref{RIS1_c6}) is denoted as 
\begin{equation} \label{RIS_pow_budget_final}
    {{\pmb{\theta }}^{H}}\overline{\overline{\mathbf{G}}}\pmb{\theta }+\Re \left( {{\pmb{\theta }}^{H}}\mathbf{\tilde{b}} \right)\le {{P}^{\mathrm{RIS}}}-{{c}_{2}}-{{c}_{3}},
\end{equation}
where $\overline{\overline{\mathbf{G}}}=\mathbf{\tilde{G}}+\frac{{{\lambda }_{{\mathbf{\bar{B}}}}}}{2}{{\mathbf{I}}_{N}}$, $ {{c}_{2}}={{\left( {{\mathbf{v}}^{\left( t \right)}} \right)}^{H}}\left( {{\lambda }_{{\mathbf{\tilde{J}}}}}{{\mathbf{I}}_{{{N}^{2}}}}-\mathbf{\tilde{J}} \right){{\mathbf{v}}^{\left( t \right)}}+{{\lambda }_{{\mathbf{\tilde{J}}}}}{{N}^{2}}{{\left( {{\beta }^{\max }} \right)}^{4}}$, ${{c}_{3}}=\frac{{{\lambda }_{{\mathbf{\bar{B}}}}}}{2}{{\left( {{{\pmb{\bar{\theta }}}}^{\left( t \right)}} \right)}^{T}}{{\pmb{\bar{\theta }}}^{\left( t \right)}}-{{\left( {{{\pmb{\bar{\theta }}}}^{\left( t \right)}} \right)}^{T}}{{\mathbf{\bar{B}}}^{T}}{{\pmb{\bar{\theta }}}^{\left( t \right)}}$, $\mathbf{\tilde{b}}=\left[{{\mathbf{I}}_{N}},  \sqrt{-1}{{\mathbf{I}}_{N}} \right]\left( \mathbf{\bar{B}}+{{{\mathbf{\bar{B}}}}^{T}}-{{\lambda }_{{\mathbf{\bar{B}}}}}{{\mathbf{I}}_{2N}} \right){{\pmb{\bar{\theta }}}^{\left( t \right)}}$. Moreover, ${\lambda }_{{\mathbf{\bar{B}}}}$ and ${\lambda }_{{\mathbf{\tilde{J}}}}$ define the maximum eigenvalue of the matrix $\mathbf{\bar{B}}+{{\mathbf{\bar{B}}}^{T}}$ and ${\mathbf{\tilde{J}}}$, respectively, where $\mathbf{\bar{B}}=\left[ 
   \Re \left\{ {\mathbf{\ddot{B}}} \right\}, \Im \left\{ {\mathbf{\ddot{B}}} \right\};
   \Im \left\{ {\mathbf{\ddot{B}}} \right\}, -\Re \left\{ {\mathbf{\ddot{B}}} \right\}  \right]$, and $\mathbf{\ddot{B}}$ represents the reconstructed matrix of the vector $\mathbf{b}$, i.e., $\mathbf{b}=\mathrm{vec}\left( {\mathbf{\ddot{B}}} \right)=2\left( \mathbf{\tilde{J}}-{{\lambda }_{{\mathbf{\tilde{J}}}}}{{\mathbf{I}}_{{{N}^{2}}}} \right){{\mathbf{v}}^{\left( t \right)}}$.

\subsubsection{EPSR reformulation in terms of $\pmb{\theta }$} \label{sec3-2-1}
Following (\ref{PrivRate_k}), the LHS of the constraint (\ref{RIS1_c1}) can be approximated around given points $\left\{ {{\mathbf{\Phi }}^{\left( t \right)}} \right\}$. Firstly, we rewrite the interference power on the private stream as  
\begin{multline}\label{privk_def1_RIS}
     \sum\limits_{i=1}^{K}{{{\left| {{\pmb{\theta }}^{H}}{{{\mathbf{\bar{H}}}}_{k}}\mathbf{w}_{i}^{\left( t \right)} \right|}^{2}}}+{{\left| {{\pmb{\theta }}^{H}}{{{\mathbf{\bar{H}}}}_{k}}\mathbf{z}^{\left( t \right)} \right|}^{2}}+\sigma _{R}^{2}{{\left\| {{\pmb{\theta }}^{H}}\mathrm{diag}\left( \mathbf{\bar{h}}_{R,k}^{H} \right) \right\|}^{2}} \\ 
    ={{\pmb{\theta }}^{H}}{{{\mathbf{\bar{H}}}}_{k}}\mathbf{\bar{A}\bar{H}}_{k}^{H}\pmb{\theta }+{{\pmb{\theta }}^{H}}{{{\mathbf{\bar{B}}}}_{k}}\pmb{\theta },  
\end{multline} 
where $\mathbf{\bar{A}}=\sum\limits_{i=1}^{K}{\mathbf{w}_{i}^{\left( t \right)}{{\left( \mathbf{w}_{i}^{\left( t \right)} \right)}^{H}}}+\mathbf{z}^{\left( t \right)}{{\left( \mathbf{z}^{\left( t \right)} \right)}^{H}}$ and ${{\mathbf{\bar{B}}}_{k}}=\sigma _{R}^{2}\mathrm{diag}\left( \mathbf{\bar{h}}_{R,k}^{H} \right)\mathrm{diag}\left( {{{\mathbf{\bar{h}}}}_{R,k}} \right)$.
Then, by employing the rate approximation of \cite{nasir2016secrecy}, the achievable rate of private stream $k$ can be represented as 
\begin{multline}  \label{priv_ratek_RIS}
    {{R}_{k}}={{\varepsilon }_{4,k}}+\frac{2\Re \left\{ {{\pmb{\theta }}^{H}}{{{\mathbf{\bar{H}}}}_{k}}\mathbf{w}_{k}^{\left( t \right)}{{\left( \mathbf{w}_{k}^{\left( t \right)} \right)}^{H}}\mathbf{\bar{H}}_{k}^{H}{{\pmb{\theta }}^{\left( t \right)}} \right\}}{\beta _{k}^{\left( t \right)}} \\
    -\frac{{{\left| \alpha _{k}^{\left( t \right)} \right|}^{2}}\left( {{\pmb{\theta }}^{H}} {{{\mathbf{\bar{H}}}}_{k}}\mathbf{\bar{A}\bar{H}}_{k}^{H}\pmb{\theta }+{{\pmb{\theta }}^{H}}{{{\mathbf{\bar{B}}}}_{k}}\pmb{\theta } \right)}{\beta _{k}^{\left( t \right)}\left( \beta _{k}^{\left( t \right)}+{{\left| \alpha _{k}^{\left( t \right)} \right|}^{2}} \right)},
\end{multline}
where 
\begin{align*}
    {{\varepsilon }_{4,k}}=&\ln \left( 1+\frac{{{\left| \alpha _{k}^{\left( t \right)} \right|}^{2}}}{\beta _{k}^{\left( t \right)}} \right)-\frac{{{\left| \alpha _{k}^{\left( t \right)} \right|}^{2}}}{\beta _{k}^{\left( t \right)}}-\frac{{{\left| \alpha _{k}^{\left( t \right)} \right|}^{2}}}{\beta _{k}^{\left( t \right)}\left( \beta _{k}^{\left( t \right)}+{{\left| \alpha _{k}^{\left( t \right)} \right|}^{2}} \right)}, \\
   \alpha _{k}^{\left( t \right)}=&{{\left( {{\pmb{\theta }}^{\left( t \right)}} \right)}^{H}}{{{\mathbf{\bar{H}}}}_{k}}\mathbf{w}_{k}^{\left( t \right)}, \\ 
    \beta _{k}^{\left( t \right)}=&\sum\limits_{i=1,i\ne k}^{K}{{{\left| {{\left( {{\pmb{\theta }}^{\left( t \right)}} \right)}^{H}}{{{\mathbf{\bar{H}}}}_{k}}\mathbf{w}_{i}^{\left( t \right)} \right|}^{2}}}+{{\left| {{\left( {{\pmb{\theta }}^{\left( t \right)}} \right)}^{H}}{{{\mathbf{\bar{H}}}}_{k}}\mathbf{z}^{\left( t \right)} \right|}^{2}}+{{\bar{\sigma }}_{R,k}}.  
\end{align*}
 
Next, we drive the closed form expression for the Eve's ergodic rate decoding private stream $k$ using the following lemma.  
\begin{lem} \label{lem2}
    The achievable ergodic rate of private steam $k$ at the Eve, in terms of $\pmb{\theta}$, can reformulated as
    \begin{multline} \label{ergo_PrivRate_Eve_finalRIS}
        -{{\mathbb{E}}_{{{\mathbf{h}}_{RE}}}}\left\{ {{\log }_{2}}\left( 1+{{\gamma }_{k, E}} \right) \right\}={{\varepsilon }_{13,Ek}}-\frac{{{\pmb{\theta }}^{H}}{{\mathbf{\Gamma }}_{E}}\pmb{\theta }}{1+\mu _{E}^{\left( t \right)}}-\frac{{{\pmb{\theta }}^{H}}{{\mathbf{C}}_{k}}\pmb{\theta }}{1-u_{E,k}^{\left( t \right)}}\\
        +\frac{2\Re \left\{ {{\left( {{\pmb{\theta }}^{\left( t \right)}} \right)}^{H}}{{\widehat{\mathbf{\Psi }}}_{E,k}}{{\left( \mathbf{\Xi }_{E,k}^{\left( t \right)} \right)}^{-1}}\widehat{\mathbf{\Psi }}_{E,k}^{H}\pmb{\theta } \right\}}{1-u_{E,k}^{\left( t \right)}},  
    \end{multline}
where 
\begin{align*}
    &{{\varepsilon }_{13,Ek}}=1-\ln \left( 1-u_{E,k}^{\left( t \right)} \right)-\ln \left( 1+\mu _{E}^{\left( t \right)} \right)-\frac{1}{1+\mu _{E}^{\left( t \right)}}\\
    &-\frac{{{\varepsilon }_{5,k}}}{1-u_{E,k}^{\left( t \right)}}-\frac{\Re \left\{ {{\left( {{\pmb{\theta }}^{\left( t \right)}} \right)}^{H}}{{\widehat{\mathbf{\Psi }}}_{E,k}}{{\left( \mathbf{\Xi }_{E,k}^{\left( t \right)} \right)}^{-1}}\widehat{\mathbf{\Psi }}_{E,k}^{H}{{\pmb{\theta }}^{\left( t \right)}} \right\}}{1-u_{E,k}^{\left( t \right)}},\\
    &u_{E,k}^{\left( t \right)}={{\left( {{\pmb{\theta }}^{\left( t \right)}} \right)}^{H}}{{\widehat{\mathbf{\Psi }}}_{E,k}}{{\left( \mathbf{\Xi }_{E,k}^{\left( t \right)} \right)}^{-1}}\widehat{\mathbf{\Psi }}_{E,k}^{H}{{\pmb{\theta }}^{\left( t \right)}}, \\ 
    &\mathbf{\Xi }_{E,k}^{\left( t \right)}={{\mathbf{I}}_{{\bar{N}}}}+\widehat{\mathbf{\Psi }}_{E,k}^{H}{{\pmb{\theta }}^{\left( t \right)}}{{\left( {{\pmb{\theta }}^{\left( t \right)}} \right)}^{H}}{{\widehat{\mathbf{\Psi }}}_{E,k}}, \text{~} \mu _{E}^{\left( t \right)}={{\left( {{\pmb{\theta }}^{\left( t \right)}} \right)}^{H}}{{\mathbf{\Gamma }}_{E}}{{\pmb{\theta }}^{\left( t \right)}}, \\
    &{{\varepsilon }_{5,k}}={{\left( {{\pmb{\theta }}^{\left( t \right)}} \right)}^{H}}{{\widehat{\mathbf{\Psi }}}_{E,k}}{{\left( \mathbf{\Xi }_{E,k}^{\left( t \right)} \right)}^{-1}}{{\left( \mathbf{\Xi }_{E,k}^{\left( t \right)} \right)}^{-1}}\widehat{\mathbf{\Psi }}_{E,k}^{H}{{\pmb{\theta }}^{\left( t \right)}}, \\
    &{{\mathbf{C}}_{k}}={{\widehat{\mathbf{\Psi }}}_{E,k}}{{\left( \mathbf{\Xi }_{E,k}^{\left( t \right)} \right)}^{-1}}\widehat{\mathbf{\Psi }}_{E,k}^{H}{{\pmb{\theta }}^{\left( t \right)}}{{\left( {{\pmb{\theta }}^{\left( t \right)}} \right)}^{H}}{{\widehat{\mathbf{\Psi }}}_{E,k}}{{\left( \mathbf{\Xi }_{E,k}^{\left( t \right)} \right)}^{-1}}\widehat{\mathbf{\Psi }}_{E,k}^{H},
\end{align*}
\end{lem}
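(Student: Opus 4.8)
The plan is to reproduce the argument of Lemma~\ref{lem1} (Appendix~\ref{appendix2}) with the roles of the optimization variables interchanged: here $\pmb{\theta}$ is the free variable while $\{\mathbf{W},\mathbf{z}\}$ are frozen at the iterate $t$, so the matrix $\widehat{\mathbf{\Psi}}_{E,k}$ now plays the part that $\widehat{\mathbf{\Omega}}_{E,k}$ did in the $\{\mathbf{W},\mathbf{z}\}$ formulation. First I would write $\log_{2}(1+\gamma_{k,E})=\log_{2}(\mathrm{Num})-\log_{2}(\mathrm{Den})$, where $\mathrm{Num}=\sum_{i=0}^{K}|\pmb{\theta}^{H}\bar{\mathbf{H}}_{E}\mathbf{w}_{i}|^{2}+|\pmb{\theta}^{H}\bar{\mathbf{H}}_{E}\mathbf{z}|^{2}+\bar{\sigma}_{E}$ is the total received power and $\mathrm{Den}=\mathcal{I}_{k,E}$ is the same sum with the $i=k$ term deleted, so that $-\mathbb{E}_{\mathbf{h}_{RE}}\{\log_{2}(1+\gamma_{k,E})\}=\mathbb{E}\{\log_{2}\mathrm{Den}\}-\mathbb{E}\{\log_{2}\mathrm{Num}\}$.

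To discharge the expectation over the unknown $\mathbf{h}_{RE}$, I would invoke the ergodic approximation $\mathbb{E}\{\log_{2}(\cdot)\}\approx\log_{2}(\mathbb{E}\{\cdot\})$ and push the expectation onto the quadratic forms. Using the diagonal identity $\pmb{\theta}^{H}\bar{\mathbf{H}}_{E}\mathbf{w}_{i}=\pmb{\theta}^{H}\mathrm{diag}(\mathbf{G}\mathbf{w}_{i})\bar{\mathbf{h}}_{RE}^{*}$, each averaged term $\mathbb{E}\{|\pmb{\theta}^{H}\bar{\mathbf{H}}_{E}\mathbf{w}_{i}|^{2}\}$ becomes a quadratic form $\pmb{\theta}^{H}\mathbf{M}_{i}\pmb{\theta}$ whose kernel is built from $\widehat{\mathbf{H}}_{RE}=\mathbb{E}\{\mathbf{h}_{RE}\mathbf{h}_{RE}^{H}\}$ supplied by Proposition~\ref{prop1}. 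Collecting the interference kernels into $\widehat{\mathbf{\Psi}}_{E,k}\widehat{\mathbf{\Psi}}_{E,k}^{H}$ and the full-power kernel into $\mathbf{\Gamma}_{E}$, the two averaged quantities read $\mathbb{E}\{\mathrm{Num}\}=\bar{\sigma}_{E}(1+\pmb{\theta}^{H}\mathbf{\Gamma}_{E}\pmb{\theta})$ and $\mathbb{E}\{\mathrm{Den}\}=\bar{\sigma}_{E}(1+\pmb{\theta}^{H}\widehat{\mathbf{\Psi}}_{E,k}\widehat{\mathbf{\Psi}}_{E,k}^{H}\pmb{\theta})$.

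The term $-\log_{2}\mathbb{E}\{\mathrm{Num}\}$ is convex-decreasing in the scalar $\pmb{\theta}^{H}\mathbf{\Gamma}_{E}\pmb{\theta}$, so its first-order (SCA) minorizer around $\mu_{E}^{(t)}=(\pmb{\theta}^{(t)})^{H}\mathbf{\Gamma}_{E}\pmb{\theta}^{(t)}$ gives the concave piece $-\pmb{\theta}^{H}\mathbf{\Gamma}_{E}\pmb{\theta}/(1+\mu_{E}^{(t)})$ together with the constants $-\ln(1+\mu_{E}^{(t)})$ and $-1/(1+\mu_{E}^{(t)})$ that are absorbed into $\varepsilon_{13,Ek}$. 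The delicate term is $+\log_{2}\mathbb{E}\{\mathrm{Den}\}$: since the log of a convex quadratic is neither convex nor concave, I would recast it via the Sylvester identity as $\log_{2}\det(\mathbf{I}_{\bar{N}}+\widehat{\mathbf{\Psi}}_{E,k}^{H}\pmb{\theta}\pmb{\theta}^{H}\widehat{\mathbf{\Psi}}_{E,k})$ and apply the matrix-inversion / matrix-fractional minorizer of \cite{dong2020secure,nasir2016secrecy}. With $\mathbf{\Xi}_{E,k}^{(t)}=\mathbf{I}_{\bar{N}}+\widehat{\mathbf{\Psi}}_{E,k}^{H}\pmb{\theta}^{(t)}(\pmb{\theta}^{(t)})^{H}\widehat{\mathbf{\Psi}}_{E,k}$ this produces the concave surrogate $+2\Re\{(\pmb{\theta}^{(t)})^{H}\widehat{\mathbf{\Psi}}_{E,k}(\mathbf{\Xi}_{E,k}^{(t)})^{-1}\widehat{\mathbf{\Psi}}_{E,k}^{H}\pmb{\theta}\}/(1-u_{E,k}^{(t)})-\pmb{\theta}^{H}\mathbf{C}_{k}\pmb{\theta}/(1-u_{E,k}^{(t)})$, where the rank-one structure of $\mathbf{\Xi}_{E,k}^{(t)}$ yields the closed form $1-u_{E,k}^{(t)}=(1+\|\widehat{\mathbf{\Psi}}_{E,k}^{H}\pmb{\theta}^{(t)}\|^{2})^{-1}$ through the matrix inversion lemma, and the remaining constants $-\ln(1-u_{E,k}^{(t)})$ and $-\varepsilon_{5,k}/(1-u_{E,k}^{(t)})$ collect into $\varepsilon_{13,Ek}$.

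Finally I would gather all data-independent quantities into $\varepsilon_{13,Ek}$ and match the quadratic kernels $\mathbf{\Gamma}_{E}$, $\mathbf{C}_{k}$ and the linear term against \eqref{ergo_PrivRate_Eve_finalRIS}, noting that $R_{k}$ from \eqref{priv_ratek_RIS} is already concave so that the whole constraint \eqref{RIS1_c1} becomes convex. I expect the main obstacle to be exactly the $+\log_{2}\mathbb{E}\{\mathrm{Den}\}$ interference term: one must both justify the ergodic replacement $\mathbb{E}\{\log\}\to\log\mathbb{E}$ (a Jensen-type step underwritten by Proposition~\ref{prop1}) and verify that the $\log\det$ minorizer is first-order tight at $\pmb{\theta}^{(t)}$ and remains concave in $\pmb{\theta}$, i.e.\ that the rank-one matrix $\mathbf{C}_{k}\succeq\mathbf{0}$ and $u_{E,k}^{(t)}<1$, which is precisely what the closed-form $1-u_{E,k}^{(t)}>0$ guarantees.
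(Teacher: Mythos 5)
Your proposal follows the paper's own proof (Appendix~\ref{appendix3}) essentially step for step: the same split of $-{{\mathbb{E}}_{{{\mathbf{h}}_{RE}}}}\left\{ {{\log }_{2}}\left( 1+{{\gamma }_{k,E}} \right) \right\}$ into interference and total-power logarithms as in \eqref{eve_Priv_RIS1}, the same push of the expectation onto quadratic kernels via Proposition~\ref{prop1}, the same matrix-inversion (Sherman--Morrison) reformulation of the interference log followed by the scalar-log linearization and the matrix-fractional minorizer of \cite{dong2020secure}, and the same collection of constants into ${{\varepsilon }_{13,Ek}}$. Your Sylvester-determinant rewriting and the closed form $1-u_{E,k}^{\left( t \right)}=\bigl( 1+\| \widehat{\mathbf{\Psi }}_{E,k}^{H}{{\pmb{\theta }}^{\left( t \right)}} \|^{2} \bigr)^{-1}$, which certifies $u_{E,k}^{\left( t \right)}<1$ and $\mathbf{C}_{k}\succeq \mathbf{0}$, are harmless (indeed useful) restatements of the same maneuver.

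The one genuine flaw is your handling of ${{\bar{\sigma }}_{E}}$. You write $\mathbb{E}\left\{ \mathrm{Num} \right\}={{\bar{\sigma }}_{E}}\left( 1+{{\pmb{\theta }}^{H}}{{\mathbf{\Gamma }}_{E}}\pmb{\theta } \right)$ and $\mathbb{E}\left\{ \mathrm{Den} \right\}={{\bar{\sigma }}_{E}}\bigl( 1+{{\pmb{\theta }}^{H}}{{\widehat{\mathbf{\Psi }}}_{E,k}}\widehat{\mathbf{\Psi }}_{E,k}^{H}\pmb{\theta } \bigr)$, i.e.\ you treat ${{\bar{\sigma }}_{E}}$ as a multiplicative constant, which is legitimate in Lemma~\ref{lem1} where $\pmb{\theta }$ is frozen at ${{\pmb{\theta }}^{\left( t \right)}}$ and ${{\bar{\sigma }}_{E}}$ reduces (after averaging) to a number. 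In this subproblem, however, ${{\bar{\sigma }}_{E}}=1+\sigma _{R}^{2}\| {{\pmb{\theta }}^{H}}\mathrm{diag}( \mathbf{\bar{h}}_{RE}^{H} ) \|^{2}$ is quadratic in the \emph{optimization variable} $\pmb{\theta }$ and random through ${{\mathbf{h}}_{RE}}$, so it can neither be pulled outside the expectation nor factored out of the sum: writing ${{\bar{\sigma }}_{E}}+Q\left( \pmb{\theta } \right)={{\bar{\sigma }}_{E}}\left( 1+Q\left( \pmb{\theta } \right)/{{\bar{\sigma }}_{E}} \right)$ destroys the quadratic structure that the whole SCA machinery relies on. The paper's proof avoids this by absorbing the amplification-noise term $\sigma _{R}^{2}\,\mathrm{diag}\bigl( {{{\mathbf{\bar{h}}}}_{RE}}\mathbf{\bar{h}}_{RE}^{H} \bigr)$ into the instantaneous kernel ${{\mathbf{\Psi }}_{E,k}}\mathbf{\Psi }_{E,k}^{H}$, so that both numerator and denominator take the form $1+\left( \text{quadratic in }\pmb{\theta } \right)$ with additive constant exactly $1$; after the expectation this is precisely why ${{\widehat{\mathbf{\Psi }}}_{E,k}}$ and ${{\mathbf{\Gamma }}_{E}}$ contain the RIS-noise blocks ${{\sigma }_{R}}{{\mathbf{\tilde{D}}}_{E,n}}$ alongside the precoder and AN blocks. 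Your subsequent surrogates (the constant $-1/\bigl( 1+\mu _{E}^{\left( t \right)} \bigr)$, the linear and quadratic pieces) are actually consistent with the constant being $1$, so the final expression you match is the right one; but as written the normalization step is ill-formed and, taken literally, would leave the amplification-noise contribution out of ${{\mathbf{\Gamma }}_{E}}$ and ${{\widehat{\mathbf{\Psi }}}_{E,k}}\widehat{\mathbf{\Psi }}_{E,k}^{H}$, yielding kernels that do not coincide with those in the lemma statement.
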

\begin{proof}
	See Appendix \ref{appendix3}.
\end{proof} 

According to (\ref{priv_ratek_RIS}) and (\ref{ergo_PrivRate_Eve_finalRIS}), the constraint (\ref{RIS1_c1}) can be recast as 
\begin{equation} \label{priv_rate_RISconstraint_final}
    {{\bar{\varepsilon }}_{5,k}}+2\Re \left\{ {{\pmb{\theta }}^{H}}{{\mathbf{d}}_{E,k}} \right\}-{{\pmb{\theta }}^{H}}{{\mathbf{D}}_{E,k}}\pmb{\theta }\ge \tau -{{r}_{k}},
\end{equation}
where 
\begin{align*}
    {{\mathbf{d}}_{E,k}}=&\frac{{{{\mathbf{\bar{H}}}}_{k}}\mathbf{w}_{k}^{\left( t \right)}{{\left( \mathbf{w}_{k}^{\left( t \right)} \right)}^{H}}\mathbf{\bar{H}}_{k}^{H}{{\pmb{\theta }}^{\left( t \right)}}}{\beta _{k}^{\left( t \right)}}+\frac{{{\widehat{\mathbf{\Psi }}}_{E,k}}{{\left( \mathbf{\Xi }_{E,k}^{\left( t \right)} \right)}^{-1}}\widehat{\mathbf{\Psi }}_{E,k}^{H}{{\pmb{\theta }}^{\left( t \right)}}}{1-u_{E,k}^{\left( t \right)}}, \\ 
    {{\mathbf{D}}_{E,k}}=&\frac{{{\left| \alpha _{k}^{\left( t \right)} \right|}^{2}}\left( {{{\mathbf{\bar{H}}}}_{k}}\mathbf{\bar{A}\bar{H}}_{k}^{H}+{{{\mathbf{\bar{B}}}}_{k}} \right)}{\beta _{k}^{\left( t \right)}\left( \beta _{k}^{\left( t \right)}+{{\left| \alpha _{k}^{\left( t \right)} \right|}^{2}} \right)}+\frac{{{\mathbf{C}}_{k}}}{1-u_{E,k}^{\left( t \right)}}+\frac{{{\mathbf{\Gamma }}_{E}}}{1+\mu _{E}^{\left( t \right)}}, \\ 
    {{{\bar{\varepsilon }}}_{5,k}}=&1+{{\varepsilon }_{4,k}}-\ln \left( 1-u_{E,k}^{\left( t \right)} \right)-\ln \left( 1+\mu _{E}^{\left( t \right)} \right) -\frac{1}{1+\mu _{E}^{\left( t \right)}} \\
    &-\frac{\Re \left\{ {{\left( {{\pmb{\theta }}^{\left( t \right)}} \right)}^{H}}{{\widehat{\mathbf{\Psi }}}_{E,k}}{{\left( \mathbf{\Xi }_{E,k}^{\left( t \right)} \right)}^{-1}}\widehat{\mathbf{\Psi }}_{E,k}^{H}{{\pmb{\theta }}^{\left( t \right)}} \right\}+{\varepsilon }_{5,k}}{1-u_{E,k}^{\left( t \right)}}.
\end{align*}

\subsubsection{ECSR reformulation in terms of $\pmb{\theta }$} \label{sec3-2-2}
Similarly, the rate difference ${{R}_{{{s}_{0}}, k}}-{{\mathbb{E}}_{{{\mathbf{h}}_{RE}}}}\left\{ {{R}_{{{s}_{0}}, E}} \right\}$ can be analyzed as in the following. The achievable rate of the common stream at UE $k$ can be expressed as   
\begin{multline} \label{com_rate_userk_RIS}
    {{R}_{{{s}_{0}},k}}={{{\varepsilon }}_{6,k}}+\frac{2\Re \left\{ {{\pmb{\theta }}^{H}}{{{\mathbf{\bar{H}}}}_{k}}\mathbf{w}_{0}^{\left( t \right)}{{\left( \mathbf{w}_{0}^{\left( t \right)} \right)}^{H}}\mathbf{\bar{H}}_{k}^{H}{{\pmb{\theta }}^{\left( t \right)}} \right\}}{\beta _{0k}^{\left( t \right)}} \\
    -\frac{{{\left| \alpha _{0k}^{\left( t \right)} \right|}^{2}}\left( {{\pmb{\theta }}^{H}}{{{\mathbf{\bar{H}}}}_{k}}{{{\mathbf{\bar{A}}}}_{0}}\mathbf{\bar{H}}_{k}^{H}\pmb{\theta }+{{\pmb{\theta }}^{H}}{{{\mathbf{\bar{B}}}}_{k}}\pmb{\theta } \right)}{\beta _{0k}^{\left( t \right)}\left( \beta _{0k}^{\left( t \right)}+{{\left| \alpha _{0k}^{\left( t \right)} \right|}^{2}} \right)},
\end{multline}
where 
\begin{align*}
    {{{\varepsilon }}_{6,k}}=&\ln \left( 1+\frac{{{\left| \alpha _{0k}^{\left( t \right)} \right|}^{2}}}{\beta _{0k}^{\left( t \right)}} \right)-\frac{{{\left| \alpha _{0k}^{\left( t \right)} \right|}^{2}}}{\beta _{0k}^{\left( t \right)}}-\frac{{{\left| \alpha _{0k}^{\left( t \right)} \right|}^{2}}}{\beta _{0k}^{\left( t \right)}\left( \beta _{0k}^{\left( t \right)}+{{\left| \alpha _{0k}^{\left( t \right)} \right|}^{2}} \right)}, \\
    \beta _{0k}^{\left( t \right)}=&\sum\limits_{{k}'=1}^{K}{{{\left| {{\left( {{\pmb{\theta }}^{\left( t \right)}} \right)}^{H}}{{{\mathbf{\bar{H}}}}_{k}}\mathbf{w}_{{{k}'}}^{\left( t \right)} \right|}^{2}}}+{{\left| {{\left( {{\pmb{\theta }}^{\left( t \right)}} \right)}^{H}}{{{\mathbf{\bar{H}}}}_{k}}\mathbf{z}^{\left( t \right)} \right|}^{2}} + {{\bar{\sigma }}_{R,k}}, \\
    \alpha _{0k}^{\left( t \right)}=&{{\left( {{\pmb{\theta }}^{\left( t \right)}} \right)}^{H}}{{{\mathbf{\bar{H}}}}_{k}}\mathbf{w}_{0}^{\left( t \right)}, \text{~}
    {{\mathbf{\bar{A}}}_{0}}=\sum\limits_{k=0}^{K}{\mathbf{w}_{k}^{\left( t \right)}{{\left( \mathbf{w}_{k}^{\left( t \right)} \right)}^{H}}}+\mathbf{z}^{\left( t \right)}{{\left( \mathbf{z}^{\left( t \right)} \right)}^{H}}.
\end{align*}
% {{\alpha }_{0k}}=&{{\pmb{\theta }}^{H}}{{{\mathbf{\bar{H}}}}_{k}}\mathbf{w}_{0}^{\left( t \right)},

%{{\beta }_{0k}}=&\sum\limits_{{k}'=1}^{K}{{{\left| {{\pmb{\theta }}^{H}}{{{\mathbf{\bar{H}}}}_{k}}\mathbf{w}_{{{k}'}}^{\left( t \right)} \right|}^{2}}}+{{\left| {{\pmb{\theta }}^{H}}{{{\mathbf{\bar{H}}}}_{k}}\mathbf{z}^{\left( t \right)} \right|}^{2}} \\
% & +\sigma _{R}^{2}{{\left\| {{\pmb{\theta }}^{H}}\mathrm{diag}\left( \mathbf{\bar{h}}_{R,k}^{H} \right) \right\|}^{2}} +1, \\ 

By adopting the same analysis of Lemma \ref{lem2}, the ergodic rate of the common stream at the Eve can be formulated as 
\begin{multline} \label{Eve_comRate_RIS}
    -{{\mathbb{E}}_{{{\mathbf{h}}_{RE}}}}\left\{ {{\log }_{2}}\left( 1+{{\gamma }_{{{s}_{0}}, E}} \right) \right\}={{\varepsilon }_{14,Ek}}-\frac{{{\pmb{\theta }}^{H}}{{\mathbf{P}}_{0E}}\pmb{\theta }}{1-u_{0E}^{\left( t \right)}}-\frac{{{\pmb{\theta }}^{H}}{{\mathbf{\Gamma }}_{E}}\pmb{\theta }}{1+\mu _{E}^{\left( t \right)}}\\
    +\frac{2\Re \left\{ {{\left( {{\pmb{\theta }}^{\left( t \right)}} \right)}^{H}}{{\widehat{\mathbf{\Psi }}}_{0E}}{{\left( \mathbf{\Xi }_{0E}^{\left( t \right)} \right)}^{-1}}\widehat{\mathbf{\Psi }}_{0E}^{H}\pmb{\theta } \right\}}{1-u_{0E}^{\left( t \right)}},
\end{multline}
where the block matrix ${{\widehat{\mathbf{\Psi }}}_{0E}}=\sigma _{E}^{-1}\left[{{{\pmb{\overset{\scriptscriptstyle\frown}{E}}}}_{E}},  {{{\pmb{\overset{\scriptscriptstyle\frown}{F}}}}_{0E}},  {{{\pmb{\overset{\scriptscriptstyle\frown}{Z}}}}_{E,\mathrm{AN}}} \right]\in {{\mathbb{C}}^{N\times \tilde{N}}}$, with $\tilde{N}={{N}^{2}}+N+NK$. Moreover, we have ${{\pmb{\overset{\scriptscriptstyle\frown}{F}}}_{0E}}=\left[{{\mathbf{\tilde{D}}}_{E,1}} \mathbf{G\tilde{W}}_{0}^{\left( t \right)}, \cdots, {{\mathbf{\tilde{D}}}_{E,N}} \mathbf{G\tilde{W}}_{0}^{\left( t \right)}   \right]\in {{\mathbb{C}}^{N\times NK}}$, $\mathbf{\tilde{W}}_{0}^{\left( t \right)}=\left[\mathbf{w}_{1}^{\left( t \right)}, \cdots, \mathbf{w}_{K}^{\left( t \right)}  \right]$, and ${{\pmb{\overset{\scriptscriptstyle\frown}{Z}}}_{E,\mathrm{AN}}}=\left[{{\mathbf{\tilde{D}}}_{E,1}}\mathbf{Gz}^{\left( t \right)}, \cdots, {{\mathbf{\tilde{D}}}_{E,N}}\mathbf{Gz}^{\left( t \right)} \right]\in {{\mathbb{C}}^{N\times N}}$. Furthermore, 
\begin{align*} 
    {{\varepsilon }_{14,Ek}}=&1-\ln \left( 1-u_{0E}^{\left( t \right)} \right)-\ln \left( 1+\mu _{E}^{\left( t \right)} \right)-\frac{1}{1+\mu _{E}^{\left( t \right)}} \\
    &-\frac{\Re \left\{ {{\left( {{\pmb{\theta }}^{\left( t \right)}} \right)}^{H}}{{\widehat{\mathbf{\Psi }}}_{0E}}{{\left( \mathbf{\Xi }_{0E}^{\left( t \right)} \right)}^{-1}}\widehat{\mathbf{\Psi }}_{0E}^{H}{{\pmb{\theta }}^{\left( t \right)}} \right\}+{{\varepsilon }_{8}}}{1-u_{0E}^{\left( t \right)}}, \\
    {{\varepsilon }_{8}}=&{{\left( {{\pmb{\theta }}^{\left( t \right)}} \right)}^{H}}{{\widehat{\mathbf{\Psi }}}_{0E}}{{\left( \mathbf{\Xi }_{0E}^{\left( t \right)} \right)}^{-1}}{{\left( \mathbf{\Xi }_{0E}^{\left( t \right)} \right)}^{-1}}\widehat{\mathbf{\Psi }}_{0E}^{H}{{\pmb{\theta }}^{\left( t \right)}}, \\
    u_{0E}^{\left( t \right)}=&{{\left( {{\pmb{\theta }}^{\left( t \right)}} \right)}^{H}}{{\widehat{\mathbf{\Psi }}}_{0E}}{{\left( \mathbf{\Xi }_{0E}^{\left( t \right)} \right)}^{-1}}\widehat{\mathbf{\Psi }}_{0E}^{H}{{\pmb{\theta }}^{\left( t \right)}}, \\
    \mathbf{\Xi }_{0E}^{\left( t \right)}=&{{\mathbf{I}}_{{\tilde{N}}}}+\widehat{\mathbf{\Psi }}_{0E}^{H}{{\pmb{\theta }}^{\left( t \right)}}{{\left( {{\pmb{\theta }}^{\left( t \right)}} \right)}^{H}}{{\widehat{\mathbf{\Psi }}}_{0E}}, \\
    {{\mathbf{P}}_{0E}}=&{{\widehat{\mathbf{\Psi }}}_{0E}}{{\left( \mathbf{\Xi }_{0E}^{\left( t \right)} \right)}^{-1}}\widehat{\mathbf{\Psi }}_{0E}^{H}{{\pmb{\theta }}^{\left( t \right)}}{{\left( {{\pmb{\theta }}^{\left( t \right)}} \right)}^{H}}{{\widehat{\mathbf{\Psi }}}_{0E}}{{\left( \mathbf{\Xi }_{0E}^{\left( t \right)} \right)}^{-1}}\widehat{\mathbf{\Psi }}_{0E}^{H}.
\end{align*}

Then, we combine the results of (\ref{com_rate_userk_RIS}) and (\ref{Eve_comRate_RIS}) to re-express the constraint (\ref{RIS1_c2}) as follows 
\begin{equation} \label{common_secRate_RISConstraint_final}
    {{\varepsilon }_{7,k}}-\frac{{{\varepsilon }_{8}}}{1-u_{0E}^{\left( t \right)}}+2\Re \left\{ {{\pmb{\theta }}^{H}}{{\pmb{\upsilon }}_{0E,k}} \right\}-{{\pmb{\theta }}^{H}}{{\mathbf{V}}_{0E,k}}\pmb{\theta }\ge \sum\limits_{k=1}^{K}{{{r}_{k}}},
\end{equation}
where 
\begin{align*}
    {{\pmb{\upsilon }}_{0E,k}}=&\frac{{{{\mathbf{\bar{H}}}}_{k}}\mathbf{w}_{0}^{\left( t \right)}{{\left( \mathbf{w}_{0}^{\left( t \right)} \right)}^{H}}\mathbf{\bar{H}}_{k}^{H}{{\pmb{\theta }}^{\left( t \right)}}}{\beta _{0k}^{\left( t \right)}}+\frac{{{\widehat{\mathbf{\Psi }}}_{0E}}{{\left( \mathbf{\Xi }_{0E}^{\left( t \right)} \right)}^{-1}}\widehat{\mathbf{\Psi }}_{0E}^{H}{{\pmb{\theta }}^{\left( t \right)}}}{1-u_{0E}^{\left( t \right)}}, \\ 
    {{\mathbf{V}}_{0E,k}}=&\frac{{{\mathbf{\Gamma }}_{E}}}{1+\mu _{E}^{\left( t \right)}}+\frac{{{\left| \alpha _{0k}^{\left( t \right)} \right|}^{2}}\left( {{{\mathbf{\bar{H}}}}_{k}}{{{\mathbf{\bar{A}}}}_{0}}\mathbf{\bar{H}}_{k}^{H}+{{\mathbf{B}}_{k}} \right)}{\beta _{0k}^{\left( t \right)}\left( \beta _{0k}^{\left( t \right)}+{{\left| \alpha _{0k}^{\left( t \right)} \right|}^{2}} \right)}+\frac{{{\mathbf{P}}_{0E}}}{1-u_{0E}^{\left( t \right)}}, \\
    {{\varepsilon }_{7,k}}=&{{\varepsilon }_{6,k}}-\ln \left( 1-u_{0E}^{\left( t \right)} \right) -\ln \left( 1+\mu _{E}^{\left( t \right)} \right)-\frac{1}{1+\mu _{E}^{\left( t \right)}} \\
    &-\frac{\Re \left\{ {{\left( {{\pmb{\theta }}^{\left( t \right)}} \right)}^{H}}{{\widehat{\mathbf{\Psi }}}_{0E}}{{\left( \mathbf{\Xi }_{0E}^{\left( t \right)} \right)}^{-1}}\widehat{\mathbf{\Psi }}_{0E}^{H}{{\pmb{\theta }}^{\left( t \right)}} \right\}}{1-u_{0E}^{\left( t \right)}} + 1.
\end{align*}

Based on the preceding analysis, the active RIS and rate splitting sub-problem can be reformulated as 
\begin{subequations}\label{ActRIS_opt_final}
	\begin{align}
		\underset{\textbf{r} \ge \textbf{0},\pmb{\theta}}{\textrm{Maximize}}\hspace{0.15cm} & \tau, \label{RIS_obj_final} \\
		\textrm{s.t.: } 
        & (\ref{priv_rate_RISconstraint_final}), (\ref{common_secRate_RISConstraint_final}),  (\ref{radar_constr_RIS_final}), (\ref{RIS1_c5}), \text{ and } (\ref{RIS_pow_budget_final})   \label{RIS_constraints}. 
		\end{align}
\end{subequations}
The problem (\ref{ActRIS_opt_final}) is in convex form. Hence, it can be solved using standard optimization tools such as CVX \cite{grant2014cvx}.

\subsection{Radar receive beamforming optimization} \label{sec3-3}
Finally, we optimize the radar receive beamforming given that $\mathbf{z}$, $\mathbf{W}$, $\mathbf{\Phi }$, and $\mathbf{r}$ are fixed. Since the radar SNR, $\gamma$, only depends on $\mathbf{u}$, it can be optimized by maximizing the radar SNR as follows. 
\begin{equation}\label{RadarRX_BF_opt_final}
		\underset{\textbf{u}}{\textrm{Maximize}}\hspace{0.15cm}  \gamma =\frac{{{\mathbf{u}}^{H}}{{{\mathbf{\tilde{H}}}}_{T}}\mathbf{u}}{{{\mathbf{u}}^{H}}{{{\mathbf{\tilde{H}}}}_{0}}\mathbf{u}}, \quad \textrm{s.t.: }  \left\| \mathbf{u} \right\|=1, 
\end{equation}
where ${{\mathbf{\tilde{H}}}_{T}}={{\zeta }^{2}}{{\mathbf{H}}_{T}}\mathbf{\Pi }\mathbf{H}_{T}^{H}$ and ${{\mathbf{\tilde{H}}}_{0}}={{\zeta }^{2}}\sigma _{R}^{2}{{\mathbf{H}}_{0}}\mathbf{H}_{0}^{H}+\sigma _{R}^{2}{{\mathbf{H}}_{1}}\mathbf{H}_{1}^{H}+{{\sigma }^{2}}{{\mathbf{I}}_{M}}$. Clearly, the  problem (\ref{RadarRX_BF_opt_final}) is classified as a generalized Rayleigh quotient problem that has a closed-form solution. This solution can be found by identifying the eigenvectors corresponding to the maximum eigenvalues of $\mathbf{\tilde{H}}_{0}^{-1}{{\mathbf{\tilde{H}}}_{T}}$.

\subsection{Overall Optimization and computational complexity} \label{sec3-4}
Finally, we have transformed (\ref{opt_problem_orginal}) into a tractable problem, allowing the three subproblems to be alternatively optimized.
% based on their respective convex solutions. The complete procedure is outlined in Algorithm \ref{alg1}. 
%The algorithm's convergence is guaranteed by the non-decreasing nature of the objective and the boundedness of (\ref{opt1_c3}), (\ref{opt1_c4}) and (\ref{opt1_c5}). These factors will be clearly demonstrated in the simulation section.
The complexity of the MM when applied for the problem (\ref{RIS1_opt}) is approximately $\mathcal{O}\left(T_{\mathrm{MM}}N^2 + N^3\right)$, where $T_{\mathrm{MM}}$ refers to the number of iterations for the  convergence of RIS phase shift problem  \cite{pan2020intelligent}. There, the overall complexity is given by $\mathcal{O}\left(T_{0}\left(M^2(K+1)^2N + M(K+1)^2N^2 + T_{\mathrm{MM}}N^2 + N^3\right)\right)$, where $T_{0}$ symbolizes the number of iterations for algorithm convergence \cite{hao2022securing}.

\begin{algorithm}[t]
	\caption{Joint optimization algorithm for active RIS-enabled secure RSMA-ISAC}
	\label{alg1}
	\begin{algorithmic}[1]
		\STATE{Set the initial values $\mathbf{W}^{\left( 0 \right)}$, $\mathbf{u}^{\left( 0 \right)}$ $\mathbf{z}^{\left( 0 \right)}$, $\pmb{\theta}^{\left( 0 \right)}$, and iteration index $t=1$.} 
		\REPEAT
		\STATE{Given $\pmb{\theta}^{\left( t-1 \right)}$, find $\mathbf{W}^{(t)}$, $\mathbf{z}^{\left( t \right)}$ $\mathbf{r}^{(t,1)}$ via solving (\ref{BF1_opt_final}).}
		\STATE{Optimize $\pmb{\theta}^{\left( t \right)}$, $\mathbf{r}^{(t,2)}$ via solving (\ref{ActRIS_opt_final}).}
		\STATE{Find $\mathbf{u}^{\left( t \right)}$ using (\ref{RadarRX_BF_opt_final})}
		\STATE{Update $ t=t+1 $.} 
		\UNTIL{convergence.}
	\end{algorithmic} 
\end{algorithm}

\section{Simulation Results and Discussion}\label{sec4}
%system setup
In this section, we present the results of the proposed active RIS-enabled RSMA (ARIS-RSMA) for robust secure ISAC operation, where the performance is averaged over 1000 channel realizations. In the proposed system, one ISAC-BS with $M=8$ antennas and an active RIS of $N=16$ elements are employed to secure $K = 3$ users against an unknown Eve. The locations of the BS and RIS are set as $(0, 0)$, $(50, 0)$ in meters (m), respectively. The radar target is positioned $5$m away from the RIS with relative azimuth angle of ${\pi }/{4}$. The communication users are assumed uniformly distributed within a disk of radius $5$ m that is centered at $(40, 20)$ m. Moreover, the possible Eve locations are determined by ${{\mathcal{R}}_{E}}=\left\{ {d}_{RE} = [30, 35] \text{ m}, {\theta }_{E} = [{\pi }/{6}, {\pi }/{3}] \right\}$ w.r.t the RIS viewpoint. For the trapezoidal integration of (\ref{eve_esti_CH_ver3}), we set $N_{\theta_{g}}=500$. The Rician factor and reference distance path-loss are assumed  $\kappa = 3$ dB and $\ell=-30$ dB, respectively \cite{zuo2023exploiting}. In addition, the path-loss exponents are denoted as follows: BS-RIS $\alpha_{BR} = 2$, RIS-UE $\alpha_{RU} = 2.2$, RIS-target $\alpha_{RT} = 2$, RIS-Eve $\alpha_{RE} = 2.2$. The noise variance is set as $\sigma ^{2} = -120$ dBm at the BS, and $\sigma _{k}^{2} = \sigma _{E}^{2} = \sigma _{R}^{2} = -80$ dBm for UEs, EVe, and RIS. Additionally, the SNR requirement for target sensing is set as $\gamma_r=1$ dB. The active RIS power budget is fixed at $20$ dBm. The RCS parameter is set as ${\zeta }^{2}=1$.

% convergence
\begin{figure}[t]
\centering{\includegraphics[width=\columnwidth]{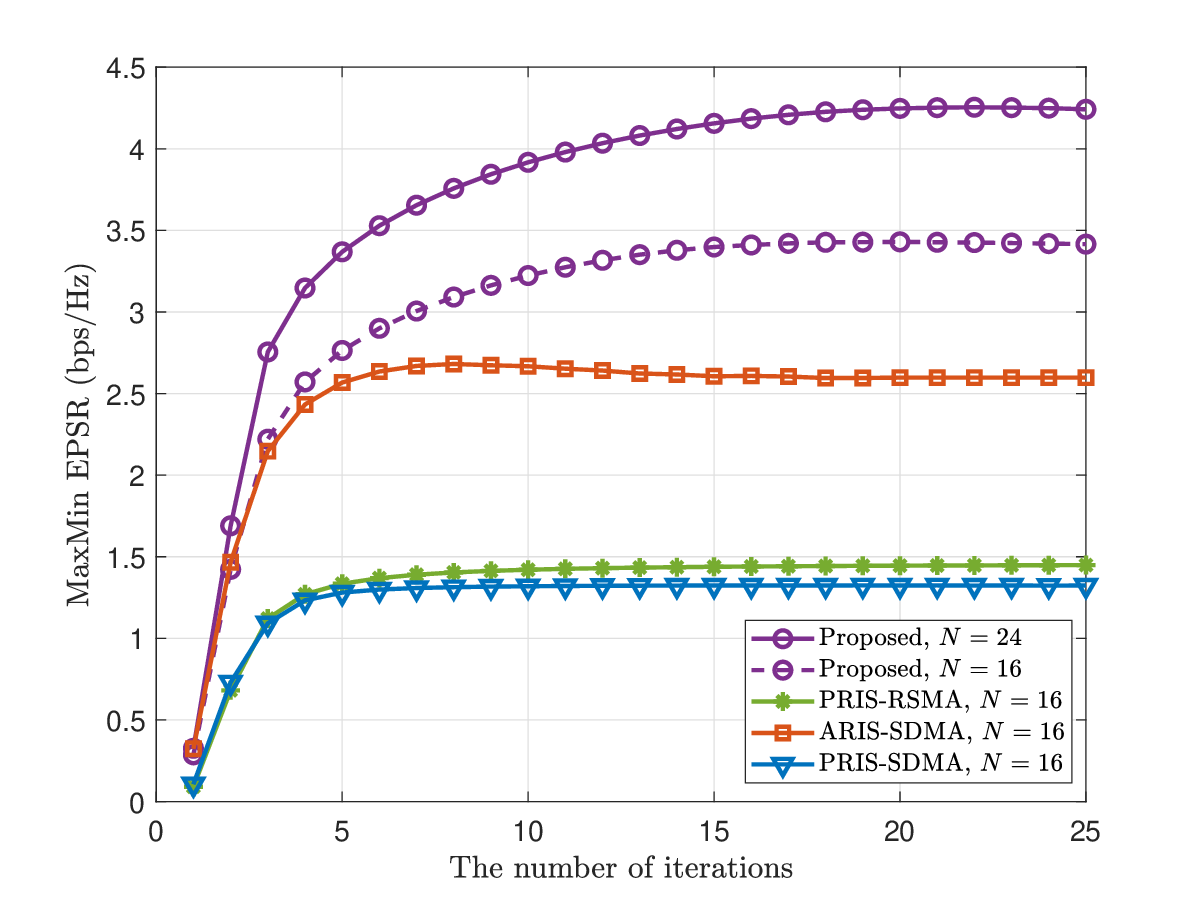}}
\caption{Convergence of the proposed solution versus the benchmarks.}\label{converg}
\end{figure}
Firstly, the MaxMin EPSR convergence of the proposed Algorithm \ref{alg1} is compared with the benchmark schemes, PRIS-RSMA, ARIS-SDMA, and PRIS-SDMA, in Fig.~\ref{converg}. It can be seen that the proposed scheme exhibits superior performance compared to the ARIS-SDMA by $30.77$\%. Additionally, it surpasses the passive RIS-based scenarios, i.e., the PRIS-RSMA and PRIS-SDMA, by $142.85$\% and $161.5$\%, respectively. However, the active RIS-based systems require $15-20$ iterations more than that needed by the passive RIS-based schemes due to the additional power budget constraint imposed by the active RIS. Furthermore, the PRIS-RSMA and PRIS-SDMA show a comparable performance, whereas a notable performance gap exists between the proposed ARIS-RSMA and the ARIS-SDMA. A possible reason is that the amplification capability of the active RIS not only boosts the ECSR and AN impact on the Eve, but it also potentially competes with the multiplicative fading effect of the passive RIS.

% transmited power@BS and secrecy performance
\begin{figure}[t]
\centering{\includegraphics[width=\columnwidth]{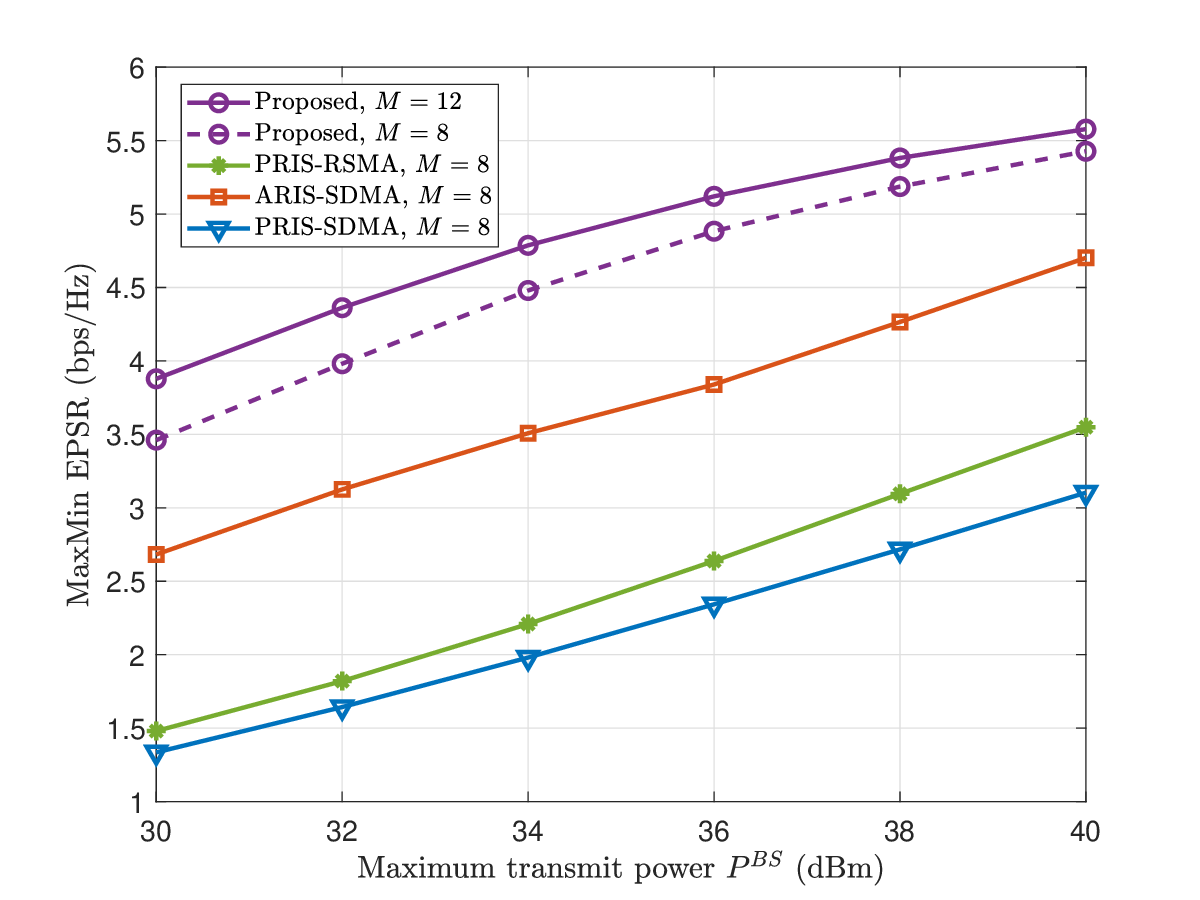}}
\caption{The achievable MaxMin EPSR versus maximum transmission power at BS.}\label{TX_power_SR}
\end{figure}
Next, in Fig.~\ref{TX_power_SR}, we examine the impact of the BS budget on the MaxMin EPSR performance. The performance improves with $P^{BS}$ across all the schemes. When $P^{BS}$ is at $36$ dB, the proposed ARIS-RSMA  exhibits a gain of $28.42$\% compared to ARIS-SDMA. Moreover, it is observed that the performance of the proposed scheme is further improved by increasing the number of transmitting antennas to 12. On the other hand, the performance gap between PRIS-RSMA and PRIS-SDMA exhibits only a limited increase from $11.28$\% to $14.2$\% when $P^{BS}$ changes from $30$ dBm to $40$ dBm, where the BS beamformers play the main role in substituting the multiplicative path-loss of the passive RIS. 
%Moreover, we can notice that the proposed solution can be further improved with increasing the number of transmitting antennas to 12. However, the rate of improvement between $M=8$ and $M=12$ diminishes as $P^{BS}$ increase, owing to the RSMA partially decodes the additional interference from private streams and AN.     

% transmited power@BS and output radar SNR 
\begin{figure}[t]
\centering{\includegraphics[width=\columnwidth]{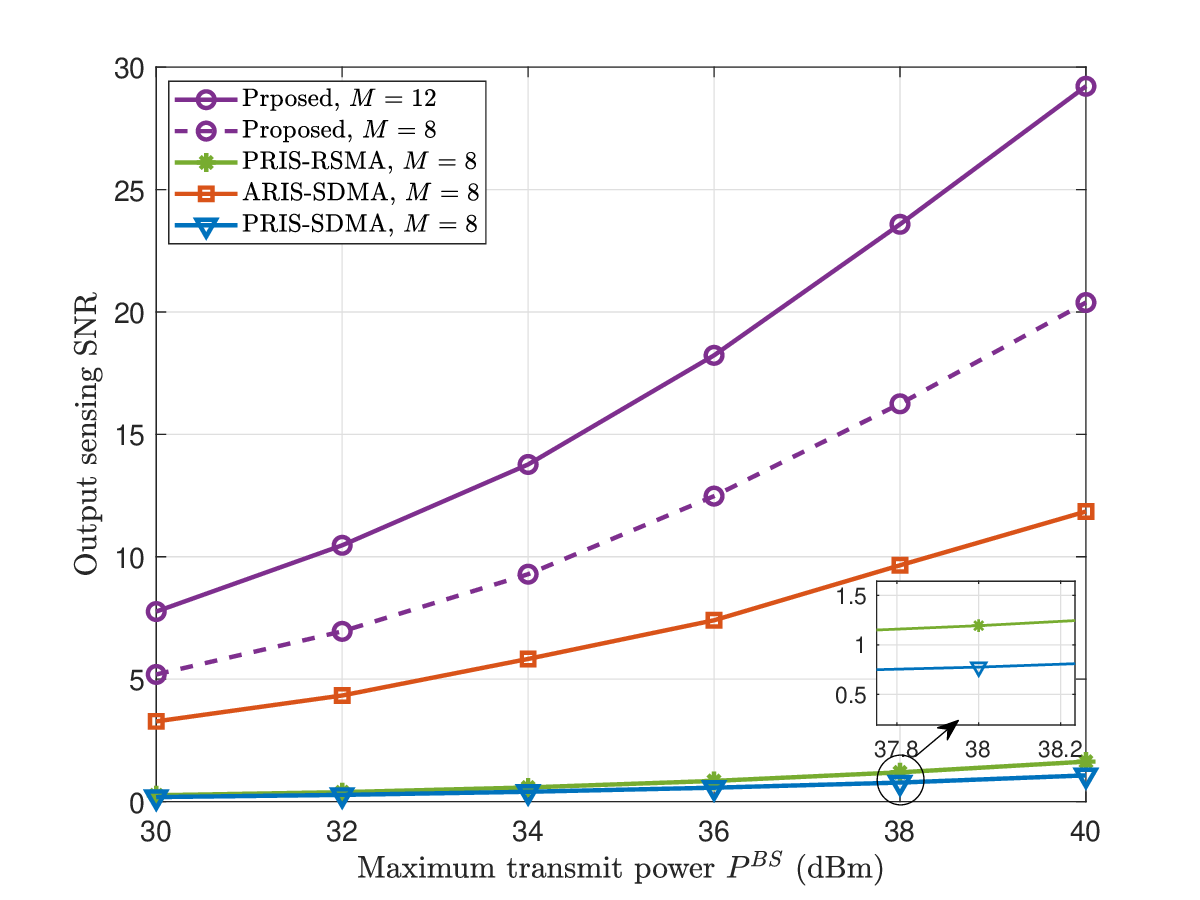}}
\caption{The achievable radar output SNR versus maximum BS power budget.}\label{TX_power_radar}
\end{figure}
We then investigate the impact of BS power budget on the radar output SNR in Fig.~\ref{TX_power_radar}, where it is observed that the sensing SNR grows exponentially with $P^{BS}$. Importantly, the gap between the proposed scheme and the benchmark schemes significantly increases with $P^{BS}$ as the actual transmission power $\sum\limits_{i=0}^{K}{{{\left\| {{\mathbf{w}}_{i}} \right\|}^{2}}}+{{\left\| \mathbf{z} \right\|}^{2}}$ monotonically enhances the sensing performance of (\ref{Radar_outSNR}). Although, the radar SNR of the active RIS-based scenarios appears significantly greater than that of the passive RIS-based systems, the enhancement rate of the radar SNR gain remains almost the same across the range of $P^{BS}$. To elaborate, the performance gain of the proposed scheme over ARIS-SDMA increases from $58.7$\% at $P^{BS}=30$ dBm to $72.13$\%  at $P^{BS}=40$, i.e., $13.43$\% of gain difference across $P^{BS}$ range. Meanwhile, the difference is amost $14.8$\% for PRIS-RSMA and PRIS-SDMA. This is explained by the fact that radar SNR values for active RIS-based schemes are scaled up by active RIS amplification coefficients.

% number of RIS element
\begin{comment} 
\begin{figure}[t]
    \centering
    \subfloat[MaxMin EPSR versus number of RIS elements.]{\includegraphics[width=\columnwidth]{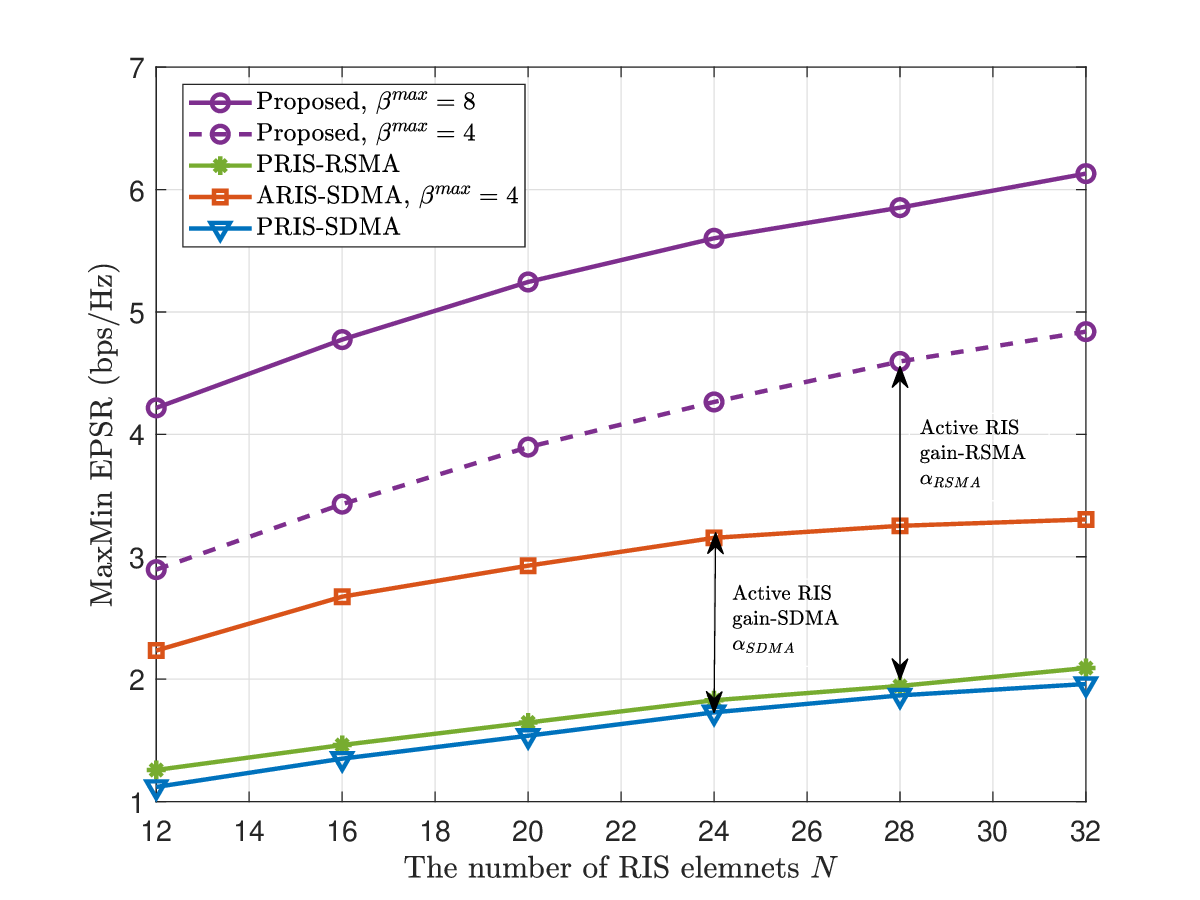}\label{RISelem_SR}}
    \vfill
    \subfloat[Radar SNR output versus number of RIS elements.]{\includegraphics[width=\columnwidth]{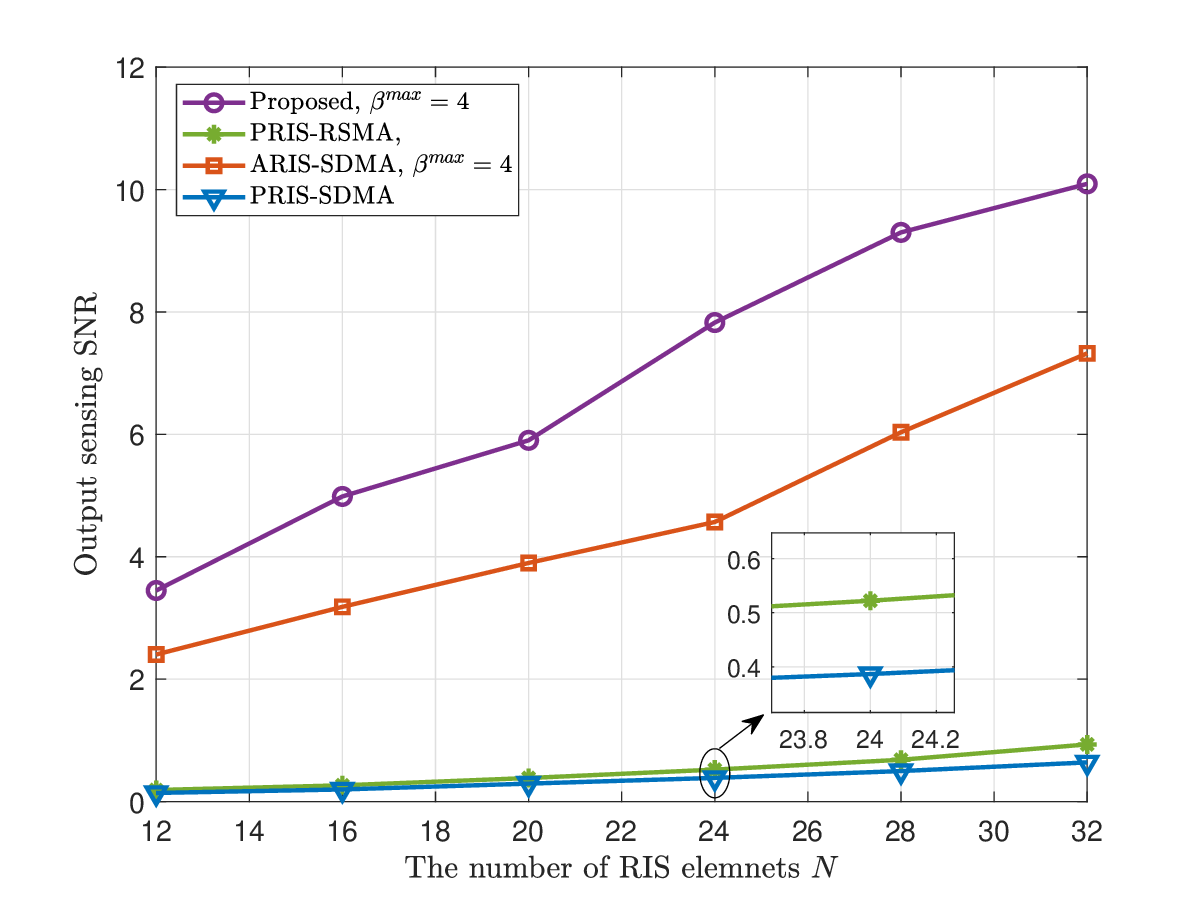}\label{RISelem_radar}}
    \caption{Number of RIS elements impact on ergodic secrecy rate and radar SNR output.}
    \label{RIS_elem}
\end{figure}
\end{comment}

\begin{figure}[t]


    \centering
    \subfloat[MaxMin EPSR versus $N$.]{\includegraphics[width=0.5\columnwidth, height=0.6\columnwidth]{figures/Secrecy_vs_RISelem_M8.eps}\label{RISelem_SR}}
    \hfill
    \subfloat[Radar SNR output versus $N$.]{\includegraphics[width=0.5\columnwidth, height=0.6\columnwidth]{figures/RadarOutSNR_vs_RISelem_M8.eps}\label{RISelem_radar}}
    \caption{Number of RIS elements impact on ergodic secrecy rate and radar SNR output.}
    \label{RIS_elem}
\end{figure}
To further illustrate the effect of the RIS in improving the PLS and sensing performance, MaxMin-EPSR and radar output SNR are depicted in Fig.~\ref{RISelem_SR} and Fig.~\ref{RISelem_radar}, respectively. It is clear that MaxMin-EPSR and radar output SNR increase with $N$. The active and passive RIS gains come from the fact that with a larger number of elements, more signal enhancement can be achieved provided that the RIS reflection coefficients are optimized appropriately. In Fig.~\ref{RISelem_SR}, the active RIS performance gap of RSMA-based schemes, $\alpha_{RSMA}$, achieves a better score than its SDMA counterpart $\alpha_{SDMA}$. For instance, for $N=28$, $\alpha_{RSMA}$ is 136.6\%, which is an improvement over the 74.7\% achieved by $\alpha_{SDMA}$. An interesting observation is that the ARIS-SDMA scheme operating at $N=20$ shows a comparable performance to ARIS-RSMA at $N=12$, which implies that ARIS-SDMA could yield favorable PLS despite employing a larger RIS array. However, this advantage might come at the expense of increased power consumption. %indicating that while it could serve as an alternative to the proposed scheme, careful consideration of power requirements is necessary.  

% RIS position
\begin{figure}[t]
\centering{\includegraphics[width=\columnwidth]{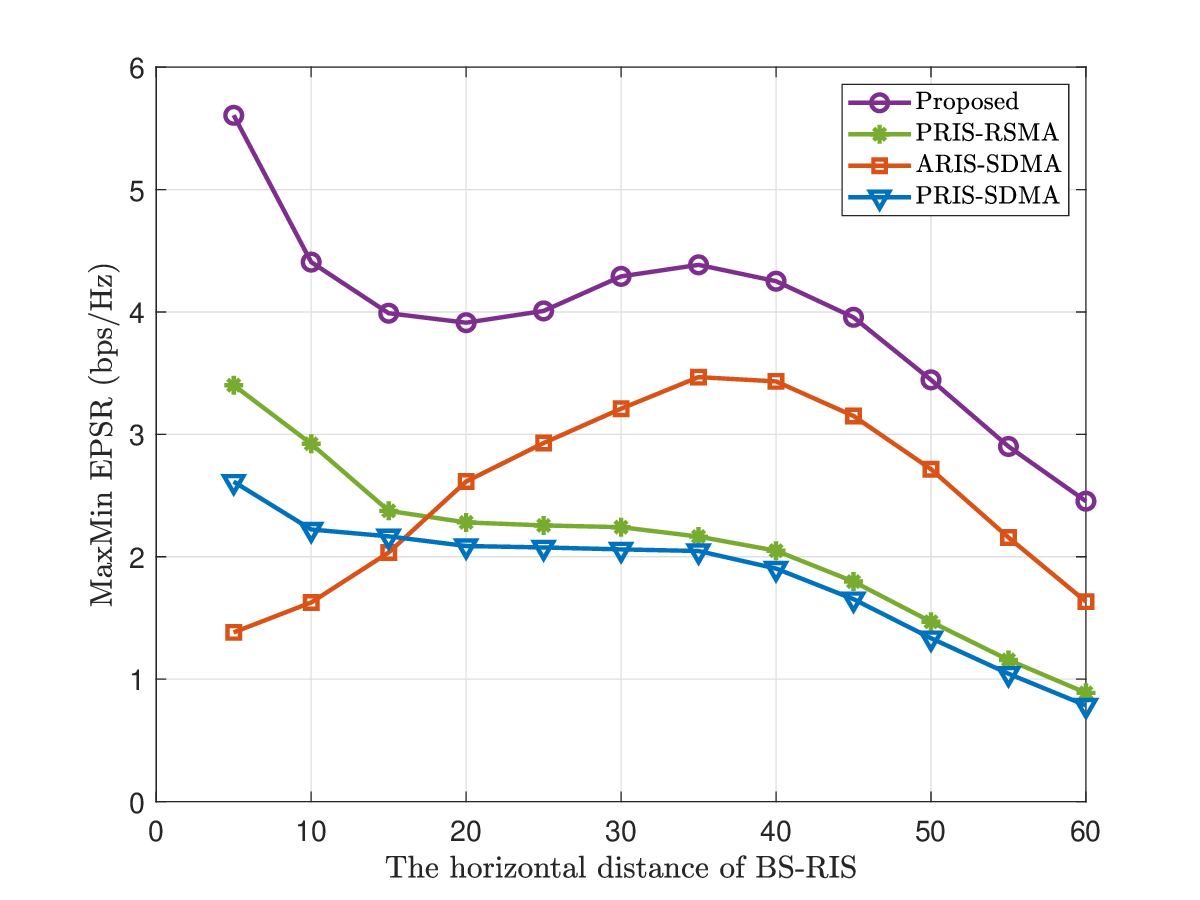}}
\caption{The achievable MaxMin EPSR versus the horizontal BS-RIS distance.}\label{RIS_pos}
\end{figure}
Building on the pivotal role of the RIS in enhancing the suggested system, Fig.~\ref{RIS_pos} illustrates the impact of varying the horizontal BS-RIS distance on the MaxMin EPSR performance. Across all the scenarios, the MaxMin EPSR performance decreases as the RIS shifts from the area of UEs towards the domain of the Eve ($d_{BR} > 35$ m), where the Eve's interception of the intended signals is enhanced by RIS array gain. For the proposed  ARIS-RSMA, the MaxMin EPSR decreases when the RIS moves away from BS ($d_{BR} > 5$ m) and increases again once it approaches the users ($d_{BR} > 20$ m). In contrast, for the ARIS-SDMA scheme, the MaxMin EPSR performance consistently increases as the RIS approaches the UE region, i.e., $5 \leq d_{BR} \leq 35$. Notably, the presence of the RIS in proximity to the BS does not yield performance enhancements comparable to those observed with alternative schemes. This can be credited to the active RIS's capability to enhance the intended streams and protect the common stream at the users' end. Moreover, the passive RIS-assisted scenarios exhibit gradual decay of the MaxMin EPSR performance when the RIS migrates away from BS, since the multiplicative path-loss effect significantly attenuates the signal propagation. Accordingly, the optimal placement of the RIS is at the user vicinity for active RIS-based systems and near the BS for the passive RIS-aided schemes.

% trade-off and impact of radar requirement
\begin{comment}   
\begin{figure}[t]
    \centering
    \subfloat[Output sensing SNR against radar requirement.]{\includegraphics[width=\columnwidth]{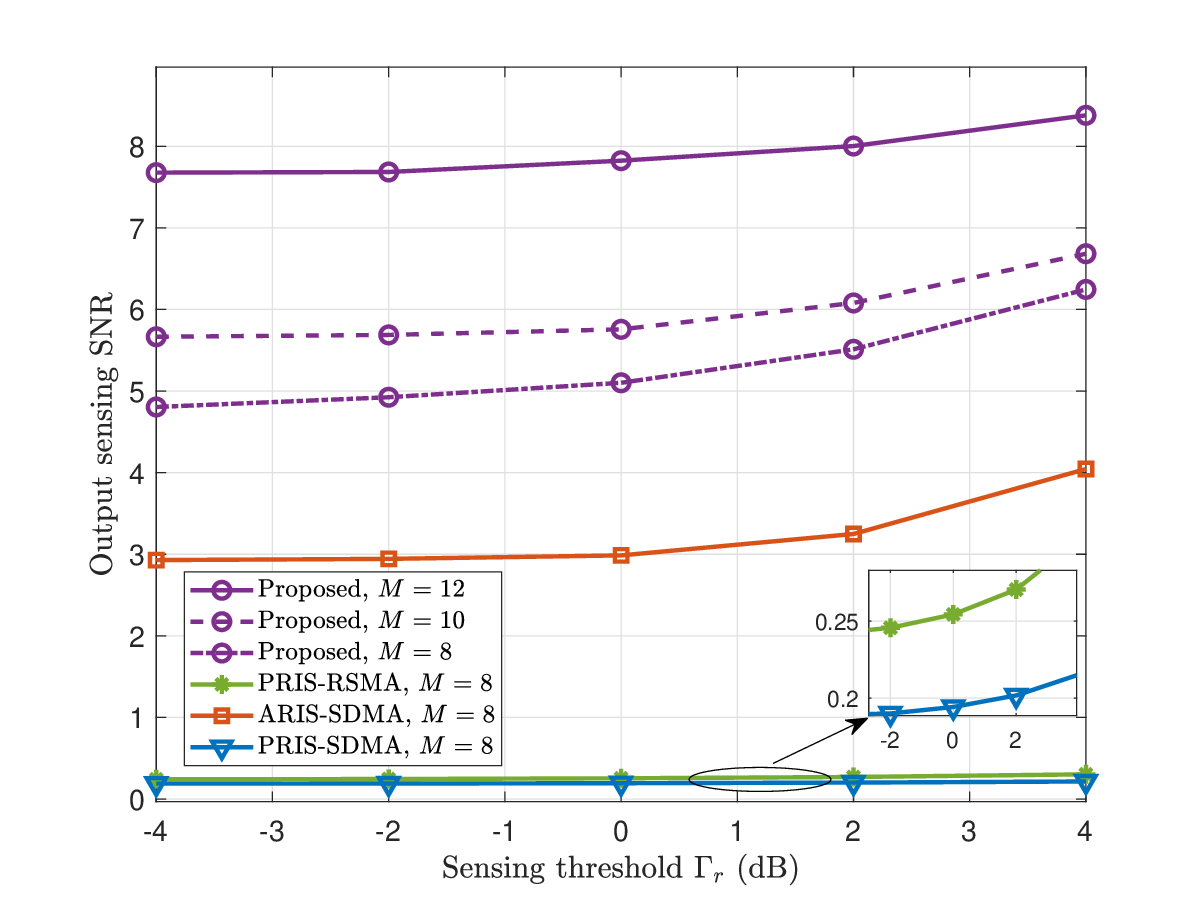}\label{radar_out}}
    \vfill
    \subfloat[The achievable ECSR against radar requirement.]{\includegraphics[width=\columnwidth]{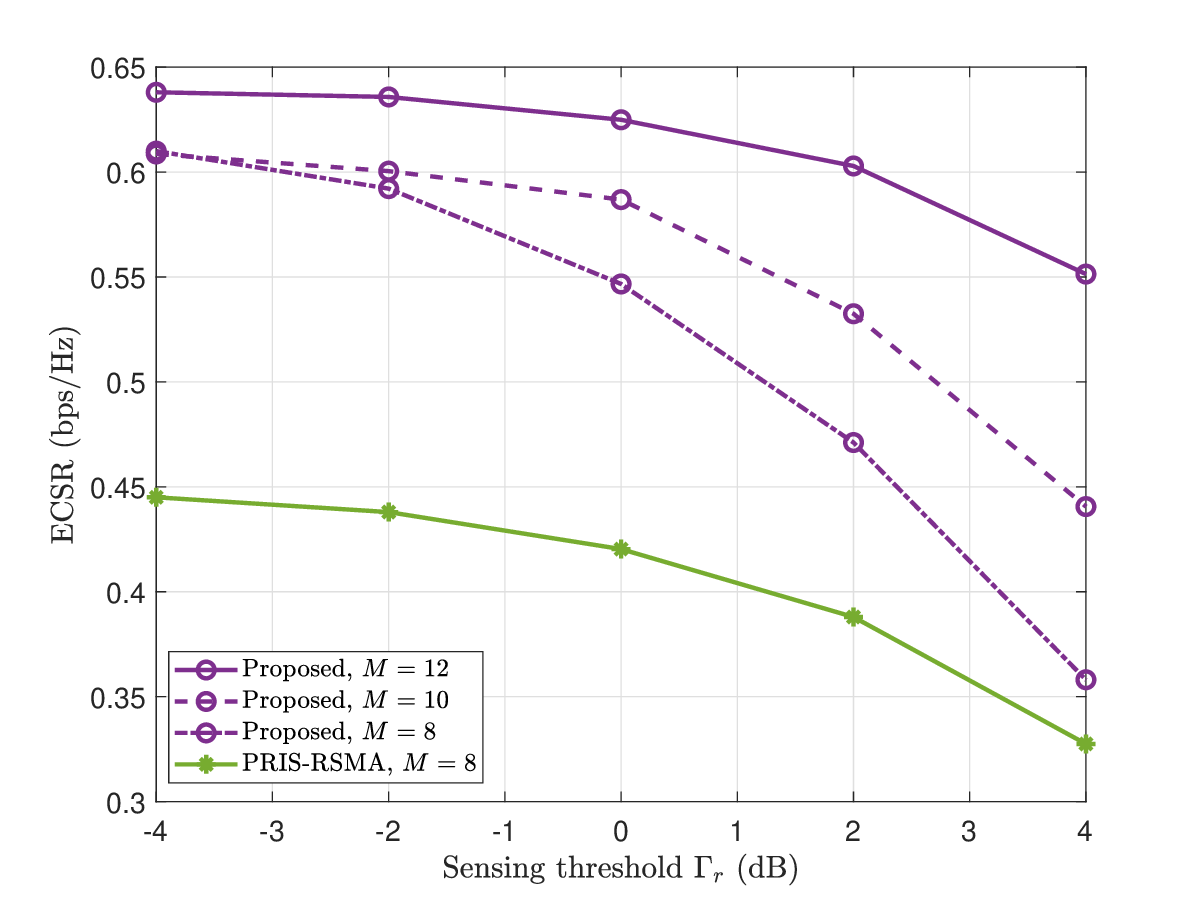}\label{radar_tradeoff}}
    \caption{The impact of radar requirement and performance trade-off}
    \label{radar_requirement}
\end{figure}
\end{comment}

\begin{figure}[t]


    \centering
    \subfloat[Output sensing SNR against $\Gamma_r$.]{\includegraphics[width=0.5\columnwidth, height=0.6\columnwidth]{figures/SensThreshold_VS_outSensSNR_N16.eps}\label{radar_out}}
    \hfill
    \subfloat[The achievable ECSR against $\Gamma_r$.]{\includegraphics[width=0.5\columnwidth, height=0.6\columnwidth]{figures/SensThreshold_VS_ComSecureRate_N16.eps}\label{radar_tradeoff}}
    \caption{The impact of radar requirement and performance trade-off.}
    \label{radar_requirement}
\end{figure}
Lastly, the impact of the radar requirement, $\Gamma_r$, on the output sensing SNR and ECSR is analyzed in Fig.~\ref{radar_out} and Fig.~\ref{radar_tradeoff}, respectively. In Fig.~\ref{radar_out}, we observe that the sensing performance improves with $\Gamma_r$, where the enhancement rate potentially grows when $\Gamma_r> 0$ for active RIS-assisted scenarios. Conversely, the improvement rate is marginal for passive RIS-enabled schemes. This is because optimizing the active RIS reflection coefficient matrix offers sufficient degrees of freedom to enhance radar returns despite the severe fades of the round trip path (BS-RIS-target-RIS-BS), which are further compromised by noise amplification. Moreover, Fig.~\ref{radar_tradeoff} shows the sensing and security trade-off in terms of the ECSR performance. The common rate splitting is exploited for sensing and securing private streams against the unknown Eve, while achieving fairness of private secure rates. Therefore, allocating higher values for the common stream rate, $r_k$, makes RSMA more prone to eavesdropping, while setting $r_k$ to zero incurs a loss in the system rate.

\section{Conclusion and future direction}\label{sec5}
In this paper, we investigated a novel secure communication and sensing approach for an active RIS-aided RSMA system in the presence of an Eve of unknown location. The EPSR and ECSR were mathematically derived based on the approximated mean of the Eve's channel gain. Accordingly, motivated by maximizing the minimum EPSR while ensuring the ECSR, sensing, active RIS power budget, and transmission power requirements, we jointly optimized the common rate splitting, transmit beamforming, AN, active RIS reflection coefficients, and radar receive beamformer via our proposed strategy. Simulation results demonstrated the effectiveness and the superiority of our proposed system against several benchmarks in terms of MaxMin EPSR, ECSR, and sensing output SNR. %For future ISAC applications, extending our design to multi-target systems and considering statistical channel state information models will be beneficial. In addition, the optimization approach can be also extended for energy efficiency and computational resource optimizations.       

\appendices
\section{Derivation of $\mathbb{E}\left\{ {{\mathbf{h}}_{RE}}\mathbf{h}_{RE}^{H} \right\}$}\label{appendix1}
The expression ${{\widehat{\mathbf{H}}}_{RE}}=\mathbb{E}\left\{ {{\mathbf{h}}_{RE}}\mathbf{h}_{RE}^{H} \right\}$ can be expanded as 
\begin{align} \label{Eve_esti_CH}
    {{\widehat{\mathbf{H}}}_{RE}} &= \mathbb{E}\left\{ {{\chi }_{RE}}\left( \sqrt{\frac{\kappa }{1+\kappa }}\mathbf{h}_{RE}^{\text{LOS}}+\sqrt{\frac{1}{1+\kappa }}\mathbf{h}_{RE}^{\text{NLOS}} \right) \right. \nonumber \\ 
    &\quad \left. {{\left( \sqrt{\frac{\kappa }{1+\kappa }}\mathbf{h}_{RE}^{\text{LOS}}+\sqrt{\frac{1}{1+\kappa }}\mathbf{h}_{RE}^{\text{NLOS}} \right)}^{H}} \right\} \nonumber \\ 
    & =\frac{\ell }{1+\kappa }\int\limits_{{{d}_{1}}}^{{{d}_{2}}}{{{\left( {{{d}_{RE}}}/{{{d}_{0}}}\; \right)}^{-{{\alpha }_{RE}}}}f\left( {{d}_{RE}} \right)\mathrm{d}{{d}_{RE}}} \nonumber \\
    & \quad \times \left( \kappa \int\limits_{{{\theta }_{1}}}^{{{\theta }_{2}}}{\mathbf{a}\left( {{\theta }_{E}} \right){{\mathbf{a}}^{H}}\left( {{\theta }_{E}} \right)f\left( {{\theta }_{E}} \right)\mathrm{d}{{\theta }_{E}}}+{{\mathbf{I}}_{N}} \right),  
\end{align}
% & =\frac{\kappa }{1+\kappa }{{\mathbb{E}}_{{{d}_{RE}}}}\left\{ {{\chi }_{RE}} \right\}{{\mathbb{E}}_{{{\theta }_{E}}}}\left\{ \mathbf{h}_{RE}^{\text{LOS}}{{\left( \mathbf{h}_{RE}^{\text{LOS}} \right)}^{H}} \right\} \nonumber \\ 
%&\quad +\frac{1}{1+\kappa }{{\mathbb{E}}_{{{d}_{RE}}}}\left\{ {{\chi }_{RE}} \right\}\mathbb{E}_{{{\theta }_{E}}}\left\{ \mathbf{h}_{RE}^{\text{LOS}}{{\left( \mathbf{h}_{RE}^{\text{LOS}} \right)}^{H}} \right\}  \nonumber \\  
where $\ell$, $d_{RE}$, $\mathrm{d}{{d}_{RE}}$, and $\mathrm{d}{{\theta }_{E}}$ denote path-loss at reference distance the distance between RIS and possible Eve location, derivative operator of Eve distance and azimuth angle, respectively.
%\footnote{Note that $\mathbb{E}\left\{ \mathbf{h}_{RE}^{\text{LOS}}{{\left( \mathbf{h}_{RE}^{\text{NLOS}} \right)}^{H}} \right\}=\mathbb{E}\left\{ \mathbf{h}_{RE}^{\text{NLOS}}{{\left( \mathbf{h}_{RE}^{\text{LOS}} \right)}^{H}} \right\}=0$ and $\mathbb{E}\left\{ \mathbf{h}_{RE}^{\text{NLOS}}{{\left( \mathbf{h}_{RE}^{\text{NLOS}} \right)}^{H}} \right\}={{\mathbf{I}}_{N}}$.}
The marginal probability density functions (PDFs)  $f\left( {{d}_{RE}} \right)$ and $f\left( {{\theta }_{E}} \right)$ can be determined by evaluating the joint PDF $ f\left( {{d}_{RE}},{{\theta }_{E}} \right) $. The joint PDF can be calculated by the integral over area segment $ \mathcal{A} \triangleq \int\limits_{{{d}_{1}}}^{{{d}_{2}}}{\int\limits_{{{\theta }_{1}}}^{{{\theta }_{2}}}{{d}_{RE}} \mathrm{d}{{d}_{RE}}\mathrm{d}{{\theta }_{E}}}$ and then taking inverse.   
\begin{align}
    &f\left( {{d}_{RE}},{{\theta }_{E}} \right)=\mathcal{A}^{-1} =\frac{2}{\left( d_{2}^{2}-d_{1}^{2} \right)\left( {{\theta }_{2}}-{{\theta }_{1}} \right)},\label{joint_pdf} \\ 
    &f\left( {{d}_{RE}} \right)=\int\limits_{{{\theta }_{1}}}^{{{\theta }_{2}}}{\frac{2{{d}_{RE}}}{\left( d_{2}^{2}-d_{1}^{2} \right)\left( {{\theta }_{2}}-{{\theta }_{1}} \right)}\mathrm{d}{{\theta }_{E}}}=\frac{2{{d}_{RE}}}{d_{2}^{2}-d_{1}^{2}}, \label{dist_pdf} \\
    &f\left( {{\theta }_{E}} \right)=\int\limits_{{{d}_{1}}}^{{{d}_{2}}}{\frac{2{{d}_{RE}}}{\left( d_{2}^{2}-d_{1}^{2} \right)\left( {{\theta }_{2}}-{{\theta }_{1}} \right)}\mathrm{d}{{d}_{RE}}}=\frac{1}{{{\theta }_{2}}-{{\theta }_{1}}}. \label{ang_pdf}
\end{align}

By plugging (\ref{dist_pdf}) and (\ref{ang_pdf}) into (\ref{Eve_esti_CH}), ${{\widehat{\mathbf{H}}}_{RE}}$ can be given by 
\begin{multline} \label{eve_esti_CH_ver2}
    {{\widehat{\mathbf{H}}}_{RE}}={{\left( \frac{1}{{{d}_{0}}} \right)}^{-{{\alpha }_{RE}}}}\frac{2\ell \left( d_{1}^{2-{{\alpha }_{RE}}}-d_{2}^{2-{{\alpha }_{RE}}} \right)}{\left( 1+\kappa  \right)\left( d_{2}^{2}-d_{1}^{2} \right)\left( {{\alpha }_{RE}}-2 \right)} \\
    \times \left( \frac{\kappa }{\left( {{\theta }_{2}}-{{\theta }_{1}} \right)}\int\limits_{{{\theta }_{1}}}^{{{\theta }_{2}}}{\mathbf{a}\left( {{\theta }_{E}} \right){{\mathbf{a}}^{H}}\left( {{\theta }_{E}} \right)\mathrm{d}{{\theta }_{E}}}+{{\mathbf{I}}_{N}} \right),
\end{multline} 
where the integral over $\theta_E$ in can be evaluated via trapezoidal method. In specific, the interval of ${{\theta }_{E}}\in \left[ {{\theta }_{1}},{{\theta }_{2}} \right]$ is split into ${{N}_{{{\theta }_{E}}}}$ intervals, where the interval length is $\Delta {{\theta }_{E}}={\left( {{\theta }_{2}}-{{\theta }_{1}} \right)}/{{{N}_{{{\theta }_{E}}}}}\;$, with ${{\theta }_{E}}\in \left\{ {{\theta }_{1}}+i\Delta {{\theta }_{E}} \right\}_{i=0}^{{{N}_{{{\theta }_{E}}}}}$. Then, $\int\limits_{{{\theta }_{1}}}^{{{\theta }_{2}}}{\mathbf{a}\left( {{\theta }_{E}} \right){{\mathbf{a}}^{H}}\left( {{\theta }_{E}} \right)\mathrm{d}{{\theta }_{E}}}\triangleq \int\limits_{{{\theta }_{1}}}^{{{\theta }_{2}}}{\mathbf{A}\left( {{\theta }_{E}} \right)\mathrm{d}{{\theta }_{E}}}$ is approximated as 
\begin{multline} \label{Eve_azmith_approx}
    \int\limits_{{{\theta }_{1}}}^{{{\theta }_{2}}}{\mathbf{A}\left( {{\theta }_{E}} \right)\mathrm{d}{{\theta }_{E}}} \approx \sum\limits_{i=0}^{{{N}_{{{\theta }_{E}}}}-1}{\int\limits_{{{\theta }_{1}}+i\Delta {{\theta }_{E}}}^{{{\theta }_{1}}+\left( i+1 \right)\Delta {{\theta }_{E}}}{\mathbf{A}\left( {{\theta }_{E}} \right)\mathrm{d}{{\theta }_{E}}}}  = \frac{\Delta {{\theta }_{E}}}{2} \\
    \times \sum\limits_{i=0}^{{{N}_{{{\theta }_{E}}}}-1}{\mathbf{A}\left( {{\theta }_{1}}+i\Delta {{\theta }_{E}} \right)+\mathbf{A}\left( {{\theta }_{1}}+\left( i+1 \right)\Delta {{\theta }_{E}} \right)},
\end{multline}
where this approximation becomes more accurate as the ${{N}_{{{\theta }_{E}}}}$ is set sufficiently large. By plugging (\ref{Eve_azmith_approx}) into (\ref{eve_esti_CH_ver2}), the result  (\ref{eve_esti_CH_ver3}) can be obtained. To further simplify the subsequent analysis on the section \ref{sec3-2}, we employ decomposition  ${{\widehat{\mathbf{H}}}_{RE}}=\sum\limits_{n=1}^{N}{{{\lambda }_{E,n}}{{\mathbf{e}}_{E,n}}\mathbf{e}_{E,n}^{H}}$, where ${{\lambda }_{E,n}}$ and ${{\mathbf{e}}_{E,n}}$ represent the $n\text{-th}$ eigenvalues and eigenvectors of ${{\widehat{\mathbf{H}}}_{RE}}$, respectively.

\section{Proof of Lemma \ref{lem1}}\label{appendix2}
Here, we can re-express the ergodic rate of private stream $k$ at Eve as 
\small
\begin{align} \label{app1:ergo_rate_Eve}
&-{{\mathbb{E}}_{{{\mathbf{h}}_{RE}}}} \left\{ {{\log }_{2}}\left( 1+{{\gamma }_{k,E}} \right) \right\} = \mathbb{E}\left\{ \ln \left( {{{\bar{\sigma }}}_{E}} + {{\left| {{\left( {{\pmb{\theta }}^{\left( t \right)}} \right)}^{H}}{{{\mathbf{\bar{H}}}}_{E}}{{\mathbf{w}}_{0}} \right|}^{2}} \right. \right. \nonumber \\
& \left. \left. + \sum\limits_{i=1,i\ne k}^{K}{{{\left| {{\left( {{\pmb{\theta }}^{\left( t \right)}} \right)}^{H}}{{{\mathbf{\bar{H}}}}_{E}}{{\mathbf{w}}_{i}} \right|}^{2}}} + {{\left| {{\left( {{\pmb{\theta }}^{\left( t \right)}} \right)}^{H}}{{{\mathbf{\bar{H}}}}_{E}}\mathbf{z} \right|}^{2}} \right) \right\} \nonumber \\
& - \mathbb{E}\left\{ \ln \left( {{{\bar{\sigma }}}_{E}} + \sum\limits_{i=0}^{K}{{{\left| {{\left( {{\pmb{\theta }}^{\left( t \right)}} \right)}^{H}}{{{\mathbf{\bar{H}}}}_{E}}{{\mathbf{w}}_{i}} \right|}^{2}}} + {{\left| {{\left( {{\pmb{\theta }}^{\left( t \right)}} \right)}^{H}}{{{\mathbf{\bar{H}}}}_{E}}\mathbf{z} \right|}^{2}} \right) \right\}.
\end{align}
\normalsize
The first term of (\ref{app1:ergo_rate_Eve}) can be rewritten as 
\begin{equation} \label{first_term_eve_priv}
    \tilde{R}_{1k,E} = -\ln {{\left( {{{\bar{\sigma }}}_{E}}+\pmb{\omega }_{E,k}^{H}\mathbb{E}\left\{ {{\pmb{\Omega }}_{E,k}}\pmb{\Omega }_{E,k}^{H} \right\}{{\pmb{\omega }}_{E,k}} \right)}^{-1}}, 
\end{equation}
in which ${{\pmb{\omega }}_{E,k}}\in {{\mathbb{C}}^{M\left( K+1 \right)\times 1}}$ can attained by removing the $k$-th column from the block matrix ${{\mathbf{w}}^{\text{priv}}}$ of ${{\left[ \begin{matrix}
   \mathbf{w}_{0}^{T} & \underbrace{\begin{matrix}
   \mathbf{w}_{1}^{T} & \cdots  & \mathbf{w}_{K}^{T}  \\
\end{matrix}}_{{{\mathbf{w}}^{\text{priv}}}} & \mathbf{z}^{T}  \\
\end{matrix} \right]}^{T}}\in {{\mathbb{C}}^{M\left( K+2 \right)\times 1}}$. Besides, the expression $\mathbb{E}\left\{ {{\pmb{\Omega }}_{E,k}}\pmb{\Omega }_{E,k}^{H} \right\}{{\mathbb{C}}^{M\left( K+1 \right)\times M\left( K+1 \right)}}$ is denoted by
\begin{multline} \label{notation1}
    \mathbb{E}\left\{ {{\pmb{\Omega }}_{E,k}}\pmb{\Omega }_{E,k}^{H} \right\} = {{\mathbf{I}}_{K+1}}\otimes \mathbb{E}\left\{ \mathbf{\bar{H}}_{E}^{H}{{\pmb{\theta }}^{\left( t \right)}}{{\left( {{\pmb{\theta }}^{\left( t \right)}} \right)}^{H}}{{{\mathbf{\bar{H}}}}_{E}} \right\} \\
    = {{\mathbf{I}}_{K+1}}\otimes \sigma _{E}^{-2}{{\mathbf{G}}^{H}}{{\mathbf{\Phi }}^{^{\left( t \right)}}}{{\widehat{\mathbf{H}}}_{RE}}{{\left( {{\mathbf{\Phi }}^{^{\left( t \right)}}} \right)}^{H}}\mathbf{G}. 
\end{multline}
By resorting to the Proposition \ref{prop1}, $\bar{\sigma}_{E}$ can be calculated as  
\begin{equation} \label{seg_E_bar}
\bar{\sigma}_{E}=1+{{\left( {{\pmb{\theta}}^{\left( t \right)}} \right)}^{H}}{{\widetilde{\mathbf{J}}}_{E}}{{\pmb{\theta}}^{\left( t \right)}},
\end{equation}
with ${{\widetilde{\mathbf{J}}}_{E}}=\sigma _{R}^{2}\sigma _{E}^{-2}\sum\limits_{n=1}^{N}{{{\lambda }_{E,n}}\mathrm{diag}\left( \mathbf{e}_{E,n}^{H} {{\mathbf{e}}_{E,n}} \right)}$. Moreover, the diagonal block matrix $\mathbb{E}\left\{ {{\pmb{\Omega }}_{E,k}}\pmb{\Omega }_{E,k}^{H} \right\}\triangleq {{\widehat{\pmb{\Omega }}}_{E,k}}^{H}\widehat{\pmb{\Omega }}_{E,k}$ of (\ref{notation1}) can be rewritten as 
\begin{equation}
    {{\widehat{\pmb{\Omega }}}_{E,k}}^{H}\widehat{\pmb{\Omega }}_{E,k} = {{\mathbf{I}}_{K+1}}\otimes \sigma _{E}^{-2}{{\mathbf{G}}^{H}}{{\mathbf{\Phi }}^{^{\left( t \right)}}}{{\pmb{\overset{\scriptscriptstyle\frown}{J}}}_{E}}\pmb{\overset{\scriptscriptstyle\frown}{J}}_{E}^{H}{{\left( {{\mathbf{\Phi }}^{^{\left( t \right)}}} \right)}^{H}}\mathbf{G},
\end{equation}
where ${{\pmb{\overset{\scriptscriptstyle\frown}{J}}}_{E}}=\left[ \begin{matrix}
   \sqrt{{{\lambda }_{E,1}}}{{\mathbf{e}}_{E,1}} & \cdots  & \sqrt{{{\lambda }_{E,N}}}{{\mathbf{e}}_{E,N}}  \\
\end{matrix} \right]$ and ${{\widehat{\pmb{\Omega }}}_{E,k}}=\sigma _{E}^{-1}\left( {{\mathbf{I}}_{K+1}}\otimes \pmb{\overset{\scriptscriptstyle\frown}{J}}_{E}^{H}{{\left( {{\mathbf{\Phi }}^{^{\left( t \right)}}} \right)}^{H}}\mathbf{G} \right)\in {{\mathbb{C}}^{N\left( K+1 \right)\times M\left( K+1 \right)}}$.

By employing the matrix inversion property ${{\left( \mathbf{A}+\mathbf{BCD} \right)}^{-1}}={{\mathbf{A}}^{-1}}-{{\mathbf{A}}^{-1}}\mathbf{BC}{{\left( \mathbf{I}+\mathbf{D}{{\mathbf{A}}^{-1}}\mathbf{BC} \right)}^{-1}}\mathbf{D}{{\mathbf{A}}^{-1}}$ on the expression (\ref{first_term_eve_priv}), it can be reformulated as 
\begin{equation} \label{first_term_eve_priv_v2}
    \tilde{R}_{1k,E}  = -\ln \left( \bar{\sigma }_{E}^{-1}-\bar{\sigma }_{E}^{-2}\pmb{\omega }_{E,k}^{H}\widehat{\pmb{\Omega }}_{E,k}^{H}\mathbf{Q}_{E,k}^{-1}{{\widehat{\pmb{\Omega }}}_{E,k}}{{\pmb{\omega }}_{E,k}} \right),
\end{equation}
where ${{\mathbf{Q}}_{E,k}}={{\mathbf{I}}_{N\left( K+1 \right)}}+\bar{\sigma }_{E}^{-1}{{\widehat{\pmb{\Omega }}}_{E,k}}{{\pmb{\omega }}_{E,k}}\pmb{\omega }_{E,k}^{H}\widehat{\pmb{\Omega }}_{E,k}^{H}$. Nevertheless, the expression (\ref{first_term_eve_priv_v2}) is non-convex. The concave logarithm function can be convexified using the approximation $\ln \left( \gamma  \right)\le \ln \left( {{\gamma }^{\left( t \right)}} \right)+\frac{1}{{{\gamma }^{\left( t \right)}}}\left( \gamma -{{\gamma }^{\left( t \right)}} \right)=\ln \left( {{\gamma }^{\left( t \right)}} \right)+\frac{\gamma }{{{\gamma }^{\left( t \right)}}}-1$ as 
\begin{equation} \label{first_term_eve_priv_approx}
\resizebox{\columnwidth}{!}{$
  \tilde{R}_{1k,E}  \ge -\ln \left( \frac{1-q_{E,k}^{\left( t \right)}}{{{{\bar{\sigma }}}_{E}}} \right)+\frac{\bar{\sigma }_{E}^{-1}\pmb{\omega }_{E,k}^{H}\widehat{\pmb{\Omega }}_{E}^{H}\mathbf{Q}_{E,k}^{-1}{{\widehat{\pmb{\Omega }}}_{E}}{{\pmb{\omega }}_{E,k}}-q_{E,k}^{\left( t \right)}}{1-q_{E,k}^{\left( t \right)}}.  
  $}
\end{equation}
The matrix fractional form $\pmb{\omega }_{E,k}^{H}\widehat{\pmb{\Omega }}_{E,k}^{H}\mathbf{Q}_{E,k}^{-1}{{\widehat{\pmb{\Omega }}}_{E,k}}{{\pmb{\omega }}_{E,k}}$ can be approximated using the following approximation of \cite{dong2020secure}:
\small
\begin{multline*} 
     \mathrm{Tr}\left( \mathbf{AC}{{\mathbf{B}}^{-1}}{{\mathbf{C}}^{H}} \right)\ge \mathrm{Tr}\left( \mathbf{A}{{\mathbf{C}}^{\left( t \right)}}{{\left( {{\mathbf{B}}^{\left( t \right)}} \right)}^{-1}}{{\left( {{\mathbf{C}}^{\left( t \right)}} \right)}^{H}} \right) \\
     -\mathrm{Tr}\left( \mathbf{A}{{\mathbf{C}}^{\left( t \right)}}{{\left( {{\mathbf{B}}^{\left( t \right)}} \right)}^{-1}}\left( \mathbf{B}-{{\mathbf{B}}^{\left( t \right)}} \right){{\left( {{\mathbf{B}}^{\left( t \right)}} \right)}^{-1}}{{\left( {{\mathbf{C}}^{\left( t \right)}} \right)}^{H}} \right) \\
     +\mathrm{Tr}\left( \mathbf{A}\left( \mathbf{C}-{{\mathbf{C}}^{\left( t \right)}} \right){{\left( {{\mathbf{B}}^{\left( t \right)}} \right)}^{-1}}{{\left( {{\mathbf{C}}^{\left( t \right)}} \right)}^{H}} \right) \\
     +\mathrm{Tr}\left( \mathbf{A}{{\mathbf{C}}^{\left( t \right)}}{{\left( {{\mathbf{B}}^{\left( t \right)}} \right)}^{-1}}{{\left( \mathbf{C}-{{\mathbf{C}}^{\left( t \right)}} \right)}^{H}} \right),
\end{multline*}
\normalsize
where $\mathbf{A}\in {{\mathbb{C}}^{m\times m}},\mathbf{B}\in {{\mathbb{C}}^{n\times n}}$ and $\mathbf{C}\in {{\mathbb{C}}^{m\times n}}$are positive semi-definite matrices. Therefore, we have
\begin{multline} \label{eve_frac_matrix_approx1}
  \bar{\sigma }_{E}^{-1}\pmb{\omega }_{E,k}^{H}\widehat{\pmb{\Omega }}_{E,k}^{H}\mathbf{Q}_{E,k}^{-1}{{\widehat{\pmb{\Omega }}}_{E,k}}{{\pmb{\omega }}_{E,k}}  \ge  q_{E,k}^{\left( t \right)}-\bar{\sigma }_{E}^{-1}{{\left( \pmb{\omega }_{E,k}^{\left( t \right)} \right)}^{H}}\widehat{\pmb{\Omega }}_{E,k}^{H} \\ 
  \times {{\left( \mathbf{Q}_{E,k}^{\left( t \right)} \right)}^{-1}}\left( {{\mathbf{Q}}_{E,k}}-\mathbf{Q}_{E,k}^{\left( t \right)} \right){{\left( \mathbf{Q}_{E,k}^{\left( t \right)} \right)}^{-1}}{{\widehat{\pmb{\Omega }}}_{E,k}}\pmb{\omega }_{E,k}^{\left( t \right)} \\ 
  +2\Re \left\{ \bar{\sigma }_{E}^{-1}{{\left( \pmb{\omega }_{E,k}^{\left( t \right)} \right)}^{H}}\widehat{\pmb{\Omega }}_{E,k}^{H}{{\left( \mathbf{Q}_{E,k}^{\left( t \right)} \right)}^{-1}}{{\widehat{\pmb{\Omega }}}_{E,k}}\left( {{\pmb{\omega }}_{E,k}}-\pmb{\omega }_{E,k}^{\left( t \right)} \right) \right\}  
\end{multline}
By substituting (\ref{eve_frac_matrix_approx1}) into (\ref{first_term_eve_priv_approx}) yields the final approximation of (\ref{eve_Priv_first_approx_final}) in the top of the next page. 

The second term of (\ref{app1:ergo_rate_Eve}) can be represented as 
\begin{floatEq}
    \begin{align} 
        \tilde{R}_{1k,E} =& -\ln \left( \frac{1-q_{E,k}^{\left( t \right)}}{{{{\bar{\sigma }}}_{E}}} \right)+\frac{2\Re \left\{ \bar{\sigma }_{E}^{-1}{{\left( \pmb{\omega }_{E,k}^{\left( t \right)} \right)}^{H}}\widehat{\pmb{\Omega }}_{E,k}^{H}{{\left( \mathbf{Q}_{E,k}^{\left( t \right)} \right)}^{-1}}{{\widehat{\pmb{\Omega }}}_{E,k}}\left( {{\pmb{\omega }}_{E,k}}-\pmb{\omega }_{E,k}^{\left( t \right)} \right) \right\}}{1-q_{E,k}^{\left( t \right)}} \nonumber \\ 
        & -\frac{\bar{\sigma }_{E}^{-1}{{\left( \pmb{\omega }_{E,k}^{\left( t \right)} \right)}^{H}}\widehat{\pmb{\Omega }}_{E,k}^{H}{{\left( \mathbf{Q}_{E,k}^{\left( t \right)} \right)}^{-1}}\left( {{\mathbf{Q}}_{E,k}}-\mathbf{Q}_{E,k}^{\left( t \right)} \right){{\left( \mathbf{Q}_{E,k}^{\left( t \right)} \right)}^{-1}}{{\widehat{\pmb{\Omega }}}_{E,k}}\pmb{\omega }_{E,k}^{\left( t \right)}}{1-q_{E,k}^{\left( t \right)}}.  
        \label{eve_Priv_first_approx_final}
    \end{align}
\end{floatEq} 
\small
\begin{equation} \label{seconde_term_Eve_priv}
   \tilde{R}_{2k,E} =-\ln \left( 1+{{\ell }_{E}} \right),  
\end{equation}
\normalsize
where ${{\ell }_{E}}=\sum\limits_{i=0}^{K}{\mathbf{w}_{i}^{H}{{\widehat{\mathbf{G}}}_{E}}{{\mathbf{w}}_{i}}}+\mathbf{z}^{H}{{\widehat{\mathbf{G}}}_{E}}\mathbf{z}+{{\left( {{\pmb{\theta }}^{\left( t \right)}} \right)}^{H}}{{\widetilde{\mathbf{J}}}_{E}}{{\pmb{\theta }}^{\left( t \right)}}$ and ${{\widehat{\mathbf{G}}}_{E}}=\sigma _{E}^{-2}{{\mathbf{G}}^{H}}{{\mathbf{\Phi }}^{^{\left( t \right)}}}{{\widehat{\mathbf{H}}}_{RE}}{{\left( {{\mathbf{\Phi }}^{^{\left( t \right)}}} \right)}^{H}}\mathbf{G}$. The expression in (\ref{seconde_term_Eve_priv}) can be lower bounded as 
\begin{equation} \label{seconde_term_Eve_priv_approx}
    -\ln \left( 1+{{\ell }_{E}} \right)\ge -\ln \left( 1+\ell _{E}^{\left( t \right)} \right)-\frac{1+{{\ell }_{E}}}{1+\ell _{E}^{\left( t \right)}}+1.
\end{equation}
To further simplify the third term of (\ref{eve_Priv_first_approx_final}), we employ the following expansion  
\begin{multline}\label{approx_QEk}
    \bar{\sigma }_{E}^{-1}{{\left( \pmb{\omega }_{E,k}^{\left( t \right)} \right)}^{H}}\widehat{\pmb{\Omega }}_{E,k}^{H}{{\left( \mathbf{Q}_{E,k}^{\left( t \right)} \right)}^{-1}}{{\mathbf{Q}}_{E,k}}{{\left( \mathbf{Q}_{E,k}^{\left( t \right)} \right)}^{-1}}{{\widehat{\pmb{\Omega }}}_{E,k}}\pmb{\omega }_{E,k}^{\left( t \right)}
    \\={{\varepsilon }_{E,k}}+\pmb{\omega }_{E,k}^{H}\mathbf{\bar{Q}}_{E,k}^{\left( t \right)}{{\pmb{\omega }}_{E,k}}.
\end{multline} 
By plugging the results  (\ref{eve_Priv_first_approx_final}), (\ref{seconde_term_Eve_priv_approx}), and (\ref{approx_QEk}) into (\ref{app1:ergo_rate_Eve}), the result in (\ref{ergo_PrivRate_Eve_final}) can be attained.

\section{Proof of Lemma \ref{lem2}}\label{appendix3}
\begin{floatEq}
    \begin{align} 
        -{{\mathbb{E}}_{{{\mathbf{h}}_{RE}}}}\left\{ {{\log }_{2}}\left( 1+{{\gamma }_{k, E}} \right) \right\}=&\mathbb{E}\left\{ \ln \left( {{\bar{\sigma }}_{E}}+{{\left| {{\pmb{\theta }}^{H}}{{{\mathbf{\bar{H}}}}_{E}}\mathbf{w}_{0}^{\left( t \right)} \right|}^{2}}+\sum\limits_{i=1,i\ne k}^{K}{{{\left| {{\pmb{\theta }}^{H}}{{{\mathbf{\bar{H}}}}_{E}}\mathbf{w}_{i}^{\left( t \right)} \right|}^{2}}}+{{\left| {{\pmb{\theta }}^{H}}{{{\mathbf{\bar{H}}}}_{E}}\mathbf{z}^{\left( t \right)} \right|}^{2}} \right) \right\} \nonumber \\ 
        & -\mathbb{E}\left\{ \ln \left( {{\bar{\sigma }}_{E}}+\sum\limits_{i=0}^{K}{{{\left| {{\pmb{\theta }}^{H}}{{{\mathbf{\bar{H}}}}_{E}}\mathbf{w}_{i}^{\left( t \right)} \right|}^{2}}}+{{\left| {{\pmb{\theta }}^{H}}{{{\mathbf{\bar{H}}}}_{E}}\mathbf{z}^{\left( t \right)} \right|}^{2}} \right) \right\}.   
        \label{eve_Priv_RIS1}
    \end{align}
\end{floatEq} 
First, the ergodic Eve can be rewritten in term of $\theta$ as (\ref{eve_Priv_RIS1}) at the top of the next page, in which we define   
\begin{multline*}
  {{\bar{\sigma }}_{E}}+{{\left| {{\pmb{\theta }}^{H}}{{{\mathbf{\bar{H}}}}_{E}}\mathbf{w}_{0}^{\left( t \right)} \right|}^{2}} + {{\left| {{\pmb{\theta }}^{H}}{{{\mathbf{\bar{H}}}}_{E}}\mathbf{z}^{\left( t \right)} \right|}^{2}}\\
  +\sum\limits_{i=1,i\ne k}^{K}{{{\left| {{\pmb{\theta }}^{H}}{{{\mathbf{\bar{H}}}}_{E}}\mathbf{w}_{i}^{\left( t \right)} \right|}^{2}}}  = 1+{{\pmb{\theta }}^{H}}{{\mathbf{\Psi }}_{E,k}}\mathbf{\Psi }_{E,k}^{H}\pmb{\theta },
\end{multline*}
where ${{\mathbf{\Psi }}_{E,k}}\mathbf{\Psi }_{E,k}^{H}=\sigma _{R}^{2}\mathrm{diag}\left( {{{\mathbf{\bar{h}}}}_{RE}} \mathbf{\bar{h}}_{RE}^{H} \right)+{{\mathbf{\bar{H}}}_{E}}{\mathbf{\Upsilon }^{\left( t \right)}}\mathbf{\bar{H}}_{E}^{H}$, and ${\mathbf{\Upsilon }^{\left( t \right)}}=\mathbf{w}_{0}^{\left( t \right)}{{\left( \mathbf{w}_{0}^{\left( t \right)} \right)}^{H}}+\sum\limits_{i=1,i\ne k}^{K}{\mathbf{w}_{i}^{\left( t \right)}{{\left( \mathbf{w}_{i}^{\left( t \right)} \right)}^{H}}}+{{\mathbf{z}}^{\left( t \right)}}{{\left( {{\mathbf{z}}^{\left( t \right)}} \right)}^{H}}$. Again, we re-invoke Proposition \ref{prop1} to recast the first term of (\ref{eve_Priv_RIS1}) as 
\begin{equation}\label{first_term_eve_RIS}
    \bar{R}_{1k,E} = -\ln {{\left( 1+{{\pmb{\theta }}^{H}}{{\widehat{\mathbf{\Psi }}}_{E,k}}\widehat{\mathbf{\Psi }}_{E,k}^{H}\pmb{\theta } \right)}^{-1}},
\end{equation}
where ${{\widehat{\mathbf{\Psi }}}_{E,k}}\widehat{\mathbf{\Psi }}_{E,k}^{H} \triangleq  \mathbb{E}\left\{ {{\mathbf{\Psi }}_{E,k}}\mathbf{\Psi }_{E,k}^{H} \right\}$,  ${{\widehat{\mathbf{\Psi }}}_{E,k}}=\sigma _{E}^{-1}\left[ {{{\pmb{\overset{\scriptscriptstyle\frown}{E}}}}_{E}}, {{{\pmb{\overset{\scriptscriptstyle\frown}{G}}}}_{0E}}, {{{\pmb{\overset{\scriptscriptstyle\frown}{G}}}}_{E,k}}, {{{\pmb{\overset{\scriptscriptstyle\frown}{Z}}}}_{E,\mathrm{AN}}} \right]\in {{\mathbb{C}}^{N\times \bar{N}}}$, and  $\bar{N}={{N}^{2}}+2N+NK$. Let us define ${{\mathbf{\tilde{D}}}_{E,1}}=\sqrt{{{\lambda }_{E,1}}}\mathrm{diag}\left( \mathbf{e}_{E,1}^{H} \right)$  and ${{\mathbf{\tilde{D}}}_{E,N}}=\sqrt{{{\lambda }_{E,N}}}\mathrm{diag}\left( \mathbf{e}_{E,N}^{H} \right)$, then we have 
\begin{align*}
    {{\pmb{\overset{\scriptscriptstyle\frown}{E}}}_{E}}=&\left[ 
    {{\sigma }_{R}}{{\mathbf{\tilde{D}}}_{E,1}}, \cdots, {{\sigma }_{R}}{{\mathbf{\tilde{D}}}_{E,N}} \right] \in {{\mathbb{C}}^{N\times {{N}^{2}}}}, \\
    {{\pmb{\overset{\scriptscriptstyle\frown}{G}}}_{0E}} =& \left[ 
   {{\mathbf{\tilde{D}}}_{E,1}} \mathbf{Gw}_{0}^{\left( t \right)}, \cdots, {{\mathbf{\tilde{D}}}_{E,N}} \mathbf{Gw}_{0}^{\left( t \right)} \right] \in {{\mathbb{C}}^{N\times N}}, \\
    {{\pmb{\overset{\scriptscriptstyle\frown}{G}}}_{E,k}}=&\left[ 
    {{\mathbf{\tilde{D}}}_{E,1}}\mathbf{G\tilde{W}}_{k}^{\left( t \right)}, \cdots, {{\mathbf{\tilde{D}}}_{E,N}} \mathbf{G\tilde{W}}_{k}^{\left( t \right)} \right]\in {{\mathbb{C}}^{N\times NK}},\\
   \mathbf{\tilde{W}}_{k}^{\left( t \right)}=&\left[\mathbf{w}_{1}^{\left( t \right)}, \cdots, \underbrace{{{\mathbf{0}}_{M\times 1}}}_{k-\text{th term}}, \cdots, \mathbf{w}_{K}^{\left( t \right)}  \right], \\
   {{\pmb{\overset{\scriptscriptstyle\frown}{Z}}}_{E,\mathrm{AN}}}=&\left[ 
   {{\mathbf{\tilde{D}}}_{E,1}}\mathbf{Gz}^{\left( t \right)}, \cdots, {{\mathbf{\tilde{D}}}_{E,N}} \mathbf{Gz}^{\left( t \right)}   \right]\in {{\mathbb{C}}^{N\times N}}.
\end{align*}

By applying similar steps in (\ref{first_term_eve_priv_v2}), (\ref{first_term_eve_priv_approx}), and (\ref{eve_frac_matrix_approx1}), the first term of (\ref{eve_Priv_RIS1}) can be re-expressed as in   (\ref{1stTerm_Eve_privk_fianlRIS}) at the top of the next page, where ${{\mathbf{\Xi }}_{E,k}}={{\mathbf{I}}_{{\bar{N}}}}+\widehat{\mathbf{\Psi }}_{E,k}^{H}\pmb{\theta }{{\pmb{\theta }}^{H}}{{\widehat{\mathbf{\Psi }}}_{E,k}}$.
\begin{floatEq}
	\begin{align} 
            \bar{R}_{1k,E} =&-\ln \left( 1-u_{E,k}^{\left( t \right)} \right)+\frac{2\Re \left\{ {{\left( {{\pmb{\theta }}^{\left( t \right)}} \right)}^{H}}{{\widehat{\mathbf{\Psi }}}_{E,k}}{{\left( \mathbf{\Xi }_{E,k}^{\left( t \right)} \right)}^{-1}}\widehat{\mathbf{\Psi }}_{E,k}^{H}\left( \pmb{\theta }-{{\pmb{\theta }}^{\left( t \right)}} \right) \right\}}{1-u_{E,k}^{\left( t \right)}} \nonumber \\ 
            & -\frac{{{\left( {{\pmb{\theta }}^{\left( t \right)}} \right)}^{H}}{{\widehat{\mathbf{\Psi }}}_{E,k}}{{\left( \mathbf{\Xi }_{E,k}^{\left( t \right)} \right)}^{-1}}\left( {{\mathbf{\Xi }}_{E,k}}-\mathbf{\Xi }_{E,k}^{\left( t \right)} \right){{\left( \mathbf{\Xi }_{E,k}^{\left( t \right)} \right)}^{-1}}\widehat{\mathbf{\Psi }}_{E,k}^{H}{{\pmb{\theta }}^{\left( t \right)}}}{1-u_{E,k}^{\left( t \right)}}.  \label{1stTerm_Eve_privk_fianlRIS} 
	\end{align}
\end{floatEq}

Similar to (\ref{seconde_term_Eve_priv_approx}), the second term of (\ref{eve_Priv_RIS1}) can be approximated as 
\begin{equation} \label{2ndTerm_Eve_privk_approx1}
   \bar{R}_{2k,E} \ge -\ln \left( 1+\mu _{E}^{\left( t \right)} \right)-\frac{1+{{\mu }_{E}}}{1+\mu _{E}^{\left( t \right)}}+1.  
\end{equation}
Based on the result of Proposition \ref{prop1}, we re-write $\mu_E$ as ${{\mu }_{E}} ={{\pmb{\theta }}^{H}}{{\mathbf{\Gamma }}_{E}}\pmb{\theta }$, where ${{\mathbf{\Gamma }}_{E}}=\sigma _{E}^{-2}\left( {{{\pmb{\overset{\scriptscriptstyle\frown}{E}}}}_{E}}\pmb{\overset{\scriptscriptstyle\frown}{E}}_{E}^{H}+{{{\pmb{\overset{\scriptscriptstyle\frown}{G}}}}_{E}}\pmb{\overset{\scriptscriptstyle\frown}{G}}_{E}^{H} \right)$, ${{\pmb{\overset{\scriptscriptstyle\frown}{G}}}_{E}}=\left[ {{\mathbf{\tilde{D}}}_{E,1}}\mathbf{G}{{{\mathbf{\tilde{W}}}}^{\left( t \right)}},\cdots, {{\mathbf{\tilde{D}}}_{E,N}} \mathbf{G}{{{\mathbf{\tilde{W}}}}^{\left( t \right)}}  \right]$, and ${{\mathbf{\tilde{W}}}^{\left( t \right)}}=\left[ \mathbf{w}_{0}^{\left( t \right)}, \cdots, \mathbf{w}_{K}^{\left( t \right)},  \mathbf{z}^{\left( t \right)}  \right]$. By combining (\ref{2ndTerm_Eve_privk_approx1}) and (\ref{1stTerm_Eve_privk_fianlRIS}) into (\ref{eve_Priv_RIS1}), we obtain the result of (\ref{ergo_PrivRate_Eve_finalRIS}).

\bibliographystyle{ieeetr}
\bibliography{referances}

\begin{thebibliography}{10}

\bibitem{zhang2021enabling}
J.~A. Zhang, M.~L. Rahman, K.~Wu, X.~Huang, Y.~J. Guo, S.~Chen, and J.~Yuan,
  ``Enabling joint communication and radar sensing in mobile networks—a
  survey,'' {\em IEEE Communications Surveys \& Tutorials}, vol.~24, no.~1,
  pp.~306--345, 2021.

\bibitem{zhang2021overview}
J.~A. Zhang, F.~Liu, C.~Masouros, R.~W. Heath, Z.~Feng, L.~Zheng, and
  A.~Petropulu, ``An overview of signal processing techniques for joint
  communication and radar sensing,'' {\em IEEE Journal of Selected Topics in
  Signal Processing}, vol.~15, no.~6, pp.~1295--1315, 2021.

\bibitem{you2021towards}
X.~You, C.-X. Wang, J.~Huang, X.~Gao, Z.~Zhang, M.~Wang, Y.~Huang, C.~Zhang,
  Y.~Jiang, J.~Wang, {\em et~al.}, ``Towards {6G} wireless communication
  networks: Vision, enabling technologies, and new paradigm shifts,'' {\em
  Science China Information Sciences}, vol.~64, pp.~1--74, 2021.

\bibitem{yu2023active}
Z.~Yu, H.~Ren, C.~Pan, G.~Zhou, B.~Wang, M.~Dong, and J.~Wang, ``Active {RIS}
  aided {ISAC} systems: Beamforming design and performance analysis,'' {\em
  IEEE Transactions on Communications}, 2023.

\bibitem{yu2022ris}
H.~Yu, H.~D. Tuan, E.~Dutkiewicz, H.~V. Poor, and L.~Hanzo, ``{RIS}-aided
  zero-forcing and regularized zero-forcing beamforming in integrated
  information and energy delivery,'' {\em IEEE Transactions on Wireless
  Communications}, vol.~21, no.~7, pp.~5500--5513, 2022.

\bibitem{guo2020outage}
C.~Guo, Y.~Cui, F.~Yang, and L.~Ding, ``Outage probability analysis and
  minimization in intelligent reflecting surface-assisted {MISO} systems,''
  {\em IEEE Communications Letters}, vol.~24, no.~7, pp.~1563--1567, 2020.

\bibitem{yan2019passive}
W.~Yan, X.~Yuan, and X.~Kuai, ``Passive beamforming and information transfer
  via large intelligent surface,'' {\em IEEE Wireless Communications Letters},
  vol.~9, no.~4, pp.~533--537, 2019.

\bibitem{yang2021reconfigurable}
H.~Yang, X.~Yuan, J.~Fang, and Y.-C. Liang, ``Reconfigurable intelligent
  surface aided constant-envelope wireless power transfer,'' {\em IEEE
  Transactions on Signal Processing}, vol.~69, pp.~1347--1361, 2021.

\bibitem{guan2020intelligent}
X.~Guan, Q.~Wu, and R.~Zhang, ``Intelligent reflecting surface assisted secrecy
  communication: Is artificial noise helpful or not?,'' {\em IEEE Wireless
  Communications Letters}, vol.~9, no.~6, pp.~778--782, 2020.

\bibitem{dong2020enhancing}
L.~Dong and H.-M. Wang, ``Enhancing secure {MIMO} transmission via intelligent
  reflecting surface,'' {\em IEEE Transactions on Wireless Communications},
  vol.~19, no.~11, pp.~7543--7556, 2020.

\bibitem{elzanaty2021reconfigurable}
A.~Elzanaty, A.~Guerra, F.~Guidi, and M.-S. Alouini, ``Reconfigurable
  intelligent surfaces for localization: Position and orientation error
  bounds,'' {\em IEEE Transactions on Signal Processing}, vol.~69,
  pp.~5386--5402, 2021.

\bibitem{buzzi2021radar}
S.~Buzzi, E.~Grossi, M.~Lops, and L.~Venturino, ``Radar target detection aided
  by reconfigurable intelligent surfaces,'' {\em IEEE Signal Processing
  Letters}, vol.~28, pp.~1315--1319, 2021.

\bibitem{jiang2021intelligent}
Z.-M. Jiang, M.~Rihan, P.~Zhang, L.~Huang, Q.~Deng, J.~Zhang, and E.~M.
  Mohamed, ``Intelligent reflecting surface aided dual-function radar and
  communication system,'' {\em IEEE Systems Journal}, vol.~16, no.~1,
  pp.~475--486, 2021.

\bibitem{tong2021joint}
X.~Tong, Z.~Zhang, J.~Wang, C.~Huang, and M.~Debbah, ``Joint multi-user
  communication and sensing exploiting both signal and environment sparsity,''
  {\em IEEE Journal of Selected Topics in Signal Processing}, vol.~15, no.~6,
  pp.~1409--1422, 2021.

\bibitem{wang2021joint}
X.~Wang, Z.~Fei, Z.~Zheng, and J.~Guo, ``Joint waveform design and passive
  beamforming for {RIS}-assisted dual-functional radar-communication system,''
  {\em IEEE Transactions on Vehicular Technology}, vol.~70, no.~5,
  pp.~5131--5136, 2021.

\bibitem{luo2022joint}
H.~Luo, R.~Liu, M.~Li, Y.~Liu, and Q.~Liu, ``Joint beamforming design for
  {RIS}-assisted integrated sensing and communication systems,'' {\em IEEE
  Transactions on Vehicular Technology}, vol.~71, no.~12, pp.~13393--13397,
  2022.

\bibitem{he2022ris}
Y.~He, Y.~Cai, H.~Mao, and G.~Yu, ``{RIS}-assisted communication radar
  coexistence: Joint beamforming design and analysis,'' {\em IEEE Journal on
  Selected Areas in Communications}, vol.~40, no.~7, pp.~2131--2145, 2022.

\bibitem{sankar2022beamforming}
R.~P. Sankar and S.~P. Chepuri, ``Beamforming in hybrid {RIS} assisted
  integrated sensing and communication systems,'' in {\em 2022 30th European
  Signal Processing Conference (EUSIPCO)}, pp.~1082--1086, IEEE, 2022.

\bibitem{wang2021joint_discrete}
X.~Wang, Z.~Fei, J.~Huang, and H.~Yu, ``Joint waveform and discrete phase shift
  design for {RIS}-assisted integrated sensing and communication system under
  {Cram\'er--Rao} bound bound constraint,'' {\em IEEE Transactions on Vehicular
  Technology}, vol.~71, no.~1, pp.~1004--1009, 2021.

\bibitem{xing2023joint}
Z.~Xing, R.~Wang, and X.~Yuan, ``Joint active and passive beamforming design
  for reconfigurable intelligent surface enabled integrated sensing and
  communication,'' {\em IEEE Transactions on Communications}, 2023.

\bibitem{he2022energy}
Z.~He, W.~Xu, H.~Shen, Y.~Huang, and H.~Xiao, ``Energy efficient beamforming
  optimization for integrated sensing and communication,'' {\em IEEE Wireless
  Communications Letters}, vol.~11, no.~7, pp.~1374--1378, 2022.

\bibitem{xiong2023fundamental}
Y.~Xiong, F.~Liu, Y.~Cui, W.~Yuan, T.~X. Han, and G.~Caire, ``On the
  fundamental tradeoff of integrated sensing and communications under gaussian
  channels,'' {\em IEEE Transactions on Information Theory}, 2023.

\bibitem{mishra2022rate}
A.~Mishra, Y.~Mao, O.~Dizdar, and B.~Clerckx, ``Rate-splitting multiple access
  for {6G}—part i: Principles, applications and future works,'' {\em IEEE
  Communications Letters}, vol.~26, no.~10, pp.~2232--2236, 2022.

\bibitem{mao2022rate}
Y.~Mao, O.~Dizdar, B.~Clerckx, R.~Schober, P.~Popovski, and H.~V. Poor,
  ``Rate-splitting multiple access: Fundamentals, survey, and future research
  trends,'' {\em IEEE Communications Surveys \& Tutorials}, vol.~24, no.~4,
  pp.~2073--2126, 2022.

\bibitem{xu2021rate}
C.~Xu, B.~Clerckx, S.~Chen, Y.~Mao, and J.~Zhang, ``Rate-splitting multiple
  access for multi-antenna joint radar and communications,'' {\em IEEE Journal
  of Selected Topics in Signal Processing}, vol.~15, no.~6, pp.~1332--1347,
  2021.

\bibitem{yin2022rate}
L.~Yin, Y.~Mao, O.~Dizdar, and B.~Clerckx, ``Rate-splitting multiple access for
  {6G}—part ii: Interplay with integrated sensing and communications,'' {\em
  IEEE Communications Letters}, vol.~26, no.~10, pp.~2237--2241, 2022.

\bibitem{chen2023joint}
Z.~Chen, J.~Wang, Z.~Tian, M.~Wang, Y.~Jia, and T.~Q. Quek, ``Joint rate
  splitting and beamforming design for {RSMA-RIS}-assisted {ISAC} system,''
  {\em IEEE Wireless Communications Letters}, 2023.

\bibitem{niu2023active}
H.~Niu, Z.~Lin, K.~An, J.~Wang, G.~Zheng, N.~Al-Dhahir, and K.-K. Wong,
  ``Active {RIS} assisted rate-splitting multiple access network: Spectral and
  energy efficiency tradeoff,'' {\em IEEE Journal on Selected Areas in
  communications}, vol.~41, no.~5, pp.~1452--1467, 2023.

\bibitem{narottama2024quantum}
B.~Narottama and S.~A{\"\i}ssa, ``Quantum machine learning for performance
  optimization of {RIS}-assisted communications: Framework design and
  application to energy efficiency maximization of systems with rsma,'' {\em
  IEEE Transactions on Wireless Communications}, 2024.

\bibitem{bansal2021rate}
A.~Bansal, K.~Singh, B.~Clerckx, C.-P. Li, and M.-S. Alouini, ``Rate-splitting
  multiple access for intelligent reflecting surface aided multi-user
  communications,'' {\em IEEE Transactions on Vehicular Technology}, vol.~70,
  no.~9, pp.~9217--9229, 2021.

\bibitem{gao2022rate}
Y.~Gao, Q.~Wu, W.~Chen, and D.~W.~K. Ng, ``Rate-splitting multiple access for
  intelligent reflecting surface-aided secure transmission,'' {\em IEEE
  Communications Letters}, vol.~27, no.~2, pp.~482--486, 2022.

\bibitem{hua2023secure}
M.~Hua, Q.~Wu, W.~Chen, O.~A. Dobre, and A.~L. Swindlehurst, ``Secure
  intelligent reflecting surface aided integrated sensing and communication,''
  {\em IEEE Transactions on Wireless Communications}, 2023.

\bibitem{xing2023reconfigurable}
Z.~Xing, R.~Wang, and X.~Yuan, ``Reconfigurable intelligent surface aided
  physical-layer security enhancement in integrated sensing and communication
  systems,'' {\em IEEE Transactions on Vehicular Technology}, 2023.

\bibitem{wei2021channel}
L.~Wei, C.~Huang, G.~C. Alexandropoulos, C.~Yuen, Z.~Zhang, and M.~Debbah,
  ``Channel estimation for {RIS}-empowered multi-user {MISO} wireless
  communications,'' {\em IEEE Transactions on Communications}, vol.~69, no.~6,
  pp.~4144--4157, 2021.

\bibitem{wu2019intelligent}
Q.~Wu and R.~Zhang, ``Intelligent reflecting surface enhanced wireless network
  via joint active and passive beamforming,'' {\em IEEE transactions on
  wireless communications}, vol.~18, no.~11, pp.~5394--5409, 2019.

\bibitem{song2024cramer}
X.~Song, X.~Qin, J.~Xu, and R.~Zhang, ``Cram{\'e}r-rao bound minimization for
  {IRS}-enabled multiuser integrated sensing and communications,'' {\em IEEE
  Transactions on Wireless Communications}, 2024.

\bibitem{dong2020secure}
L.~Dong and H.-M. Wang, ``Secure {MIMO} transmission via intelligent reflecting
  surface,'' {\em IEEE Wireless Communications Letters}, vol.~9, no.~6,
  pp.~787--790, 2020.

\bibitem{nasir2016secrecy}
A.~A. Nasir, H.~D. Tuan, T.~Q. Duong, and H.~V. Poor, ``Secrecy rate
  beamforming for multicell networks with information and energy harvesting,''
  {\em IEEE Transactions on Signal Processing}, vol.~65, no.~3, pp.~677--689,
  2016.

\bibitem{grant2014cvx}
M.~Grant and S.~Boyd, ``{CVX}: Matlab software for disciplined convex
  programming, version 2.1,'' 2014.

\bibitem{pan2020intelligent}
C.~Pan, H.~Ren, K.~Wang, M.~Elkashlan, A.~Nallanathan, J.~Wang, and L.~Hanzo,
  ``Intelligent reflecting surface aided {MIMO} broadcasting for simultaneous
  wireless information and power transfer,'' {\em IEEE Journal on Selected
  Areas in Communications}, vol.~38, no.~8, pp.~1719--1734, 2020.

\bibitem{hao2022securing}
W.~Hao, J.~Li, G.~Sun, M.~Zeng, and O.~A. Dobre, ``Securing reconfigurable
  intelligent surface-aided cell-free networks,'' {\em IEEE Transactions on
  Information Forensics and Security}, vol.~17, pp.~3720--3733, 2022.

\bibitem{zuo2023exploiting}
J.~Zuo, Y.~Liu, C.~Zhu, Y.~Zou, D.~Zhang, and N.~Al-Dhahir, ``Exploiting {NOMA}
  and {RIS} in integrated sensing and communication,'' {\em IEEE Transactions
  on Vehicular Technology}, 2023.

\end{thebibliography}

\end{document}